\newcommand\lncs[1]{}
\newcommand\lip[1]{#1}
\let\qd\qed
\renewcommand\qed{\hfill$\qd$}
\title{Unbent Collections of Orthogonal Drawings\lncs{\thanks{The full version of this paper is available on arXiv~\cite{OurArxiv}.}}}
\author{
  Todor Anti\'c \inst{1}\orcidID{0009-0008-6521-7987}\thanks{Supported by Czech Science Foundation grant no.~23-04949X and grant no.\ SVV–2023–260699.}
  \and
  Giuseppe Liotta\inst{2}\orcidID{0000-0002-2886-9694}\thanks{Research  supported in part  by MUR of Italy, PRIN Project no.~2022TS4Y3N~-- EXPAND and PON Project ARS01\_00540.}
  \and
  Tom\'a\v{s} Masa\v{r}\'ik\inst{3}\orcidID{0000-0001-8524-4036}\thanks{Supported by the Polish National Science Centre SONATA-17 grant number 2021/43/D/ST6/03312.}
  \and
  Giacomo~Ortali\inst{2}
  \and
  Matthias~Pfretzschner\inst{4}\orcidID{0000-0002-5378-1694}\thanks{Funded by the Deutsche Forschungsgemeinschaft (DFG, German Research Foundation) -- grant 541433306.}
  \and
  Peter Stumpf\inst{1,5}\orcidID{0000-0003-0531-9769}\thanks{Supported by Czech Science Foundation grant no.~23-04949X}
  \and
  Alexander~Wolff\inst{6}\orcidID{0000-0001-5872-718X}
  \and
  Johannes~Zink\inst{7}\orcidID{0000-0002-7398-718X}
}
\institute{
    Charles University, Prague, Czech Republic\\
    \and
    Universit\`a degli Studi di Perugia, Perugia, Italy\\
    \and
    University of Warsaw, Warsaw, Poland\\
    \and
    Universit\"at Passau, Passau, Germany\\
    \and
    Czech Technical University in Prague, Czech Republic\\
    \and
    Universit\"at W\"urzburg, W\"urzburg, Germany \\
    \and
    Technische Universit\"at M\"unchen, Heilbronn, Germany
}
\author{Todor Anti\'c}{Charles University, Prague, Czech Republic}{todor@kam.mff.cuni.cz}{https://orcid.org/0009-0008-6521-7987}{Supported by Czech Science Foundation grant no.~23-04949X and grant no.\ SVV–2023–260699.}
\author{Giuseppe Liotta}{Universit\`a degli Studi di Perugia, Perugia,
  Italy}{giuseppe.liotta@unipg.it}{https://orcid.org/0000-0002-2886-9694}{Research  supported in part  by MUR of Italy, PRIN Project no.~2022TS4Y3N~-- EXPAND and PON Project ARS01\_00540.}
\author{Tom\'a\v{s} Masa\v{r}\'ik}{University of Warsaw, Warsaw,
  Poland}{masarik@mimuw.edu.pl}{https://orcid.org/0000-0001-8524-4036}{Supported by the Polish National Science Centre SONATA-17 grant number 2021/43/D/ST6/03312.}
\author{Giacomo Ortali}{Universit\`a degli Studi di Perugia, Perugia,
  Italy}{giacomo.ortali@gmail.com}{}{}
\author{Matthias Pfretzschner}{Universit\"at Passau, Passau,
  Germany}{pfretzschner@fim.uni-passau.de}{https://orcid.org/0000-0002-5378-1694}{Funded by the Deutsche Forschungsgemeinschaft (DFG) -- grant 541433306.}
\author{Peter Stumpf}{Charles University, Prague, Czech Republic\\ Czech Technical University in Prague, Czech Republic}{stumpf@kam.mff.cuni.cz}{https://orcid.org/0000-0003-0531-9769}{Supported by Czech Science Foundation grant no.~23-04949X}
\author{Alexander~Wolff}{Universit\"at W\"urzburg, W\"urzburg,
  Germany}{}{https://orcid.org/0000-0001-5872-718X}{}
\author{Johannes~Zink}{Technische Universit\"at M\"unchen, Heilbronn,
  Germany}{zink@algo.cit.tum.de}{https://orcid.org/0000-0002-7398-718X}{}
\authorrunning{Anti\'c, Liotta, Masa\v{r}\'ik, Ortali, Pfretzschner,
  Stumpf, Wolff, and Zink}
\newcommand{\un}{\operatorname{un}}
\newcommand{\tbn}{\operatorname{tbn}}
\newcommand{\opt}{\operatorname{OPT}}
\newcommand{\alg}{\operatorname{ALG}}
\newcommand{\cost}{\operatorname{cost}}
\newcommand{\ext}{\operatorname{ext}}
\newcommand{\short}{\operatorname{short}}
\newcommand{\bad}{\operatorname{bad}}
\definecolor{dark blue}{rgb}{0.121,0.47,0.705}
\let\emph\relax\DeclareTextFontCommand{\emph}{\color{dark blue}\em}
\newcommand{\sketchname}{\lip{Proof sketch}\lncs{sketch}}
\keywords{Orthogonal drawings, Bend minimization, Unbent collection}}
\begin{document}

\maketitle
\begin{abstract}
  Recently, there has been interest in representing single graphs by multiple
  drawings; for example, using graph stories, storyplans, or uncrossed collections. 
  In this paper, we apply this idea to orthogonal graph drawing.  Due to
  the orthogonal drawing style, we focus on 4-graphs, that is,
  graphs of maximum degree~4.  We restrict ourselves to plane graphs, 
  that is, planar graphs whose embedding is fixed.  Our
  goal is to represent any plane 4-graph~$G$ by an \emph{unbent
  collection}, that is, a collection of orthogonal drawings of $G$
  that adhere to the embedding of $G$ and ensure that each edge of
  $G$ is drawn without bends in at least one of the drawings.  We
  investigate two objectives.
First, we consider minimizing the number of drawings in an unbent
  collection.  We prove that every plane 4-graph can be represented
  by a collection with at most three drawings, which is tight.  We
  also give necessary and sufficient conditions for a graph to admit an unbent
  collection of size~$2$.
  Second, we consider minimizing the total number of bends over all
  drawings in an unbent collection.  We show that this problem is
  NP-hard and give a 3-approximation algorithm.  For the special 
  case of plane triconnected cubic graphs, we show how to compute 
  minimum-bend collections in linear time.
  \lncs{\keywords{Graph drawing \and Orthogonal drawings \and Bend minimization}} \end{abstract}

\section{Introduction}\label{se:intro-new}

A way to determine the quality of a drawing of a graph is
its \emph{readability}, which, informally speaking, is the extent to
which a drawing conveys the structure of the graph it represents.
More formally, readability translates into a set of goals that a graph
drawing algorithm must optimize.  For example, in the context of
orthogonal graph drawing, the typical objectives are crossing minimization and bend minimization.  These objectives
typically oppose each other: reducing the number of crossings may
lead to adding bends along the edges, and vice versa.

\paragraph{Related work.}
A recent research direction is to represent a graph by a collection or
sequence of different drawings, each of which optimizes the
readability of a particular subgraph, possibly at the expense of the
rest.  One contribution in this direction is the so-called
\emph{storyplan} approach, where a graph is represented by a sequence
of drawings, called \emph{frames}, each of which only shows a
restricted (e.g., planar) subgraph.  To see the whole (complicated)
graph means to watch all (relatively simple)
frames~\cite{DBLP:journals/jcss/BinucciGLLMNS24,DBLP:conf/sofsem/FialaFLWZ24}.
Another contribution are \emph{graph stories}, where each frame contains the same number of vertices, vertices are shown in an interval of consecutive frames, and, for each vertex, the index of the first frame when it appears is given as input. 
As in a storyplan, it is required that the drawing of an edge and the position of a vertex is the same in every frame where the edge or the vertex appears.  
The task is to compute graph stories with specific properties (e.g., crossing-free frames~\cite{DBLP:journals/jgaa/BattistaDGGOPT23}).
Another approach is to visualize the graph by means of a collection of
drawings, each of which includes all vertices and edges, but to
optimize the drawings only with respect to specific subgraphs.
Hlin\v{e}n{\'{y}} and Masa\v{r}{\'{\i}}k \cite{hm-mucd-GD23} adopted
this idea and introduced the problem of representing a non-planar
graph $G$ via multiple drawings such that $G$ is fully represented in
each drawing and each edge of $G$ is not crossed in at least one of
the drawings.  They called such a collection an \emph{uncrossed
  collection}.  The \emph{size} of a collection is the number
of drawings it contains.  Hlin\v{e}n{\'{y}} and Masa\v{r}{\'{\i}}k
studied two optimization problems.  For a given graph~$G$, the first
problem called \emph{the uncrossed number} is to find a minimum-size uncrossed collection of~$G$.  The
second problem is to find an uncrossed collection that minimizes the
total number of crossings over all drawings, for which they showed an NP-hardness and an FPT algorithm parameterized by the solution size.
Recently, the authors of~\cite{balko-GD24} determined the uncrossed number of complete and complete bipartite graphs exactly.
They also gave a lower bound for the uncrossed number of general graphs in terms of vertex and edge numbers.

\paragraph{Our contribution.}
We revisit the classical problem of computing a planar
orthogonal layout with the minimum number of bends from the
``collection perspective'' of Hlin\v{e}n{\'{y}} and
Masa\v{r}{\'{\i}}k.  Namely, we study the following problem: Given a
plane 4-graph $G$ (i.e., a planar graph with vertex degree at
most $4$ and with a fixed planar embedding), compute a set
of planar orthogonal drawings of $G$ such that every edge of $G$ is
straight (i.e., has no bend) in at least one drawing of the collection.
Analogously to the terminology of Hlin\v{e}n{\'{y}} and
Masa\v{r}{\'{\i}}k, we call such a collection an \emph{unbent collection}. 
We study two problems.
For a given graph~$G$, we want to find an unbent collection of minimum size, $\un(G)$, and an unbent collection that
minimizes the total number of bends, $\tbn(G)$.  Refer to \cref{fig:k4}.

\begin{figure}[t]
\begin{subfigure}{0.28\textwidth}
    \centering
    \includegraphics[page=2]{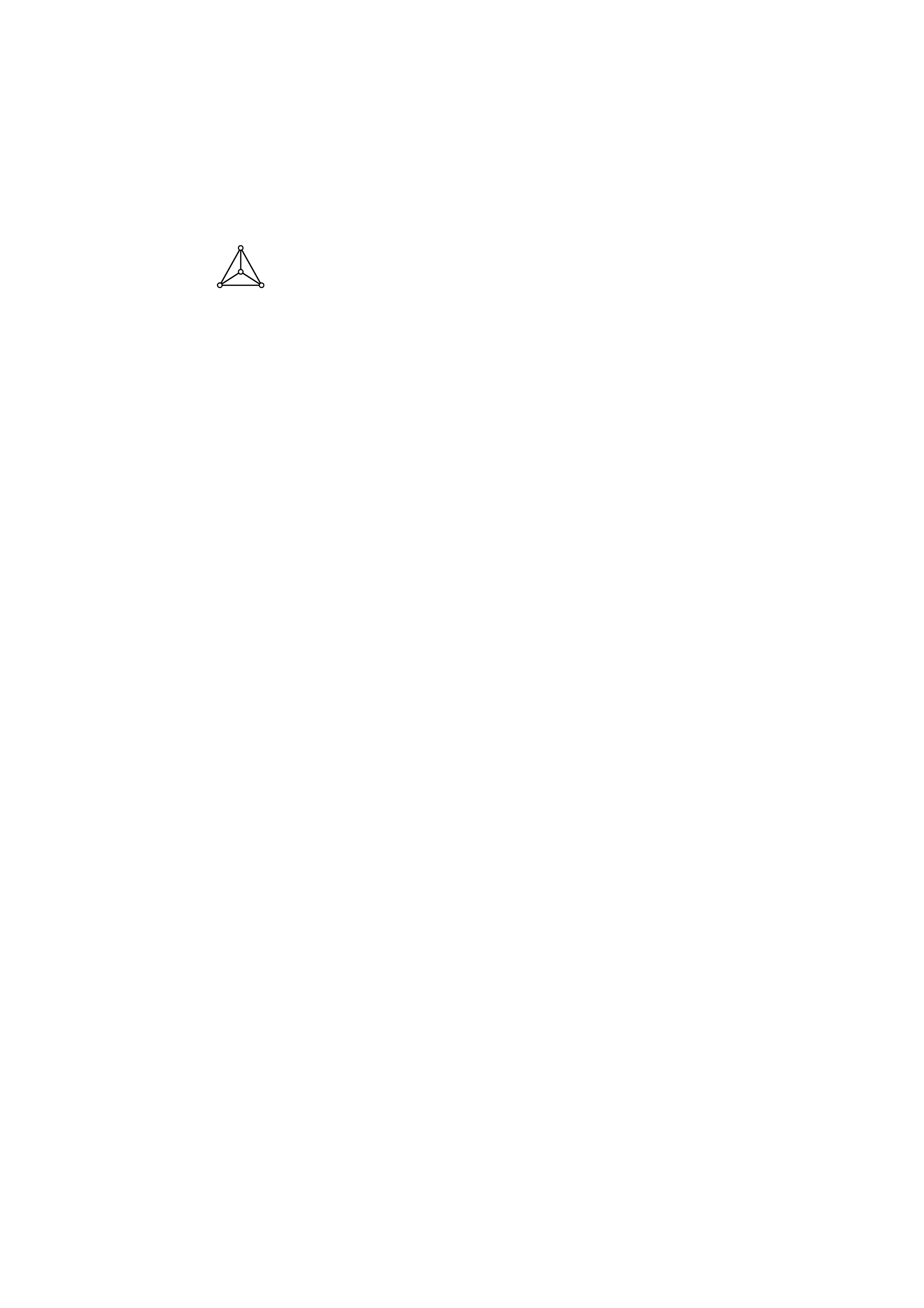}
    \subcaption{} \label{fig:k4-3}
  \end{subfigure}
  \hfill
  \begin{subfigure}{.68\textwidth}
    \centering
    \includegraphics[page=3]{k4-SMALL}\quad\quad\includegraphics[page=4]{k4-SMALl}
    \subcaption{\lip{\nolinenumbers{}}}
    \label{fig:k4-b}
  \end{subfigure}
  \caption{(a) A minimum-bend orthogonal drawing of~$K_4$.
    (b) An unbent collection
    for $K_4$ with two drawings showing that $\un(K_4)=2$
    and $\tbn(K_4) \le 12$ (in fact, $\tbn(K_4) = 12$); see \cref{def:unbent}.
    Red edges are straight, whereas blue edges are bent.
}
\label{fig:k4}
\end{figure}

Specifically, we show the following.
For every plane 4-graph, an unbent collection of size~3 
always exists (this is tight) and can be found in almost-linear time; see \cref{Thm:unbentnumberlessthan3}. 
We also give a necessary and sufficient condition for graphs to have an unbent collection of size~2.  The condition is based on a new concept of
{\em balanced} 2-edge-coloring that we introduce in \cref{se:unbent}.

Then, we show that it is NP-hard to decide, given a plane 4-graph~$G$
and a positive integer~$k$, whether $G$ admits an unbent collection
with at most $k$ bends; see \cref{se:bends}.
Note that if the embedding of $G$ is not fixed, it is even NP-hard to
decide whether $G$ admits a (single) drawing without
bends~\cite{DBLP:journals/siamcomp/GargT01}.
Next, we give a 3-approximation algorithm for minimizing the number of
bends in an unbent collection.
It relies on Tamassia's flow network~\cite{t-oeggwmnb-sjc}, combined with local
modifications and the insight that every 4-graph has star arboricity
at most~4.

For the special case of plane triconnected cubic graphs, we show how to compute minimum-bend collections in linear time; see \cref{se:sub-cubic}.
The collections that we compute have size~2, which is optimal.
Our algorithm builds upon ideas of Nishizeki and Rahman~\cite{DBLP:journals/jgaa/RahmanNN99,DBLP:journals/jgaa/RahmanNN03},
further refined by Didimo, Liotta, Ortali, and Patrignani~\cite{DBLP:conf/soda/DidimoLOP20}. 

We prove statements marked with a\lip{ clickable} $(\star)$ in the \lncs{full version of this paper~\cite{OurArxiv}}\lip{appendix}.

\paragraph{Preliminaries.}
As shorthand for the set $\{i, i+1, \dots, j-1, j\}$, we use \emph{$[i,j]$}, and for the set $[1, j]$, we use \emph{$[j]$}.
A \emph{planar embedding}, or simply, an \emph{embedding}, of a graph $G$ is the collection of (counter-clockwise) circular orderings of incident edges around every vertex induced by a planar drawing of~$G$. 
We will assume that all of our graphs are given with an embedding and a specified external face; such graphs are called \emph{plane}. 
An \emph{orthogonal drawing} of a graph is a drawing in which every edge is drawn as a sequence of vertical and horizontal line segments. 
Now, we can formally define the main concepts that we study in this paper.

\begin{definition}
  \label{def:unbent}
  Let $G$ be a plane 4-graph. A set of orthogonal drawings of~$G$ (all adhering to the embedding of~$G$) is called an \emph{unbent collection} of~$G$ if every edge of~$G$ is drawn without a bend (i.e., straight) in at least one drawing of the set. 
  The \emph{unbent number} of~$G$, denoted $\un(G)$, is the size of a smallest unbent collection of~$G$.
The \emph{unbent bend number} of $G$, denoted $\tbn(G)$, is the minimum
total number of bends, taken over all unbent collections of~$G$.
\end{definition}

Given a plane 4-graph~$G$, Tamassia~\cite{t-oeggwmnb-sjc}
defined a flow network~$N_G$ 
that has a node for each vertex and for each face of~$G$,
an arc from every face to each incident vertex,
and two parallel arcs (one in each direction) for each edge~$e$
of~$G$ connecting the faces incident to~$e$. 
A unit of flow in~$N_G$ represents a $90^\circ$ angle
at a vertex of~$G$ or a bend on an edge of~$G$.
For the sake of presentation, we invert the usual flow direction in this paper.
Then, each vertex has a demand of~$4$, each inner face~$f$ has a demand of $4-\deg(f)$, and the external face $f_{\ext}$ has a demand of $-4-\deg(f_{\ext})$.
In the case of negative demand (i.e., flow excess),
we also speak of \emph{supply} (of flow).
Faces of degree greater than~4 and the external face supply flow,
while vertices and faces of degree~3 demand flow.
Flow in $N_G$ has
non-zero cost only for arcs between faces and vertices; otherwise, the cost is one for
each unit of flow. The total cost equals the number of bends on the
edges of~$G$ in the resulting \emph{orthogonal representation}~\cite{t-oeggwmnb-sjc},
which refines the given embedding by specifying angles and bends (but not the lengths of edges).
Such a representation can then be turned into an orthogonal drawing in time linear in the number of vertices and bends~\cite{t-oeggwmnb-sjc}.

Recently, a minimum-cost flow algorithm that runs in $O(m^{1+o(1)}\log U \log C)$ 
time was presented~\cite{DBLP:conf/focs/Brand0PKLGSS23},
where $m$ is the number of edges, $U$ is the maximum edge capacity,
and $C$ is the maximum absolute edge cost in the flow network.
In~$N_G$, $C \in O(1)$, $m \in O(n)$, and $U \in O(n)$, where $n=|V(G)|$.
Since $\log n = n^{\log\log n / \log n} \in n^{o(1)}$,
the algorithm runs in $O(n^{1 + o(1)})$ time on input~$N_G$.

\section{Minimizing the Size of an Unbent Collection}
\label{se:unbent}

In this section, we give, for any plane 4-graph $G$, an upper bound on $\un(G)$ in terms of the arboricity of $G$ by slightly amending Tamassia's flow network~\cite{t-oeggwmnb-sjc}.
\begin{proposition}\label{prop:unbentSpanningTrees}
    If a plane 4-graph $G$ can be partitioned into $k$ forests, then $\un(G)\le k$.
    Given $G$ and such a set of forests, an unbent collection
    of size~$k$ can be computed in $O(kn^{1 + o(1)})$ time, where $n$ is the number of vertices of~$G$.
\end{proposition}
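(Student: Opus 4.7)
The plan is to handle the forests independently. Given the cover $F_1,\dots,F_k$, for each~$i$ I will construct a single orthogonal drawing~$D_i$ of~$G$ (adhering to the embedding) in which every edge of~$F_i$ is drawn straight; since $\bigcup_i F_i = E(G)$, each edge of~$G$ is then straight in some~$D_i$, so $\{D_1,\dots,D_k\}$ is an unbent collection of size~$k$, yielding $\un(G)\le k$.

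To produce each~$D_i$, I will reuse Tamassia's flow network~$N_G$ and, for every $e\in F_i$, set the capacity of both bend-arcs of~$e$ (the two dual arcs between the two faces incident to~$e$) to~$0$; call the modified network~$N_G^{(i)}$. Any feasible flow in~$N_G^{(i)}$ corresponds to an orthogonal representation of~$G$ in which no $F_i$-edge carries a bend, and it is turned into an actual drawing~$D_i$ in linear time by Tamassia's standard procedure. A (min-cost) feasible flow in~$N_G^{(i)}$ can be computed in $O(n^{1+o(1)})$ time by the algorithm of Brand~et~al.~\cite{DBLP:conf/focs/Brand0PKLGSS23} cited in the preliminaries, because $N_G^{(i)}$ inherits the size and cost bounds of~$N_G$; repeating for $i=1,\dots,k$ gives the claimed $O(kn^{1+o(1)})$ time bound.

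The step I expect to be the main obstacle is proving that $N_G^{(i)}$ is still feasible after the capacity reductions. My plan here is a rerouting argument in the dual. Start from any feasible flow~$\phi$ in the unmodified network~$N_G$ (guaranteed by Tamassia's construction), and remove the bend flow on each edge $e\in F_i$ one by one. The key combinatorial observation is the following tree/co-tree fact: because $F_i$ is a forest in~$G$, it can be extended to a spanning tree~$T$ of~$G$ whose co-tree $E(G)\setminus E(T)$ is dual to a spanning tree of the dual graph~$G^*$; thus the dual edges of $E(G)\setminus F_i$, which contain this co-tree, induce a connected spanning subgraph on the faces of~$G$. Consequently, for any $F_i$-edge~$e$ separating faces $f_1,f_2$, there is a face-to-face path from $f_1$ to $f_2$ crossing only non-$F_i$ edges; pushing the bend flow of~$e$ along this path preserves every vertex and face demand (intermediate faces see equal inflow and outflow) and uses only bend arcs whose capacity in~$N_G^{(i)}$ is untouched and unbounded. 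Iterating over all edges of~$F_i$ converts~$\phi$ into a feasible flow of~$N_G^{(i)}$, finishing the construction.
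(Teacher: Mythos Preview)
Your proposal is correct and follows essentially the same approach as the paper: build one modified copy of Tamassia's network per forest~$F_i$, suppress the bend arcs of $F_i$-edges, solve each copy in $O(n^{1+o(1)})$ time, and output the resulting drawings. The only difference is in the feasibility justification: the paper argues in one line that deleting the dual arcs of a forest cannot disconnect the face nodes (since a primal edge set whose duals separate the faces must contain a cycle), whereas you spell out the same fact via tree/co-tree duality and an explicit rerouting of flow. Both arguments establish the same key point, so your version is a valid, slightly more detailed variant of the paper's proof.
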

\begin{proof}
    For $i \in [k]$, let $F_i$ be the $i$-th given forest.
    We construct Tamassia's flow network for~$G$.
    We then make $k$ copies of the network. In the $i$-th copy, for each edge $e \in F_i$, we delete the two arcs that connect the two nodes representing the incident faces of $e$.
    Since $F_i$ contains no cycle, note that the nodes in the flow network representing the faces
    still form a strongly connected component, hence there always exists a solution.
    Solving these $k$ flow networks individually produces $k$ orthogonal drawings.
    Each edge~$e$ of~$G$ is drawn straight in at least one of these drawings, namely in the $i$-th drawing if $e$ lies in $F_i$.
For each ${i \in [k]}$, constructing the $i$-th flow network and
    removing the links corresponding to edges in~$F_i$
    can be done in linear time,
    and each flow network can be solved in $O(n^{1 + o(1)})$ time~\cite{DBLP:conf/focs/Brand0PKLGSS23}.
\end{proof}

\cref{prop:unbentSpanningTrees} and 
Schnyder woods \cite{DBLP:conf/soda/Schnyder90} yield an upper bound on $\un(G)$.

\begin{theorem}
    \label{Thm:unbentnumberlessthan3}
    Any plane 4-graph $G$ admits an unbent collection
    of size at most~$3$.  Such a collection can be computed
    in $O(n^{1 + o(1)})$ time, where $n = |V(G)|$.
\end{theorem}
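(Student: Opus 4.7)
The plan is to combine Proposition~\ref{prop:unbentSpanningTrees} with the classical fact that every planar graph has arboricity at most~$3$, and that such a three-forest decomposition can be obtained in linear time via a Schnyder wood~\cite{DBLP:conf/soda/Schnyder90}. The overall structure of the argument is thus: decompose $E(G)$ into three edge-disjoint forests $F_1, F_2, F_3$ covering all edges, then invoke Proposition~\ref{prop:unbentSpanningTrees} with $k = 3$. Since each of the three flow networks can be solved in $O(n^{1+o(1)})$ time, the total running time is $O(n^{1+o(1)})$.

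To obtain the decomposition, I would first augment $G$ by greedily adding dummy edges inside each face of length at least~$4$ until the resulting plane graph $G'$ is a triangulation; this can be done in linear time while respecting the given embedding. I would then apply Schnyder's algorithm to $G'$ to produce, in linear time, a Schnyder wood whose three color classes partition $E(G')$ into three edge-disjoint spanning trees of $G'$. Restricting each of these trees to $E(G)$, i.e., deleting all dummy edges, yields three edge-disjoint forests $F_1, F_2, F_3$ that together cover $E(G)$. Feeding this decomposition into Proposition~\ref{prop:unbentSpanningTrees} delivers an unbent collection of size at most~$3$.

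The main obstacle is conceptual rather than computational: Schnyder woods are originally formulated for planar triangulations, not general plane graphs, so the auxiliary triangulation step is needed to apply the result. The max-degree-$4$ assumption on $G$ does not interfere, since $G'$ is only used as a scaffold to compute the edge decomposition and is never fed to the flow network; augmenting $G$ may well raise some degrees above~$4$ in $G'$, but this is harmless because the dummy edges are discarded before we apply Proposition~\ref{prop:unbentSpanningTrees}. As a sanity check, the bound of~$3$ is the best one can hope to extract from Proposition~\ref{prop:unbentSpanningTrees}: a plane $4$-regular graph has $2n$ edges, whereas two spanning forests contain at most $2(n-1)$ edges, so three forests are in general necessary to cover $E(G)$.
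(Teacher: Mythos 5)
Your proposal is correct and follows essentially the same route as the paper: triangulate $G$, run Schnyder's linear-time algorithm on the triangulation, discard the dummy edges to obtain three edge-disjoint forests covering $E(G)$, and invoke Proposition~\ref{prop:unbentSpanningTrees} with $k=3$. Your additional remarks (that the degree increase in the triangulation is harmless and that $3$ is the best bound obtainable this way for $4$-regular graphs) are accurate and match observations made elsewhere in the paper.
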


\begin{proof}
    Extend $G$ to a plane triangulation~$G^+$, and decompose $E(G^+)$ into three disjoint trees in linear time by Schnyder's algorithm~\cite{DBLP:conf/soda/Schnyder90}.
    Ignoring the edges from $G^+ \setminus G$ leaves three edge-disjoint forests,
    to which we apply \cref{prop:unbentSpanningTrees}.
\end{proof}

\cref{Thm:unbentnumberlessthan3} tells us that for any planar graph $G$ of maximum degree $4$, we have $\un(G)\le3$.
We will later show that this bound is in fact tight, by constructing an infinite family of graphs with unbent number $3$.
Using Tamassia's flow network~\cite{t-oeggwmnb-sjc}, we can decide efficiently whether $\un(G)=1$.
Thus, the real challenge is to decide whether two drawings are sufficient or three are necessary.
We give a partial answer to this question below.

\begin{proposition}
    \label{prop:DensityNashWilliams}
Let $G$ be a plane 4-graph.
    If, for every subset $V' \subseteq V(G)$,
    the induced subgraph $G[V']$ has at most $2|V'| -2$ edges,
    then $\un(G) \leq 2$.
\end{proposition}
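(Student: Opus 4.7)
The plan is to reduce this to \cref{prop:unbentSpanningTrees} by invoking the Nash--Williams arboricity theorem, which the name of the proposition suggests is the intended route. Recall that Nash--Williams states that a graph $H$ has arboricity at most $k$ (i.e.\ its edges can be partitioned into $k$ edge-disjoint forests) if and only if for every nonempty $V' \subseteq V(H)$, the induced subgraph satisfies $|E(H[V'])| \le k(|V'|-1)$.

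Given the hypothesis of the proposition, the bound $|E(G[V'])| \le 2|V'|-2 = 2(|V'|-1)$ holds for every $V' \subseteq V(G)$. Thus, with $k=2$, the Nash--Williams condition is satisfied, so the edges of~$G$ decompose into two edge-disjoint forests $F_1, F_2$. Feeding this decomposition into \cref{prop:unbentSpanningTrees} with $k=2$ directly yields an unbent collection of size at most~$2$, proving $\un(G) \le 2$.

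There is really no serious obstacle; the only thing to be slightly careful about is the arithmetic translation of the stated density bound into the form $k(|V'|-1)$ that matches the Nash--Williams statement exactly, and to note that the trivial case $|V'| = 1$ is not excluded since $0 \le 2 \cdot 1 - 2 = 0$ still holds. One could additionally remark that the running time implied by \cref{prop:unbentSpanningTrees} is $O(n^{1+o(1)})$ provided the forest decomposition is computed beforehand, which can be done in polynomial time by standard matroid partition algorithms, although this is not needed for the statement, which is purely existential.
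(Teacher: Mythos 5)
Your proposal is correct and follows exactly the paper's own argument: translate the density hypothesis into the Nash--Williams condition with $k=2$ to obtain a decomposition into two edge-disjoint forests, then apply \cref{prop:unbentSpanningTrees}. No issues.
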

\begin{proof}
    According to Nash-Williams' theorem~\cite{n-edstfg-JLMS61},
    a graph $H$ can be partitioned into $k$ forests
    if and only if, for every $U' \subseteq V(H)$, the induced graph
    $H[U']$ has at most $k \cdot (|U'| - 1)$ edges.
    Applying Nash-Williams' theorem and \cref{prop:unbentSpanningTrees} to~$G$ and $k = 2$
    yields $\un(G) \leq 2$.
\end{proof}

Testing the sufficient criterion presented in \cref{prop:DensityNashWilliams}
naively takes exponential time.
So, we next present a way to decompose~$G$ into relevant subgraphs.

\begin{restatable}[\restateref{lm:decompose}]{lemma}{Decompose}
    \label{lm:decompose}
    Let $G$ be a plane 4-graph with $n$ vertices and at most $2n - 2$ edges.
    In $O(n)$ time,
    we can construct a set $\mathcal{G}$ of plane 4-graphs such that
    (i)~for each $G' \in \mathcal{G}$, $|E(G')| \ge 2 |V(G')| - 1$,
    (ii)~$\sum_{G' \in \mathcal{G}} |V(G')| \in O(n)$, and
    (iii)~$\un(G) \le 2$ if and only if, for each $G' \in \mathcal{G}$, $\un(G') \le 2$.
\end{restatable}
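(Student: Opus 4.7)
The plan is to identify the \emph{dense cores} of $G$, those induced subgraphs $G[V']$ with $|E(G[V'])| \ge 2|V'|-1$, as the only obstruction to applying \cref{prop:DensityNashWilliams}, and to exploit the maximum-degree-$4$ bound to show that every such dense core is separated from the rest of $G$ by a cut of size at most~$2$. Concretely, for any dense proper subset $V' \subsetneq V(G)$, summing degrees inside $V'$ yields
\[
c(V') \;\le\; 4|V'| - 2\,|E(G[V'])| \;\le\; 4|V'| - 2(2|V'|-1) \;=\; 2,
\]
where $c(V')$ denotes the number of edges with exactly one endpoint in $V'$. Thus every dense piece sits behind a bridge or a $2$-edge-cut.

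I would then compute, in linear time, the tree of $3$-edge-connected components of $G$ (the standard decomposition along all bridges and $2$-edge-cuts). At each cut of size $s \in \{1,2\}$ I replace the $s$ cut edges by a single virtual edge on each side, respecting the rotation induced from the embedding of $G$; this keeps the maximum degree at most~$4$ and makes each piece a plane 4-graph. Let $\mathcal{G}_0$ be the resulting collection and set
\[
\mathcal{G} \;:=\; \{\,G' \in \mathcal{G}_0 \,:\, |E(G')| \ge 2|V(G')|-1\,\}.
\]
Properties~(i) and~(ii) then hold by construction, the latter because the decomposition tree has linearly many nodes and contributes only $O(n)$ virtual vertices in total.

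For the equivalence in~(iii), the forward direction is straightforward: given an unbent collection of two drawings of $G$, restrict each drawing to a piece and route the (at most two) virtual boundary edges in any orthogonal manner---they are not edges of $G$, so they may bend freely. For the backward direction, assume each $G' \in \mathcal{G}$ admits an unbent collection of size~$2$. The pieces in $\mathcal{G}_0 \setminus \mathcal{G}$ are sparse in every induced subgraph, and the ``glue graph'' obtained by contracting every member of $\mathcal{G}$ to a single node is also sparse in every induced subgraph (all edge-dense portions of $G$ have been absorbed into the contracted nodes). Applying \cref{prop:unbentSpanningTrees} with $k=2$ thus gives unbent collections of size~$2$ for every non-dense piece and for the glue graph, and these are combined along the $1$- and $2$-edge-cuts into a global unbent collection of size~$2$ for~$G$.

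The main technical obstacle is the last step: consistent pasting across $2$-edge-cuts. I must argue that the two drawings of a dense piece can always be chosen so that, at each boundary virtual edge, the pair of cut edges can be extended to match the drawing on the other side of the cut. The virtual edges themselves provide flexibility (they may bend and be rotated), and each piece can additionally be independently reflected or rotated before pasting. Making this precise---presumably by formulating a local interface condition that every cyclic pattern of boundary angles at a $2$-edge-cut is realizable by some drawing in the piece's unbent collection, and verifying this by a local modification of the Tamassia flow network at the cut---is where the bulk of the technical work will lie.
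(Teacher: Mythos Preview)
Your high-level strategy---dense induced subgraphs sit behind edge-cuts of size at most~$2$, so decompose along all such cuts and keep only the dense pieces---matches the paper. The execution, however, diverges in a way that creates genuine gaps.

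The paper does \emph{not} introduce virtual edges. When cutting at a set $X$ with $|X|\le 2$, it keeps the actual edges of~$X$ in each piece, letting them terminate at new degree-$1$ vertices, and then attaches a fixed constant-size gadget~$H$ (with $2|V(H)|-1$ edges and $\un(H+e)\le 2$) at each such leaf to restore the density bound~(i). This sidesteps both of your difficulties. First, in your forward direction you assert that the virtual boundary edges ``may bend freely'' because they are not edges of~$G$---but they \emph{are} edges of~$G'$, so an unbent collection of~$G'$ must draw each of them straight in at least one of the two drawings; restricting a drawing of~$G$ fixes the positions of the two endpoints of a virtual edge, and you give no argument that a straight axis-aligned segment between them is realizable in the given face. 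The paper has no such issue: its extra edges live inside the gadget~$H$, which is already certified to have unbent number~$2$.

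Second, your acknowledged ``main technical obstacle'' (consistent pasting across $2$-edge-cuts) simply does not arise in the paper's setup. Because each cut edge ends at a degree-$1$ vertex on either side, that endpoint carries three free $90^\circ$ angles and imposes no angular constraint whatsoever; identifying the two drawings along such an edge is always possible, at the price of bending that one edge. Since $|X|\le 2$ and the collection has two drawings, one makes the first cut edge straight in~$\Gamma^1$ and the second straight in~$\Gamma^2$. No interface condition, no local flow surgery, and no ``glue graph'' are needed: once each dense piece (and, trivially via \cref{prop:DensityNashWilliams}, each non-dense piece) has an unbent collection of size~$2$, the pieces are glued directly along the original cut edges.
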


\begin{proof}[\sketchname]
    Essentially, we decompose $G$ into induced subgraphs
    such that each such subgraph~$G'$ has at least $2|V(G')| - 1$ edges.
    (The other parts of $G$ can be ignored because, due to \cref{prop:DensityNashWilliams},
    they admit an unbent collection of size~2 anyway.)
    Since $G$ has maximum degree 4 and $G'$ misses at most one edge to be 4-regular, $G'$ is connected to $G \setminus G'$ with at most two edges.
    Hence, we find all 1- and 2-edge cuts in linear time,
    which decompose the graph into smaller graphs.
In each smaller graph, we keep the edges of the cuts,
    which would end at leaves (i.e., vertices of degree~1).
    To still match $|E(G')| \ge 2 |V(G')| - 1$ for each smaller graph~$G'$,
    we attach the gadget~$H$ shown in \cref{fig:Deg1Gadget} to each such leaf.
    If $\un(G') \le 2$ holds for each $G'$, we obtain an unbent collection of size $2$ of $G$ by merging the smaller graphs at the leaves of the corresponding cuts.
    Since each cut has size at most $2$, we can keep one edge straight in each of the two drawings and add as many bends as necessary to the other edge.
\end{proof}

\begin{figure}[tb]
    \begin{subfigure}{.3\textwidth}
        \centering
        \includegraphics[page=1]{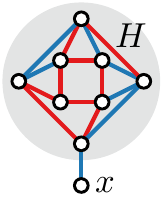}
        \subcaption{}
    \end{subfigure}
    \hfill
    \begin{subfigure}{.65\textwidth}
        \centering
        \includegraphics[page=2]{Deg1Gadget-SMALL}
        \subcaption{}
    \end{subfigure}
    \caption{(a) The gadget $H$ with $2n-1$ edges used to replace a degree-1 vertex in the proof of Lemma~\ref{lm:decompose}.
    (b)~An unbent collection of size 2 for $H$.}
    \label{fig:Deg1Gadget}
\end{figure}

\cref{lm:decompose} tells us that we should focus on graphs with at least $2n-1$ edges
(a 4-graph has at most $2n$ edges, so these are the ``densest'' plane 4-graphs)
as the calculation of the unbent number of all graphs with lower density boils down to these graphs.
It is natural to ask whether graphs with $\un(G)=3$ even exist,
which we answer positively later.
If~$G$ has at least $2n-1$ edges,
then \Cref{prop:unbentSpanningTrees} cannot be applied to obtain $\un(G)\le 2$
as every 2-edge-coloring will contain a cycle in at least one color.
So, we next characterize graphs of unbent number at most~2 via  2-edge-colorings with specific properties.

We say that every vertex of $G$ has $4 - \deg(v)$ \emph{free angles}.
An \emph{angle assignment} of $G$ assigns each free angle of every vertex to one of its incident faces.
Given an angle assignment, the \emph{demand} of a face $f$ of $G$ is the number of angles it has been assigned plus $(4 - \deg(f))$ (respectively $-4 - \deg(f)$ if $f$ is the external face). 
These demands of the faces correspond to a simplified version of
Tamassia's flow network \cite{t-oeggwmnb-sjc} where the flow of the nodes to the faces
has already been assigned and all that still needs to be done
is solving the flow network induced by the nodes representing the faces of~$G$.
A 2-edge-coloring of $G$ is \emph{balanced} if, for each of the two colors, there exists an angle assignment such that, for every cycle of that color, the demands of all faces it encloses sum up to~0.
Using Tamassia's flow network \cite{t-oeggwmnb-sjc}, we can show that the graphs that admit an unbent collection of size at most 2 are exactly the graphs that admit a balanced 2-edge-coloring.

\begin{restatable}[\restateref{the:balancedColoringCharacterization}]{proposition}{BalancedColoring}
\label{the:balancedColoringCharacterization}
    Let $G$ be a plane 4-graph.
    Then $\un(G) \leq 2$ if and only if~$G$ admits a balanced 2-edge-coloring.
\end{restatable}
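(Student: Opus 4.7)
The plan is to reduce both directions to a feasibility question about the bend-flow in Tamassia's network, once the vertex-to-face flow has been fixed by an angle assignment and the bend-flow across one color class is forced to be zero.

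For the forward direction, I would start from an unbent collection $\{D_1,D_2\}$ and color each edge by an index $i\in\{1,2\}$ in which it is straight (breaking ties arbitrarily). The orthogonal representation underlying $D_i$ induces a flow in Tamassia's network; its restriction to the vertex-to-face arcs is the angle assignment $\alpha_i$. Fix any simple cycle $\gamma$ of color $i$ and let $F_\gamma$ be the set of $G$-faces strictly inside $\gamma$. Summing the face-flow-conservation equations over $f\in F_\gamma$, the contributions of all face-to-face arcs interior to $F_\gamma$ cancel, and I get $\sum_{f\in F_\gamma}\text{demand}(f)$ equal to the net bend-flow crossing $\gamma$ from the outside to the inside. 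But every edge of $\gamma$ is straight in $D_i$, so each such arc carries zero flow and the sum vanishes, showing that the coloring is balanced.

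For the backward direction, given $\alpha_1,\alpha_2$ I would build $D_i$ by prescribing the vertex-to-face flow to be $\alpha_i$ and searching for a feasible bend-flow on the face-to-face subnetwork in which both dual arcs of every color-$i$ edge carry zero flow. Equivalently, I work in the dual graph with those arcs deleted. The remaining arcs still come in oppositely directed pairs, so each connected component of the reduced dual is strongly connected; hence a feasible circulation exists if and only if the face demands of each component sum to zero, and once it exists, a minimum-cost flow yields an orthogonal representation in which every color-$i$ edge is straight. Realizing the two representations as drawings $D_1,D_2$ then proves $\un(G)\le 2$, since every edge is straight in at least one of them.

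The hard part will be certifying that each component of the reduced dual has zero total demand. I would identify these components with the faces of the plane subgraph $H_i$ formed by the color-$i$ edges (together with all vertices of $G$). A bounded face $F^{\ast}$ of $H_i$ is bounded topologically by an outer simple cycle of $H_i$ together with zero or more disjoint inner simple cycles, one around each connected component of $H_i$ lying strictly inside $F^{\ast}$; the balanced condition gives demand zero enclosed by each of these simple cycles, and an inclusion-exclusion along the nesting produces demand zero for the $G$-faces contained in $F^{\ast}$. The unbounded face of $H_i$ is handled by the global identity $\sum_f\text{demand}(f)=0$ (a routine Euler-type calculation from the definition of the demands), together with the zeros already established on the bounded faces. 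This finishes feasibility, hence the construction of $D_i$ and the equivalence.
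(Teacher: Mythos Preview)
Your proposal is correct and follows essentially the same route as the paper: both directions go through Tamassia's flow network with the vertex-to-face flow pinned down by the angle assignment, and the crux in each direction is that zero bend-flow across a monochromatic cycle forces (respectively, follows from) zero total demand on the faces it encloses. Your backward direction is phrased a bit more explicitly than the paper's---you identify the components of the reduced face--face network with the faces of the monochromatic subgraph $H_i$ and invoke the standard ``strongly connected plus zero net demand per component implies feasible'' criterion, whereas the paper argues more informally from innermost monochromatic cycles outward.

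One point to tighten: your description of the topological boundary of a bounded $H_i$-face $F^{\ast}$ is not quite right. You write that $F^{\ast}$ is bounded by an outer simple cycle plus one inner simple cycle per connected component of $H_i$ lying strictly inside $F^{\ast}$. But an inner ``hole'' of $F^{\ast}$ can come from the \emph{same} component of $H_i$ as the outer boundary (take two cycles joined at a cut vertex, with $F^{\ast}$ the annular region between them), and bridges or pendant trees of $H_i$ can sit on $\partial F^{\ast}$ without contributing any simple cycle at all. The clean fix is either to pass first to the $2$-edge-connected core of $H_i$ (bridges do not change which $G$-faces are grouped together) and then argue by induction on the nesting depth of simple $H_i$-cycles---which is exactly what the paper's ``innermost cycle first, then combine outward'' sketch is doing---or to note directly that the indicator of the $G$-faces in $F^{\ast}$ is a $\{-1,0,1\}$-linear combination of indicators of cycle interiors, so the balanced condition still yields demand zero. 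With that adjustment your argument goes through.
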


With this characterization, we can show that graphs with unbent number~3 exist. 
We define the \emph{flower graph} with $k$ petals, $F_k$, to be the graph obtained from two cycles of length $k$, $C=(v_1,\dots,v_k)$ and $C'=(v'_1,\dots, v'_k)$ by adding edges $\{ v_i,v'_i \}$ and $\{v_i,v'_{i+1}\}$ for each $i \in [k]$. For an illustration, see \cref{fig:sunflowers}.
Note that $F_3$ is the octahedron.

\begin{figure}[t]
\begin{minipage}[b]{.57\textwidth}
  \begin{subfigure}{.32\textwidth}
    \centering
    \includegraphics[page=1]{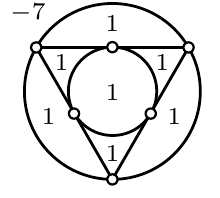}
    \subcaption{\lip{\nolinenumbers{}}}
  \end{subfigure}
  \hfill
\begin{subfigure}{.32\textwidth}
    \centering
    \includegraphics[page=4]{sunflower-SMALL}
    \subcaption{\lip{\nolinenumbers{}}}
    \label{fig:8sunflower}
  \end{subfigure}
\hfill
  \begin{subfigure}{.32\textwidth}
    \centering    
    \includegraphics[page=3]{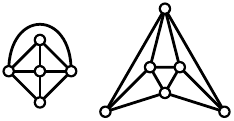}
    \subcaption{\lip{\nolinenumbers{}}} 
    \label{fig:counterexample2}
  \end{subfigure}

  \caption{(a), (b) Flower graphs $F_3$ and $F_8$ with 3 and 8 petals, respectively.
    Note that $\un(F_3) = 3$, whereas $\un(F_8)=2$.
    The 2-edge-coloring shown for $F_8$ is balanced;
    see the demands (the numbers inside the faces).
    (d)~Another 4-regular graph~$G$ with $\un(G)=3$.
}
    \label{fig:sunflowers}
\end{minipage}
\hfill
\begin{minipage}[b]{.39\textwidth}
  \begin{subfigure}[b]{.49\textwidth}
    \centering
    \includegraphics[page=1]{counterexamples-SMALL}
    \subcaption{}
    \label{fig:minimalCounterexamples}
  \end{subfigure}
  \hfill
  \begin{subfigure}[b]{.49\textwidth}
    \centering
    \includegraphics[page=2]{counterexamples-SMALL}
    \subcaption{}
    \label{fig:infiniteCounterexamples}
  \end{subfigure}
  \caption{(a) Two minimal graphs that require three drawings. (b)~By extending the graphs from~(a), one can construct infinite families of graphs that require three drawings.}
\end{minipage}
\end{figure}

The following \cref{pro:counterexamples} shows that $K_5$ minus an edge and the octahedron minus an edge have unbent number~3 if the large face is the external face; see~\cref{fig:minimalCounterexamples}.
Thus, plane 4-graphs that contain one of those two as an induced (plane) subgraph also have unbent number 3;
see~\cref{fig:infiniteCounterexamples}.
It also shows ${\un(F_k)=3}$ for $k \in [3, 7]$,
and that the 4-regular graph in~\cref{fig:counterexample2}
has unbent number~3.
We remark that $\un(F_k)=2$ for $k \ge 8$; see~\cref{fig:8sunflower}.

\begin{restatable}[\restateref{pro:counterexamples}]{proposition}{Counterexamples}
  \label{pro:counterexamples}
  Let $G$ be a plane 4-graph with $n$ vertices and $m$ edges such that
(i)~$m=2n-1$ and every inner face is a triangle or
(ii)~$m=2n$, all but the external face and exactly one inner face $f$ are triangles and 
$f$ shares an edge with the external face or $f$ has at most seven edges.   
Then, $\un(G)=3$.   
\end{restatable}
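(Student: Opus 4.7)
The plan is to invoke \Cref{the:balancedColoringCharacterization} and in each case suppose, for contradiction, that $G$ admits a balanced $2$-edge-coloring with color classes $C_1, C_2$; combined with \Cref{Thm:unbentnumberlessthan3} this reduces matters to showing no such coloring exists, and in every sub-case the final blow is a counting argument comparing $|E(C_1)|+|E(C_2)|$ to $m$. For case~(i), there are only $4n - 2m = 2$ free angles at vertices and every inner face is a triangle of base demand $1$, so for any color and any simple cycle the enclosed set $S$ of inner faces has demand at least $|S| \geq 1 > 0$ irrespective of how the two free angles are distributed. Hence neither color class contains a cycle, both are forests, and $|E(C_1)| + |E(C_2)| \leq 2(n-1) < 2n - 1 = m$, a contradiction.

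In case~(ii) the graph is $4$-regular, so there are no free angles and demands are fixed. The demand equation forces every monochromatic simple cycle to be a \emph{good cycle} $\partial(f \cup T_A)$ where $T_A$ is a set of $|T_A| = d_f - 4$ triangles whose union with $f$ is simply connected; if $d_f = 3$ there are no good cycles, both color classes are forests, and $|E(C_1)|+|E(C_2)|\le 2(n-1) < 2n = m$. For $d_f \ge 4$ I would split according to the hypothesis. If $f$ shares an edge with the outer face, planar duality forces every good cycle to contain all of $E_{fo}$, so at most one color contains $E_{fo}$, the other is a forest, and within the cycle-carrying color the symmetric difference of any two distinct simple cycles would lie inside that color but contain no $E_{fo}$-edge; its simple-cycle decomposition would then consist of cycles that cannot enclose $f$, violating the balanced condition. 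Hence that color has at most one simple cycle, giving $|E(C_1)|+|E(C_2)| \leq n + (n-1) < m$. If instead $|E_{fo}| = 0$ and $4 \le d_f \le 7$, then $4$-regularity of $G$ rules out any triangle sharing two edges with $f$, so exactly $d_f$ distinct triangles are adjacent to $f$. Were both colors to carry good cycles $\gamma_A, \gamma_B$, edge-disjointness on $\partial f$ would require $T_A \cup T_B$ to cover all $d_f$ of these triangles, giving $d_f \le |T_A| + |T_B| = 2(d_f - 4)$ and hence $d_f \ge 8$---contradicting $d_f \le 7$. So only one color carries good cycles, and inside that color the cycle-space identity $\gamma_A + \gamma_B = \sum_{t \in T_A \triangle T_B} \partial t$ shows that two distinct good cycles would yield a non-empty even subgraph whose simple-cycle components enclose only triangles, again violating the balanced condition. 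The same edge-count contradiction closes the case.

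The principal technical obstacle is verifying the two symmetric-difference claims used in case~(ii): that the even subgraph lying inside a single color always decomposes into simple cycles containing at least one that fails the demand condition, and that edge-disjointness on $\partial f$ can be tracked tightly enough to force the dichotomy at $d_f = 8$. The $4$-regularity of $G$ is crucial here---without it, a triangle could share two edges with $f$ and the bound $|T_A \cup T_B| \leq 2(d_f - 4)$ would weaken, spoiling the $d_f \geq 8$ dichotomy on which the $|E_{fo}|=0$ sub-case hinges.
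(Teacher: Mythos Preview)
Your proposal is correct and uses essentially the same ingredients as the paper's proof: the balanced $2$-edge-coloring characterization (\cref{the:balancedColoringCharacterization}), the observation that in case~(ii) every balanced cycle must enclose $f$ together with exactly $d_f-4$ triangles, the fact (via $4$-regularity) that no triangle shares two edges with~$f$, and the resulting count $d_f \le 2(d_f-4)$ forcing $d_f \ge 8$.

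The only difference is organizational. The paper handles case~(ii) by \emph{first} proving that each color class contains at most one cycle (via the same symmetric-difference-of-interiors idea you use), and then, since $m=2n$, concludes that each color is spanning with exactly one cycle. With that in hand, the two sub-cases become one-liners: if $f$ shares an edge~$e$ with the outer face, the unique cycle of the color not containing~$e$ cannot enclose~$f$; otherwise the two monochromatic cycles together must absorb all $d_f$ adjacent triangles while only having room for $2(d_f-4)$. You instead treat each sub-case separately, first showing one color is a forest and then bounding the other color to at most one cycle, closing with the edge-count $n+(n-1)<2n$. Both routes are valid; the paper's ordering is slightly more economical because it avoids repeating the ``at most one cycle'' argument in each sub-case. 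Your acknowledged ``technical obstacle'' about decomposing the symmetric difference is genuine but routine: any even subgraph decomposes into edge-disjoint simple cycles, each of which lies in the same color and must therefore be balanced, yet encloses only triangles.
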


\begin{proof}[\sketchname]
Suppose $\un(G)\le 2$.  Then, by~\cref{the:balancedColoringCharacterization}, $G$ admits a balanced 2-edge-coloring.
  Let us first assume condition~(i). The edges of one color must induce a cycle since $m=2n-1$.
  This cycle, however, encloses only triangles and thus cannot be balanced. Hence, $\un(G)=3$.

  Now assume condition~(ii). First suppose that one color induces two cycles $c_1$ and $c_2$
  and let~$F_1$ and~$F_2$ be the sets of faces
  enclosed by $c_1$ and~$c_2$, respectively.
  Because both cycles are balanced, both $F_1$ and $F_2$ contain~$f$.
  The faces in the symmetric difference $F_1 \triangle F_2$ are only triangles
  that are enclosed by at least one monochromatic cycle, which cannot be balanced.
  Hence, the edges of each color induce a spanning graph with exactly one (balanced) cycle containing $f$.
  This is not possible if $f$ shares an edge with the external face. Otherwise, let $k$ be the number of edges of $f$.
  Then each edge~$e$ of $f$ is incident to another triangle that is contained in the cycle of the color not assigned to $e$.
  This results in at least $k$ triangles inside of both monochromatic cycles,
  while, for $k \le 7$, there should be only $2(\deg(f)-4)=2k-8<k$ triangular faces to account for
  the demand of~$f$.  
\end{proof}

\section{The Unbent Bend Number of Plane 4-Graphs}
\label{se:bends}

In this section, we first show that determining the unbent bend number is NP-complete.
Afterwards, we present a 3-approximation algorithm for computing the unbent bend number of any graph that admits a crossing-free orthogonal 
drawing, that is, of any plane 4-graph. 

As a new NP-complete version of the \textsc{SAT} problem,
we introduce \textsc{LevelPlanarMonotone2in4Sat}, which is defined as follows.
As in \textsc{2in4Sat}, the input is a Boolean formula in conjective normal form where every clause consists of four literals, 
and the task is to decide whether an assignment of variables to Boolean values
exists such that, in every clause, exactly two literals are true.
Additionally, the variable--clause incidence graph is planar and admits a drawing where the variable vertices are
placed on a set of parallel lines, called \emph{levels}, and the clause vertices are
placed between the levels such that no edge crosses a level.
The levels divide the plane into rectangular strips.
The clauses that are represented in the first, third, etc.\ strip contain only negated literals, while
the clauses represented in the second, fourth, etc.\ strip contain only unnegated literals.

\begin{restatable}[\restateref{lm:level2in4satNPh}]{lemma}{LevelTwoInFourSAT}
  \label{lm:level2in4satNPh}
  \textsc{LevelPlanarMonotone2in4Sat} is NP-complete.
\end{restatable}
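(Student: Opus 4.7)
The plan is to prove NP-completeness in two standard steps. Membership in NP is immediate: given a truth assignment, one verifies in linear time that every clause has exactly two true literals (the level-planar embedding is part of the input). For NP-hardness, I would reduce from \textsc{Planar Monotone 1-in-3 SAT}, an NP-complete problem whose standard rectilinear representation has all variables on a single horizontal line with positive clauses above and negative clauses below.

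The core reduction replaces each monotone 1-in-3 clause $C = (\ell_1, \ell_2, \ell_3)$ with the monotone 2-in-4 clause $(\ell_1, \ell_2, \ell_3, a_C)$, where $a_C$ is a fresh auxiliary literal of the same polarity as~$C$. Such a clause forces $\ell_1 + \ell_2 + \ell_3 \in \{1, 2\}$, so the exact 1-in-3 semantics (sum $=1$) is recovered only if $a_C$ is pinned: $a_C = 1$ for positive clauses and $a_C = 0$ for negative clauses. To achieve this, I would add linkage gadgets that synchronize auxiliary values. A suitable inequality gadget $a_i \neq a_j$ consists of three monotone 2-in-4 clauses $(a_i, a_j, b_1, b_2)$, $(a_i, a_j, c_1, c_2)$, and $(b_1, b_2, c_1, c_2)$ over fresh variables; together they force $a_i + a_j = 1$. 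An equality gadget then follows by chaining two inequalities through a common intermediate variable. Using these primitives, one forces all positive-clause auxiliaries to share one value $t_+$, all negative-clause auxiliaries to share one value $t_-$, and $t_+ \neq t_-$. By the complement symmetry of monotone 2-in-4 SAT (flipping all variables preserves each clause constraint), the case $(t_+, t_-) = (1, 0)$ directly yields a 1-in-3 solution of the original instance, while $(t_+, t_-) = (0, 1)$ yields one after global complementation. Hence the constructed 2-in-4 formula is satisfiable iff the 1-in-3 formula is, proving correctness.

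Geometrically, I would place the original variables on a single middle level, the original positive and negative clauses in the two adjacent strips, and the auxiliary variables and linkage gadgets on further parallel levels above and below. A single orientation flip yields the convention that the first strip contains only negated literals, as required by the problem statement.

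The main obstacle will be routing the linkage gadgets through the level-planar layout without introducing crossings and while respecting the alternating strip-polarity rule: each inequality gadget involves four additional auxiliaries and potentially crosses multiple strips, so its placement must ensure that any clause that would otherwise land in a wrong-polarity strip is routed via additional auxiliary levels so that its polarity matches its strip. An equally delicate point is verifying that no synchronization gadget introduces spurious satisfying assignments of the 2-in-4 formula that do not correspond (even after global complementation) to a 1-in-3 solution of the original instance; this will require a clause-by-clause analysis of the linkage structure.
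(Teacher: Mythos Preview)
Your approach is genuinely different from the paper's. The paper reduces from \textsc{PlanarPositive2in4Sat} (already a 2-in-4 problem), so no clause-arity conversion is needed; it simply takes a planar straight-line drawing of the incidence graph, declares the horizontal lines through variable vertices to be levels, and iteratively shortens every edge that crosses a level using the local gadget $(x,x,y,y)\wedge(y,y,z,z)$, which forces $x=z$. Monotonicity is then obtained for free by flipping all four literals in each clause lying in an odd strip, using the self-complementarity of the 2-in-4 predicate. There is no global synchronisation; every modification is local to one crossing.

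Your plan, by contrast, converts from 1-in-3 to 2-in-4 and then needs a global linkage: all positive auxiliaries must be equal, all negative auxiliaries must be equal, and the two values must differ. The logical part of this is fine (your three-clause inequality gadget does force $a_i+a_j=1$), but the step you yourself flag as the ``main obstacle'' is, I believe, a genuine gap rather than just routine routing. Concretely, your inequality gadget has both $C_1=(a_i,a_j,b_1,b_2)$ and $C_2=(a_i,a_j,c_1,c_2)$ incident to the same pair $a_i,a_j$. If $a_i,a_j$ lie on one level and $b_1,b_2,c_1,c_2$ on the next, then in the strip between them one of $C_1,C_2$ is enclosed by the arcs from $a_i,a_j$ to the other, and its edges up to the $b$'s (or $c$'s) must cross that enclosure. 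The gadget's incidence graph, after suppressing degree-2 variables, is a doubled triangle on $\{C_1,C_2,C_3\}$; this is 3-connected, so its planar embedding is essentially unique, and not all six variable vertices can lie on the outer face. Hence there is no level assignment that makes the gadget level-planar in the way your layout requires.

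This is fixable: replace your three-clause inequality gadget by the one-clause gadget $(a_i,a_i,a_j,a_j)$, which also forces $a_i\neq a_j$ and whose incidence graph is just a path. (That is precisely the building block the paper uses.) With this change, chaining auxiliaries along the strip becomes plausible, though you still owe a careful description of how the chain is routed past nested clauses in the rectilinear representation without crossing their edges to the variable line; that too is essentially the edge-splitting argument. In short, your plan can likely be completed, but only after importing the paper's key primitive, at which point reducing from \textsc{PlanarPositive2in4Sat} directly is the shorter route.
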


This is the base problem for our NP-hardness reduction sketched next.

\begin{restatable}[\restateref{th:nph}]{theorem}{NPh}
    \label{th:nph}
    Given a plane 4-graph $G$ and an integer~$k$,
    it is NP-complete to decide whether $\tbn(G) \le k$.
\end{restatable}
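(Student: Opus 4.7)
The proof has two parts. Membership in NP is routine: a witness is an unbent collection of orthogonal representations of~$G$ whose total bend count is at most~$k$. Any minimum-bend unbent collection can be pruned to at most $|E(G)|$ drawings (we need only one representation per edge to witness its straightness), and each representation is polynomial in size, so the certificate is polynomial. Verifying that the collection respects the fixed embedding, that every edge is straight in at least one representation, and that the total bend count is at most~$k$ takes polynomial time.

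For NP-hardness, I would reduce from \textsc{LevelPlanarMonotone2in4Sat}, which is NP-complete by \cref{lm:level2in4satNPh}. Given an instance~$\varphi$ with its level-planar monotone drawing, construct a plane 4-graph~$G_\varphi$ by replacing each variable by a variable gadget running along its level, each clause by a constant-size clause gadget placed between two adjacent levels, and each literal occurrence by an edge joining the corresponding variable gadget to the clause gadget; see \cref{fig:nph-construction}. By the monotone property, within one strip between consecutive levels all literal occurrences have the same polarity, so each variable gadget has only positive-literal connections on one side and only negative-literal connections on the other. Set~$k$ to equal the total bend count of the canonical two-drawing collection obtained when all gadgets are in their ``intended'' state.

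The variable gadget is a small plane 4-subgraph with two consistent orthogonal ``states''. The intended collection uses two drawings, one for each state: in the \emph{true} drawing the edges going up (to positive-literal clauses) leave the gadget straight while those going down (to negative-literal clauses) leave with a controlled bend; in the \emph{false} drawing the roles are reversed. The gadget is designed, via Tamassia's angle-flow constraints, so that any assignment in which some positive-literal connections are ``true'' and others are ``false'' within the same drawing incurs at least one extra bend. The clause gadget is a constant-size attachment to its four connecting edges; we design it so that, summed over the two drawings, its combined bend cost is minimised exactly for the $\binom{4}{2}=6$ configurations in which exactly two connecting edges are ``true-oriented'' and two are ``false-oriented'', while each of the remaining ten configurations forces at least one extra bend. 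Both gadgets are of constant size, so $G_\varphi$ has polynomial size and can be built in polynomial time.

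Correctness follows by two symmetric checks: a satisfying assignment for~$\varphi$ yields the intended two-drawing collection of bend count exactly~$k$, whereas any unbent collection with at most~$k$ bends must place each variable gadget into a consistent true/false state across its drawings and each clause gadget into a 2-in-4 configuration, yielding an assignment that satisfies~$\varphi$. The main obstacle is the design and rigorous analysis of the clause gadget: we must verify that exactly the six 2-in-4 configurations attain the minimum bend cost while all ten non-satisfying ones lose at least one bend, within the tight geometric and degree-$4$ constraints imposed by the plane embedding. A secondary subtlety is ruling out ``cheating'' collections of size greater than two that might achieve a strictly smaller total bend count; this is handled by choosing the gadgets dense enough that any additional drawing must contribute at least as many bends as it saves elsewhere.
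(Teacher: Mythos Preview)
Your NP-membership argument is fine, and the high-level plan for hardness---reduce from \textsc{LevelPlanarMonotone2in4Sat} with variable gadgets carrying two states and clause gadgets that prefer $2$-in-$4$---matches the paper. But what you have is an outline, not a proof: the gadgets are specified only by their desiderata, and you yourself flag the clause-gadget analysis and the ``size $>2$'' issue as unresolved.

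The paper's construction is quite different from what you sketch, and the differences matter. It does \emph{not} rely on local bend penalties inside constant-size gadgets. Instead it uses a global flow-distance argument in Tamassia's network: each clause face has demand~$2$, one extra face per level has demand~$1$, the external face has demand $-C$ with $C=2m+\ell$, and every other face has demand~$0$. All gadgets are fenced off by \emph{walls} of thickness~$2n$, so any unit of flow that strays off the intended routes pays at least~$2n$ bends. Along the intended routes, a unit of flow from the outer face to a demand face must cross exactly~$n$ single edges (exit doors of variable gadgets plus padding edges in a tunnel gadget), hence every drawing costs at least~$nC$ bends. Setting $k=2nC$ forces the collection to have size exactly~$2$ and every flow unit to take a shortest path. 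The variable encoding then falls out: each variable gadget has two single-edge exit doors; in each drawing at least one must be bent (a level gadget guarantees flow passes through), and the unbent constraint forces the two drawings to bend different doors. The clause gadget is then trivial---a demand-$2$ face with four $1$-valve doors---so no $6$-versus-$10$ case analysis is needed.

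Two concrete places your plan would get stuck. First, constant-size gadgets are unlikely to work: with a bounded local penalty per violated gadget, nothing prevents a collection of three or more drawings from amortising violations, and ``dense enough'' is not a mechanism. The paper handles this by making the walls scale with~$n$, so the per-drawing lower bound~$nC$ is large and the budget $2nC$ pins the collection size to~$2$. Second, a variable gadget that \emph{internally} penalises mixed port orientations in a single drawing is hard to realise in Tamassia's model, whose constraints are linear per face; the paper sidesteps this entirely by funnelling all outgoing flow through a single exit door per drawing and letting the unbent constraint across the two drawings enforce consistency.
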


\begin{proof}[\sketchname]
    We sketch only the NP-hardness.
    Given a \textsc{LevelPlanarMonotone2in4Sat} instance $\Phi$
    with $n$ variables, $m$ clauses, and $\ell$ levels,
    we construct a 4-plane graph~$G$ that follows the level structure
    of the variable--clause incidence graph
    (see \cref{fig:nph-construction} for one level)
    and has an unbent collection of size~2 with at most $k = 2nC$ bends,
    where $C = 2m + \ell$,
    if and only if $\Phi$ is satisfiable.
    
    \begin{figure}[t]
        \centering
        \includegraphics[page=1]{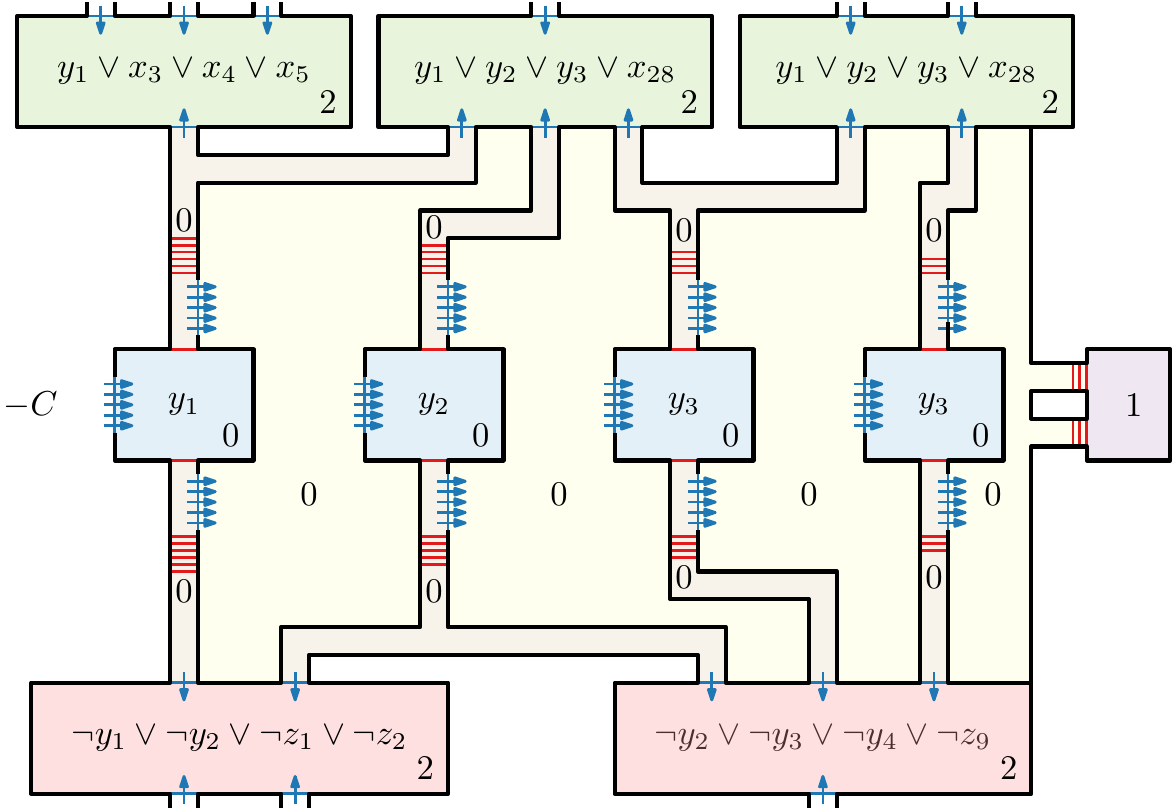}
        \caption{NP-hardness: A layer of variable gadgets and the neighboring clause gadgets.
}
        \label{fig:nph-construction}
    \end{figure}

    In $G$, there is a face for each variable vertex (\emph{variable gadget}; blue in the middle row of \cref{fig:nph-construction}),
    for each clause vertex (\emph{clause gadget};
    green and red in the top and bottom row of \cref{fig:nph-construction}),
    and for each set of edges connecting a variable vertex
    to its clause vertices above and, separately, below
    (\emph{tunnel gadget}; orange and thin in \cref{fig:nph-construction}).
    Each such gadget is surrounded by a set of faces called \emph{walls},
    which make it costly for flow to enter the gadget
    unless it is through some \emph{door}.
    (All walls have ``thickness'' $2n$, that is,
    each unit of flow traversing a wall bends at least $2n$ edges.)
    We construct each face with a specific demand.
    The clause gadgets have a demand of~2,
    the \emph{level gadgets}, which are special faces existing once per level
    (light purple on the very right in \cref{fig:nph-construction}), have a demand of~1,
    the external face has a supply of~$C$
    (shown as a demand of~$-C$ in \cref{fig:nph-construction}),
    and all other faces have a demand (and supply) of~0.
    The demands sum up to~$C$.

    The blue arrows in \cref{fig:nph-construction} indicate \emph{valve gadgets}.
    A valve gadget allows a specific amount of flow
    to be passed from one face (source)
    to an incident face (target) without creating bends;
    passing in the reverse direction is only possible with extra bends.
    Without traversing walls, flow can enter a variable gadget at a door
    being a valve gadget (on the left side of each variable gadget in \cref{fig:nph-construction}),
    and flow can leave a variable gadget at two specific edges (\emph{exit doors};
    red on the top and the bottom side of each variable gadget in \cref{fig:nph-construction}).
    Bending the one exit door represents assigning the variable to true,
    and bending the other exit door represents assigning the variable to false.
    Each exit door can be bent in only one drawing of the desired unbent collection (of size~2).
    
    Any path that a unit of flow can take from the external face
    to a face with positive demand,
    i.e., a clause gadget or a level gadget,
    causes at least~$n$ bends.
    If such a path has exactly $n$ bends,
    it traverses variable gadgets and tunnel gadgets, but no walls.
    Since every variable gadget has one door to enter and
    two exit doors to leave the gadget (the latter being single edges),
    all such paths in the first drawing take the same exit door,
    while all such paths in the second drawing take the other exit door.
    Observe that the level gadget assures that in both drawings,
    at least one unit of flow must take such a path
    starting in the external face and traversing all variable gadgets.
    The choices of exit doors at the variable gadgets
    correspond to an assignment of Boolean values to variables.
    Note that, in \cref{fig:nph-construction},
    there are additional red edges inside the tunnel gadgets.
    Their purpose is to make any path that a unit of flow
    can take from the external face to a clause gadget
    cost $n$ (and not less).
    Moreover, they assure that the clause gadgets
    can only be reached via such a tunnel in one of the drawings.
    
    Each clause gadget has four doors, each being a valve
    gadget allowing one unit of flow to enter.
    These doors are connected to tunnel gadgets,
    which, in turn, are connected to the variable gadgets of
    the four variables that occur in the clause.
Two units of flow in a clause gadget need to come from
    two of these variables gadgets in the one drawing,
    while, in the other drawing, two units of flow come from the other
    two variables gadgets.
Hence, an unbent collection exists if and only if the exit doors taken by the flow paths correspond to a satisfying truth assignment of~$\Phi$.\end{proof}

Next, we describe a 3-approximation algorithm for the unbent bend
number.  We use the property that the star arboricity of a graph is
bounded by its maximum degree.  Algor and Alon~\cite{aa-sag-DM89}
showed that every $\Delta$-regular graph has star arboricity at most
$\Delta/2+O(\Delta^{2/3}(\log \Delta)^{1/3})$.  We are interested only
in the case $\Delta=4$, and give the following simple explicit
construction (stated for general~$\Delta$, although it is
asymptotically worse than the bound of Algor and Alon).

\begin{restatable}[\restateref{prop:arboricity}]{proposition}{Arboricity}
    \label{prop:arboricity}
    Every graph $G$ of maximum degree~$\Delta$ has star arboricity at most~$\Delta$.
    The corresponding partition of $E(G)$ into star forests can be computed in $O(\Delta \cdot |V(G)|)$ time.
\end{restatable}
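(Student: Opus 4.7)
I would prove the upper bound by induction on the maximum degree~$\Delta$. The base cases $\Delta\le 1$ are immediate, since $E(G)$ is then a (possibly empty) matching, which is already a star forest.

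For the inductive step, the goal is to find a single star forest $F\subseteq E(G)$ whose removal strictly reduces the maximum degree, i.e., $\Delta(G-F)\le\Delta-1$.  Applying the induction hypothesis to $G-F$ then yields a decomposition of $G-F$ into at most $\Delta-1$ star forests, which together with $F$ gives the desired bound of $\Delta$. Letting $V_\Delta:=\{v\in V(G):\deg_G(v)=\Delta\}$, it suffices to ensure that $F$ is incident to every vertex in $V_\Delta$, i.e., that $F$ \emph{saturates} $V_\Delta$.

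To construct such an $F$, I would process the vertices of $V_\Delta$ greedily in an arbitrary order. When visiting a still-unsaturated $v\in V_\Delta$, I pick an incident edge $vu$ and add it to $F$, maintaining the invariant that $F$ is a star forest throughout. The edge $vu$ is \emph{admissible} precisely when $u$ is (i)~isolated in $F$, (ii)~the centre of a star already in $F$, or (iii)~a leaf of a trivial $K_{1,1}$-star (which can then be re-centred at~$u$); the forbidden case is attaching a new pendant to a leaf of a star of size at least~$2$, as this would create a path of length~$3$ and destroy the star-forest property.

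The main technical obstacle is showing that an admissible edge at $v$ always exists. A counting argument bounds the number of ``bad'' neighbours of $v$, namely leaves of non-trivial stars already in $F$.  The delicate extremal case is a dense configuration such as $G\cong K_{\Delta+1}$, where $V_\Delta=V$ and no matching can saturate $V_\Delta$; here the construction instead peels off an entire star $K_{1,\Delta}$ centred at a single vertex, which reduces the graph to $K_\Delta$ of maximum degree $\Delta-1$ in one shot. A uniform treatment via a Tutte--Berge-style obstruction handles all remaining problematic cases.

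For the running time, locating $V_\Delta$ and maintaining the star-forest structure are linear-time tasks, and across all $\Delta$ levels of the induction each edge of $G$ is inspected only a constant number of times, since it is ultimately placed into exactly one of the $\Delta$ output star forests. An amortised analysis then yields the claimed bound of $O(\Delta\cdot|V(G)|)$.
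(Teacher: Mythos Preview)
Your inductive strategy is sound in principle---peeling off a star forest that covers all maximum-degree vertices does reduce~$\Delta$---but the greedy construction you describe can get stuck. Take $\Delta=3$ and let $G$ have vertices $c,v,u_1,u_2,u_3,w_1,w_2,w_3$ with $c$ and $v$ each adjacent to all~$u_i$, and each $u_i$ adjacent to~$w_i$. Then $V_\Delta=\{c,v,u_1,u_2,u_3\}$. Processing in the order $c,u_1,u_2,u_3,v$ and making the admissible choices $cu_1$, (skip~$u_1$), $u_2c$, $u_3c$ builds the star $K_{1,3}$ centred at~$c$; when $v$ is reached, all three of its neighbours are leaves of that non-trivial star, so no admissible edge exists. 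The ``counting argument'' is never supplied, and the Tutte--Berge reference is a red herring: that formula controls maximum matchings, not star forests, and nothing in the setup bounds the number of bad neighbours below~$\Delta$.

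The inductive approach is salvageable---any graph with no isolated vertex has a spanning star forest, essentially because $V$ can be partitioned into two dominating sets (Ore)---but making this constructive and tracking the running time across $\Delta$ levels is real work that your plan does not do. By contrast, the paper's proof is a single pass: fix an arbitrary vertex order, and when processing~$v$ colour each not-yet-coloured incident edge with a colour not yet used at~$v$. If two edges of the same colour meet at a vertex~$y$, both must lead to vertices processed before~$y$, which forces~$y$ to be the centre of a monochromatic star. This is a two-line argument that immediately gives both the bound and the $O(\Delta|V|)$ running time, with no induction and no amortisation.
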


For every edge $e$ of $G$, let $f_e$ and $g_e$ be the faces that are
adjacent to~$e$ (possibly $f_e=g_e$).  In Tamassia's flow network
$N_G$, there are arcs $(f_e,g_e)$ and $(g_e,f_e)$ of unit cost between
these faces.  The amount of flow on these arcs corresponds to the
number of bends that $e$ makes into the appropriate directions.
Without loss of generality, we can assume that at most one of the two
arcs carries flow; let this arc be~\emph{$e^+$}.  If none carries flow, the choice of~$e^+$ is arbitrary.

\begin{lemma}
  \label{lem:change-flow}
Let $G$ be a plane 4-graph, and let $\phi$ be a feasible flow
  in~$N_G$.  If~$v$ is a vertex of~$G$ that is incident to $d \ge 2$
  edges $e_1,\dots,e_d$ in this order in the given embedding, then
  there exists a feasible flow~$\phi'$ in~$N_G$ such that
  $\phi'(e_1^+)=0$, for $i \in \{2,\dots,d\}$, it holds that
  $\phi'(e_i^+) \le \phi(e_i^+)+\phi(e_1^+)$, and, for every
  arc~$a \not\in \{e_1^+,\dots,e_d^+\}$ of~$N_G$, it holds that
  $\phi'(a)=\phi(a)$.
\end{lemma}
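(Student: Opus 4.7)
The plan is to obtain $\phi'$ from $\phi$ by a purely local modification around $v$: I reroute the $x := \phi(e_1^+)$ units of flow carried by $e_1^+$ along the dual cycle at $v$ using only the remaining arcs $e_2^+, \dots, e_d^+$, so that the new flow on $e_1$ vanishes and the other arcs incident to $v$ absorb the change. Geometrically, this corresponds to straightening $e_1$ in the corresponding orthogonal drawing by redistributing its bends to the other edges incident to~$v$.

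If $x = 0$, I take $\phi' := \phi$ and we are done. Otherwise, observe that the $d$ faces incident to $v$ in cyclic order, together with the arcs of $N_G$ corresponding to $e_1, \dots, e_d$, form a closed cycle $C$ in $N_G$. The arc $e_1^+$ lies on~$C$, and the remaining $d-1$ arcs of $C$ form a path $P$ joining the two endpoints of $e_1^+$ the ``long way'' around~$v$. I set $\Delta \phi := \phi' - \phi$ to be the unique integer circulation supported on $C$ with $\Delta\phi(e_1^+) = -x$, and I let $\phi'(a) := \phi(a)$ for every arc $a \notin \{e_1^+, \dots, e_d^+\}$. Flow conservation at each face around $v$ then forces $\Delta\phi(e_i^+) \in \{-x, +x\}$ for every $i \ge 2$, with the sign determined by the orientation of $e_i^+$ within~$C$.

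This construction immediately yields $\phi'(e_1^+) = 0$; the upper bound $\phi'(e_i^+) \le \phi(e_i^+) + \phi(e_1^+)$ for each $i \ge 2$ because $|\Delta\phi(e_i^+)| = x$; the identity $\phi'(a) = \phi(a)$ outside $\{e_1^+, \dots, e_d^+\}$; and flow conservation at every node of $N_G$, since $\Delta \phi$ is a circulation on the cycle $C$. The main obstacle is to verify that $\phi'$ is non-negative, i.e.\ that $\phi'(e_i^+) \ge 0$ for every $i \ge 2$ where $\Delta\phi(e_i^+) = -x$. I plan to handle this by exploiting the freedom in the naming convention, according to which $e_i^+$ is only forced by $\phi$ when $\phi(e_i^+) > 0$: whenever $\phi(e_i^+) = 0$, I re-orient $e_i^+$ so that it aligns with $P$, turning the modification on that arc into $+x$, which is trivially feasible. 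For the remaining edges, where $\phi(e_i^+) > 0$ is already in the unfavorable direction, a local case analysis around $v$ (combined, if necessary, with a pre-cancellation of flow between $e_i^+$ and $e_i^-$ in the initial representation so as to satisfy the WLOG convention) will show that $\phi(e_i^+) \ge x$ whenever the rerouting requires it, completing the feasibility check.
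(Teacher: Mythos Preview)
Your rerouting idea --- cancelling $\phi(e_1^+)$ and pushing $x := \phi(e_1^+)$ units around the dual cycle at $v$ --- is precisely the paper's one-line argument (``send the flow on $e_1$ around $v$ in the other direction''). The gap is in your non-negativity check: the claim that a local case analysis will establish $\phi(e_i^+) \ge x$ whenever $e_i^+$ points against the rerouting path is false. Nothing in Tamassia's network forbids, say, $\phi(e_1^+) = 5$ while the adjacent $\phi(e_2^+) = 2$ crosses $e_2$ in the direction opposite to $P$; pre-cancellation cannot help, since $\phi(e_2^-) = 0$ by the very WLOG convention you invoke.

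The resolution is simpler and needs no such inequality. When you route $x$ units along $P$, add them to whichever arc across $e_i$ points along $P$, even if that arc is $e_i^-$; non-negativity is then automatic. After cancelling opposite flow on each $e_i$, the single flow-carrying arc across $e_i$ has value $|\phi(e_i^+) \pm x| \le \phi(e_i^+) + x$, and no arc outside the edges $e_1,\dots,e_d$ has been touched. The lemma's third clause should be read under the paper's convention that $e_i^+$ names the unique flow-carrying arc across $e_i$, which may switch between $\phi$ and $\phi'$; this is exactly how the lemma is used in the 3-approximation, where only the bend count on each edge matters.
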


\begin{proof}
  Send the flow on~$e_1$ around~$v$ in the other direction.  
  This \emph{elementary operation} increases the cost for the edges
  $e_2,\dots,e_d$ by at most $\phi(e_1^+)$.  
  The resulting flow~$\phi'$ is feasible and coincides
  with~$\phi$ on all other arcs of~$N_G$.
\end{proof}

\begin{theorem}
  \label{thm:3approx}
  There is an algorithm that computes, for any plane 4-graph~$G$, an
  unbent collection with at most $3\tbn(G)$ bends.  The algorithm
  runs in $O(n^{1 + o(1)})$ time, where $n$ is the number of vertices of~$G$.
\end{theorem}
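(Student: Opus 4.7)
The plan is to combine the star-forest decomposition from Proposition~\ref{prop:arboricity} with a single min-cost flow computation and the local modification from Lemma~\ref{lem:change-flow}.  In $O(n)$ time I partition $E(G)$ into at most four star forests $F_{1},\dots,F_{4}$, and I compute a minimum-bend orthogonal representation $\phi^{*}$ using Tamassia's flow network in $O(n^{1+o(1)})$ time; denote its cost by $B^{*}$, the minimum number of bends in any single orthogonal drawing of~$G$.  For each $i\in[4]$, I derive a drawing $D_{i}$ in which every edge of $F_{i}$ is straight: starting from $\phi^{*}$, and for each star edge $e$ of $F_{i}$ in turn, I invoke Lemma~\ref{lem:change-flow} once at a designated \emph{leaf} endpoint of $e$ (choosing an endpoint of degree at least~$2$ in~$G$ whenever possible; if both endpoints have degree~$1$, then $e$ is an isolated component and $\phi^{*}(e^{+})=0$) to set the flow on $e^{+}$ to zero.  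The output is the collection $\{D_{1},\dots,D_{4}\}$.

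For correctness, the crucial combinatorial observation is that in a star forest each $F_{i}$-edge is incident to exactly one leaf among all star edges of~$F_{i}$, so the arc $e^{+}$ of any $e\in F_{i}$ is modified by exactly one lemma call.  Hence after processing $\phi_{i}'(e^{+})=0$ for every $e\in F_{i}$, the drawing $D_{i}$ has every edge of $F_{i}$ straight, and $\{D_{1},\dots,D_{4}\}$ is indeed an unbent collection of~$G$.

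For the approximation ratio I would charge each lemma call to its $F_{i}$-edge.  A call at a leaf $u$ with $F_{i}$-edge $e_{u}$ zeroes $\phi^{*}(e_{u}^{+})$ while raising the flow on each of the other $d_{u}-1\le 3$ arcs at~$u$ by at most $\phi^{*}(e_{u}^{+})$; therefore
\[
  \cost(D_{i})\le\sum_{e\notin F_{i}}\phi^{*}(e^{+})+3\sum_{e\in F_{i}}\phi^{*}(e^{+}).
\]
Summing over $i\in[4]$, each edge $e$ contributes $\phi^{*}(e^{+})$ three times to the first term (once for every index $i$ with $e\notin F_{i}$) and $3\,\phi^{*}(e^{+})$ once to the second, which yields $\sum_{i}\cost(D_{i})\le 6B^{*}$.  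If $B^{*}=0$, then $\phi^{*}$ alone is already an unbent collection of cost $0=\tbn(G)$; otherwise no single drawing has all edges straight, so $\un(G)\ge 2$, and since every drawing in any unbent collection has at least $B^{*}$ bends, $\tbn(G)\ge 2B^{*}$.  In both cases $\sum_{i}\cost(D_{i})\le 3\,\tbn(G)$, as desired.  The running time is dominated by the single min-cost flow computation.

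The most delicate step is ruling out the compounding of successive lemma calls: a shared non-$F_{i}$-edge can receive contributions from the calls at both its endpoints, but the star-forest structure guarantees that the ``driving'' values $\phi^{*}(e_{u}^{+})$ charged to each call remain the original (uncompounded) values of $\phi^{*}$, because the lemma is never applied at the center of a star, and the single call at each leaf already suffices to zero its $F_{i}$-edge.
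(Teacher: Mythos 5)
Your proposal is correct and follows essentially the same route as the paper's proof: one min-cost flow, a partition into four star forests via Proposition~\ref{prop:arboricity}, straightening each star edge by applying Lemma~\ref{lem:change-flow} at its leaf endpoint, and the identical accounting giving total cost at most $6B^{*}$ against the lower bound $\tbn(G)\ge 2B^{*}$. Your explicit treatment of the non-compounding of successive lemma calls matches the paper's remark that the elementary operation is never applied at a star center.
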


\begin{proof}
  We first compute a minimum-cost flow~$\phi$ in~$N_G$.  This flow~$\phi$
  corresponds to a bend-minimal orthogonal representation~$R$ of~$G$.
  Based on~$R$, we compute an orthogonal drawing~$\Gamma$ of~$G$~\cite{t-oeggwmnb-sjc}.
  Let~$b$ be the cost of~$\phi$ (which corresponds to the number of bends
  in~$R$ and~$\Gamma$).  If $b=0$, we return $\{\Gamma\}$ as an unbent collection
  (which is optimal), and we are done.

  Otherwise, any optimal unbent collection has size at least~2 and, hence, $\opt \ge 2b$.
We use \cref{prop:arboricity} to 4-color the edges of~$G$
  in linear time such that each color induces a star forest.
We now define an unbent collection of size~4.
  For each $i \in [4]$, we construct a drawing~$\Gamma_i$
  where all edges of color~$i$ are straight.
  Each time, we start with~$R$ and proceed as follows.
  Let $e$ be an edge of color~$i$, and let
  $b_e=\phi(e)$.  If $b_e=0$, we do nothing.  Otherwise we
  straighten~$e$ by \cref{lem:change-flow} with respect to the
  endpoint~$v$ of~$e$ that is {\em not} the center of the star that
  contains~$e$.  As a result, the contribution of~$e$ to the cost
  of~$\phi$ reduces from~$b_e$ to~0 and, for each edge~$e' \ne e$ that
  is incident to~$v$, its contribution to the cost of~$\phi$ increases
  by at most~$b_e$.  Since~$v$ has degree at most~4, the cost of the
  operation is bounded by $-e_b+3b_e=2 b_e$.  Performing all necessary
  elementary operations results
  in a feasible flow~$\phi'_i$ whose cost is at most
  $\cost(\phi)+ \sum_{e \text{ of color } i} 2b_e$.  As the edges of
  color~$i$ form a star forest and we never apply the elementary operation 
  to the center of a star, the operation regarding one edge of
  color~$i$ does not increase the cost of any other edge of color~$i$.
  Thus, using $\phi'_i$, we can compute an orthogonal
  representation~$R_i$ and an orthogonal drawing~$\Gamma_i$ where all
  edges of color~$i$ are straight.
In total, the unbent collection $\{\Gamma_1,\dots,\Gamma_4\}$ has at
  most $4 \cost(\phi) + \sum_{e \in E(G)} 2b_e = 4b+2b = 6b$ bends, and
  the approximation factor of our algorithm is
  $\alg/\opt \le 6b/(2b) = 3$.
The algorithm for computing the minimum-cost flow~\cite{DBLP:conf/focs/Brand0PKLGSS23}
  dominates the running time of our
  algorithm, which hence runs in $O(n^{1+o(1)})$ time.
\end{proof}

\section{The Unbent Bend Number of Plane Triconnected Cubic Graphs} 
\label{se:sub-cubic}

In this section we describe how to compute, for a given plane triconnected cubic graph, an unbent collection with the minimum number of
bends.  The collections that we compute consist of only
two drawings, which is the minimum.

Let $G$ be a plane triconnected cubic graph. Observe that computing an orthogonal representation of a graph with the minimum number of bends is equivalent to subdividing the edges of the graph the minimum number of times so that the graph admits an orthogonal representation without bends. Hence, our strategy
will be to compute two different subdivisions of $G$ such that each of the resulting graphs admits an orthogonal representation without bends and the total number of subdivision vertices is minimum.
We call a subdivision vertex a \emph{dummy}.
Let $f_{\ext}$ denote the external face of $G$. Given a cycle $C$, a \emph{leg} of $C$ is an edge external to $C$ and incident to a vertex of~$C$. 
A \emph{$k$-legged cycle} $C$ of $G$ is a cycle such that: (i)~$C$ has exactly $k$ legs, 
and (ii)~no two legs share a vertex. 
A $k$-legged cycle with $k\le 3$ is \emph{bad} if less than $4-k$ dummies are placed along its edges. 
Otherwise, it is \emph{good}. 
Observe that a 3-legged cycle (such as $C_5$ in \cref{fig:cubic-ortdraw}) is good if we place a dummy along one of its edges, whereas the external cycle is $0$-legged and hence it is good if we place four dummies along its edges (which we did in \cref{fig:cubic-ortdraw}). 
We use the following lemma.

\begin{figure}[b]
  \begin{subfigure}{0.28\textwidth}
    \centering
    \includegraphics[page=1]{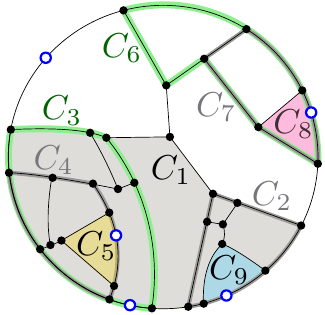}
    \subcaption{$G^1$}
    \label{fig:cubic-ortdraw-a}
  \end{subfigure}
  \hfill
  \begin{subfigure}{0.28\textwidth}
    \centering
    \includegraphics[page=2]{unbent-cubic-main.pdf}
    \subcaption{$G^2$}
    \label{fig:cubic-ortdraw-b}
  \end{subfigure}
  \hfill
  \begin{subfigure}{0.35\textwidth}
    \centering
    \includegraphics[page=3]{unbent-cubic-main.pdf}
    \subcaption{orthogonal drawing of $G^1$}
     \label{fig:cubic-ortdraw-k}
  \end{subfigure}
  \caption{(a--b) Unbent collection $\{G^1,G^2\}$ of a plane triconnected cubic graph~$G$ with 3-legged cycles $C_1,\dots,C_9$ and
    dummies (blue circles).  (c)~Drawing of~$G^1$ without bends.}
  \label{fig:cubic-ortdraw}
\end{figure}

\begin{restatable}{lemma}{cubicbadcycles}\cite{DBLP:journals/jgaa/RahmanNN03}
  \label{app-le:cubic-badcycles}
  A plane cubic graph has an orthogonal representation without bends if and only
  if it does not contain any bad cycle.  Given a plane cubic graph~$G$ without bad
  cycles, an orthogonal representation of~$G$ without bends can be computed in linear time.
\end{restatable}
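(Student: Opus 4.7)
The plan is to prove the two directions of the equivalence separately and then exhibit the linear-time algorithm; the necessity rests on a local angle-counting identity per cycle, while the sufficiency and algorithm rely on an inductive constructive argument.

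For the necessity direction, I would argue that in any bend-free orthogonal representation of the (possibly subdivided) cubic planar graph, every $k$-legged cycle $C$ with $k \le 3$ must carry at least $4-k$ dummies. At each original degree-$3$ vertex $v$ on $C$, the three incident edges partition the neighborhood of $v$ into angular sectors of sizes $90^\circ, 90^\circ, 180^\circ$, so the interior angle of $C$ at $v$ lies in $\{90^\circ, 180^\circ\}$ if $v$'s leg is on the outside of $C$, and in $\{180^\circ, 270^\circ\}$ if it is on the inside. Introducing counts $a,d$ for the outside-leg-$90^\circ$ and inside-leg-$270^\circ$ original vertices on $C$, and $x,y$ for the $90^\circ$- and $270^\circ$-dummies placed on the edges of $C$, the polygon identity $\sum(\text{interior angles}) = (N-2)\cdot 180^\circ$, where $N$ is the total number of corners of the subdivided cycle, causes all $180^\circ$ contributions to cancel and reduces to $a - d + x - y = 4$. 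Since $a \le k$ and $d, y \ge 0$, this forces $x \ge 4-k$, so $C$ is good. The external cycle is handled symmetrically with the analogous sum $(N+2)\cdot 180^\circ$ taken on the unbounded side, yielding the requirement of four dummies.

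For the sufficiency direction and the linear-time claim, I would follow the inductive construction of Rahman, Nishizeki, and Nakano. The scheme is to decompose the graph along a judiciously chosen separator, for instance a $3$-legged cycle or a separation pair of an SPQR-like decomposition, yielding a smaller graph $G''$ and a piece $G'$; verify that the no-bad-cycle property is inherited by both pieces, recurse on each, and reassemble the orthogonal representations along the separator. An equivalent viewpoint is to restrict Tamassia's flow network by setting the capacity of the inter-face arcs (which would represent bends) to zero; the absence of bad cycles then becomes precisely the cut-based feasibility condition for the remaining vertex-to-face flow, and a feasible flow provides the desired angle assignment at each vertex. Keeping the total running time $O(n)$ relies on computing the triconnected decomposition and face structure in linear time and performing only amortized constant work per recursive step.

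The principal obstacle is the sufficiency direction. Necessity is a single additive identity per cycle handled in isolation, but sufficiency demands a globally consistent angle assignment meeting all face and cycle constraints simultaneously, so the decomposition must be chosen carefully to guarantee that cutting along the separator preserves the no-bad-cycle property on both sides. This requires a delicate case analysis of how cycles with few legs interact with the separator, together with an argument that gluing the two sub-representations along the separator introduces no new bend. A further technical hurdle is organizing the bookkeeping (leg counts, dummy counts, cycle membership) so that each decomposition step runs in amortized $O(1)$, as needed for the overall linear-time bound.
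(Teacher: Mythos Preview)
The paper does not prove this lemma at all: it is quoted verbatim from Rahman, Nakano, and Nishizeki~\cite{DBLP:journals/jgaa/RahmanNN03} (note the citation attached to the lemma statement), and both in the main text and in the appendix it is simply stated and used as a black box. So there is no ``paper's own proof'' to compare against.

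That said, your sketch is a faithful outline of the argument in the cited source. The necessity direction via the interior-angle identity $a-d+x-y=4$ is exactly the standard counting argument, and your plan for sufficiency---decompose along a maximal $3$-legged cycle, show the no-bad-cycle property is inherited by the inside and outside pieces, recurse, and glue---is precisely the Rahman--Nakano--Nishizeki strategy. Your alternative flow-feasibility viewpoint is also sound. The one place where your write-up is loose is the linear-time accounting: in the cited paper this is not handled by a generic ``amortized $O(1)$ per step'' claim but by showing that the total size of all genealogical subtrees summed over the recursion is $O(n)$, which requires the specific structural fact that distinct $3$-legged cycles of a triconnected cubic plane graph are laminar (do not intersect). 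You allude to this obstacle but do not name the key structural lemma; without it, the recursion could in principle revisit edges many times.
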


In \cref{fig:cubic-ortdraw-a,fig:cubic-ortdraw-b} we show two ways to add dummies to a plane triconnected cubic graph~$G$ such that all cycles are good (and such that no edge of~$G$ receives dummies in both cases). 
In the resulting graphs $G^1$ and $G^2$, every 3-legged cycle contains a dummy and the external cycles contain four dummies. 
Due to this, both graphs admit orthogonal drawings without bends; see that 
of~$G^1$ in \cref{fig:cubic-ortdraw-k}.

We say that the $k$ paths of a $k$-legged cycle connecting the vertices incident to its legs are the \emph{contour paths} of the cycle.
A \emph{leg face} of a $k$-legged cycle is a face incident to two of its legs. 
We now aim to lowerbound $\tbn(G)$.
To this end, we use specific 3-legged cycles of~$G$ that we call
\emph{demanding}. Informally, the demanding cycles form the smallest
subset of 3-legged cycles with the property that, if we place a dummy
on an edge of each of them, all 3-legged cycles in~$G$ will be
good. A \emph{non-demanding} cycle is a 3-legged cycle that is not demanding. For a formal definition of demanding cycles, see the
\lncs{full version~\cite{OurArxiv}}\lip{the appendix}.  The set of
demanding cycles is unique.  Let~$D(G)$ be the largest set of
demanding cycles of~$G$ that are pairwise vertex-disjoint.

For an example, see \cref{fig:cubic-ortdraw-a}, where $D(G)=\{C_3,C_5,C_8,C_9\}$. Observe that all these 3-legged cycles are good due to the placement of four dummy vertices, one for each cycle in $D(G)$. 
E.g., the dummy vertex on~$C_2$ also lies on~$C_9$ (since $C_2$ and $C_9$ share an edge). 
Hence, both these 3-legged cycles are good.

The demanding cycles and the set $D(G)$ are computed by using a coloring rule over a structure that describes the hierarchy of the 3-legged cycles of~$G$. This structure and the coloring rule are described in the {\lncs{full version~\cite{OurArxiv}}\lip{the appendix}}. We now lowerbound $\tbn(G)$ in terms of the numbers of certain types of 3-demanding cycles.

A cycle $C$ \emph{contains} another cycle $C'$ if the edges of $C'$ are edges of the subgraph of $G$ induced by~$C$ and its interior vertices.
Let $C$ be a demanding cycle. 
A contour path $P$ of $C$ is \emph{interesting} if every non-demanding cycle of~$G$ that contains~$C$ and shares some edges with~$C$, contains~$P$. 
For example, $C_8$ in \cref{fig:cubic-ortdraw-a} has two interesting contour paths, namely those sharing edges with $C_6$ and $C_7$. 
See also $C_5$ and $C_9$, which have only one interesting contour path each, sharing edges with~$C_4$ and~$C_2$, respectively.

A demanding cycle $C$ is \emph{expensive} if and only if $C$ has exactly one interesting contour path $P$ that contains a single edge, and there exists at least one non-demanding 3-legged cycle $C'$ such that (i)~$C'$ contains $C$, (ii)~$C'$ shares edges with~$P$, and (iii)~$C'$ does not share any edges with other demanding cycles that it contains.
For example, in \cref{fig:cubic-ortdraw-a}, the demanding cycles $C_5$ and $C_9$ are expensive, as both have an interesting contour path composed of one edge and neither of the containing cycles ($C_4$ and~$C_2$) contains other demanding cycles.

A demanding cycle~$C$ is \emph{short} if and only if it has exactly one edge $e$ incident to $f_{\ext}$ and two or more interesting contour paths, one of which is the contour path formed by the edge~$e$.  
For example, in \cref{fig:cubic-ortdraw-a}, cycle $C_8$ is short, as it has two interesting contour paths (shared with $C_6$ and $C_7$), one of them is incident to $f_{\ext}$ and composed by one edge. 

We now define three subsets of $D(G)$ and then use their cardinalities to lowerbound $\tbn(G)$.
Let $D_{\ext}(G)$ be the set of demanding cycles that have $f_{\ext}$ as leg face, let $D_{\exp}(G)$ be the set of expensive demanding cycles, and let $D_{\short}(G)$ be the set of short demanding cycles.

\begin{restatable}[\restateref{app-le:cubic-lower3}]{lemma}{cubiclower}
  \label{app-le:cubic-lower3}
  For any plane triconnected cubic graph $G$, it holds that\\[-2ex]
  \[\tbn(G) \ge q(G) := 2|D(G)|+|D_{\exp}(G)|+8-\min\{8, 2|D_{\ext}(G)|-|D_{\short}(G)|\}.\]
\end{restatable}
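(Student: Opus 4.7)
The plan is to bound $\tbn(G)$ from below by analyzing any feasible unbent collection $\mathcal{C}$ of $G$ drawing by drawing. Since $G$ is a plane triconnected cubic graph, each drawing of $\mathcal{C}$ corresponds via \cref{app-le:cubic-badcycles} to a subdivision of $G$ without bad cycles, so each $k$-legged cycle ($k \le 3$) of $G$ must contain at least $4-k$ dummies in every drawing; in particular, every 3-legged cycle needs $\ge 1$ bend and the external cycle needs $\ge 4$ bends per drawing. The proof will first reduce to the case $|\mathcal{C}|=2$: observe that if some edge $e$ is bent in more than one drawing, we can only waste bends, so we may assume each edge is bent in at most one drawing, giving $\text{total bends} = \sum_{e} (\text{bends on } e)$ over the unique drawing where it is bent.

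Next, I would perform the main accounting to isolate the four contributions to $q(G)$. The term $2|D(G)|$ arises because the cycles in $D(G)$ are pairwise vertex-disjoint, so each one needs its own bend in each of the two drawings on edges disjoint from all other demanding cycles, totaling $\ge 2|D(G)|$ bends. The term $+8$ comes from the external cycle needing $4$ bends per drawing. The savings $\min\{8,\,2|D_{\ext}(G)| - |D_{\short}(G)|\}$ account for bends shared between a demanding cycle and $f_{\ext}$: a cycle $C \in D_{\ext}(G)$ shares a contour path with $f_{\ext}$, so bending an edge of this shared path in a drawing satisfies both $C$ and $f_{\ext}$, saving up to $1$ per drawing and $2$ overall per such cycle; for a short cycle $C \in D_{\short}(G)$, however, only one edge of $C$ is incident to $f_{\ext}$, and since that edge can be bent in at most one of the two drawings, $C$ realizes only $1$ of the $2$ potential shared savings, hence the correction $-|D_{\short}(G)|$; the total saving is capped by $8$ because we cannot save more than the external cycle's requirement. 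Finally, the term $+|D_{\exp}(G)|$ charges one extra bend per expensive cycle $C \in D_{\exp}(G)$: by definition, $C$ is contained in some non-demanding 3-legged cycle $C'$ whose only shared edge with $C$ is the single-edge contour path $P$; since $P$ can be bent in at most one drawing, in the other drawing $C$ and $C'$ must be served by two distinct bends on edges outside $P$, one of which cannot be charged to any cycle of $D(G)$.

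The main obstacle will be verifying that these four contributions really are simultaneously attainable as a lower bound, i.e., that the bends counted in different terms are on pairwise-distinct edges and no double-counting occurs. The delicate points are (i)~showing that the savings for different cycles in $D_{\ext}(G)$ come from disjoint portions of $\partial f_{\ext}$, so they add up rather than overlap (this should follow from vertex-disjointness of the cycles in $D(G)$ together with the leg-face structure defined in \cref{app-se:cubic-bad-cycles}); (ii)~arguing that the extra bend charged to each expensive cycle $C$ is genuinely separate from the bend charged to $C$ by the $2|D(G)|$ term and from any bend on the external cycle, which relies on the defining clause that $C'$ shares edges with no other demanding cycle contained in it; and (iii)~handling collections of size $>2$, where edges may be bent in several drawings but at least one drawing has each edge straight~-- the per-drawing bad-cycle requirement still forces the same totals, so the bound generalizes. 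I would organize the proof as a case analysis on the demanding-cycle hierarchy of \cref{app-se:cubic-bad-cycles}, processing cycles in a bottom-up order so that each bend is charged to exactly one term of $q(G)$.
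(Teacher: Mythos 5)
Your proposal follows essentially the same route as the paper: the paper likewise doubles the single-drawing bound $b(G)=|D(G)|+4-\min\{4,|D_{\ext}(G)|\}$ over the two drawings, then charges one extra dummy per expensive cycle (because the unique single-edge interesting contour path shared with the containing non-demanding cycle can be subdivided in only one of the two drawings) and deducts one saving per short cycle (because its unique external edge can serve double duty in only one drawing). The delicate points you flag — disjointness of the charges and the reduction to size-2 collections — are exactly the ones the paper's proof handles, so your plan is sound.
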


In \cref{app-th:minbend-cubic} we prove that the lower bound of \cref{app-le:cubic-lower3} is tight. 
\Cref{fig:cubic-ortdraw} shows an unbent collection $\mathcal{C}=\{G^1,G^2\}$ of plane triconnected cubic graph $G$ with
$D(G)=\{C_3,C_5,C_8.C_9\}$, $D_{\exp}(G)=\{C_5,C_9\}$, $D_{\ext}(G)=\{C_3,C_8,C_9\}$, and $D_{\short}(G)=\{C_8\}$. 
Hence, $\mathcal{C}$ has $\tbn(G) = q(G) = 2 \cdot 4 + 2 + 8 - (2 \cdot 3 - 1) = 13$ dummies. 

\begin{restatable}[\restateref{app-th:minbend-cubic}]{theorem}{minbendcubic}
  \label{app-th:minbend-cubic}
  There exists a linear-time algorithm that computes, for any plane triconnected cubic graph~$G$, an unbent collection of $G$ with $\tbn(G)$ bends. 
  The resulting collection consists of two drawings, which is optimal.
\end{restatable}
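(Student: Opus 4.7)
The plan is to construct an unbent collection $\mathcal{C}=\{G^1,G^2\}$ where each $G^i$ is a subdivision of $G$ that admits an orthogonal drawing without bends, while guaranteeing that no edge of $G$ carries a dummy in both $G^1$ and $G^2$. By \cref{app-le:cubic-badcycles}, it suffices to ensure that in each $G^i$ every 3-legged cycle contains at least one dummy on one of its edges and the external (0-legged) cycle contains four dummies. The orthogonal drawings are then produced from $G^1$ and $G^2$ in linear time by that lemma, each dummy corresponding to exactly one bend in the drawing of $G$. Matching the lower bound $q(G)$ of \cref{app-le:cubic-lower3} yields bend optimality; size-2 optimality is immediate when $\tbn(G)>0$, since a single drawing cannot be both bend-free and part of an unbent collection in that case.

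First I would precompute, in linear time, the hierarchy structure of the 3-legged cycles of $G$ together with the coloring rule defined in \cref{app-se:cubic-bad-cycles}; using the SPQR-tree-based machinery of Didimo, Liotta, Ortali, and Patrignani~\cite{DBLP:conf/soda/DidimoLOP20} on which our construction is built, one extracts the set $D(G)$ of pairwise vertex-disjoint demanding cycles, as well as its distinguished subsets $D_{\ext}(G)$, $D_{\exp}(G)$, and $D_{\short}(G)$. This identifies the obligatory locations where dummies must appear in every unbent collection.

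The dummy-placement scheme then proceeds in three passes. In the first pass, for every $C\in D(G)$ I place one dummy on an edge of $C$ in $G^1$ and one on a different edge of $C$ in $G^2$, contributing $2|D(G)|$ dummies; because the cycles in $D(G)$ are pairwise vertex-disjoint, these choices are independent. In the second pass, for every expensive cycle $C\in D_{\exp}(G)$ I add one further dummy to account for a non-demanding 3-legged cycle $C'$ that contains $C$ and shares only its unique single-edge interesting contour path with it; this contributes exactly $|D_{\exp}(G)|$ extra dummies, one per expensive cycle, and can be placed in the drawing where the corresponding edge of $C$ is not already subdivided. In the third pass, I complete the external cycle to four dummies per drawing, reusing whenever possible the dummies of cycles in $D_{\ext}(G)$ that sit on edges shared with $f_{\ext}$; the number of reuses is precisely $\min\{8,\,2|D_{\ext}(G)|-|D_{\short}(G)|\}$, the correction $|D_{\short}(G)|$ accounting for short cycles whose unique external edge is already used up as the mandatory dummy edge of their own demanding cycle. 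Summing the contributions yields exactly $q(G)$ dummies in total.

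The main obstacle is proving two invariants of the construction at once: (a) every 3-legged cycle of $G$, demanding or not, receives at least one dummy in both $G^1$ and $G^2$, and (b) no edge of $G$ is subdivided in both drawings. Invariant (a) for demanding cycles is immediate from the first pass; for non-demanding 3-legged cycles, it is enforced by the coloring rule of the hierarchy, which guarantees that every non-demanding 3-legged cycle strictly contains some demanding cycle and shares with it the specific edges that receive the dummies of the first two passes. Invariant (b) follows because the two edges chosen in the first pass for each $C\in D(G)$ are distinct, the additional dummy placed for an expensive $C\in D_{\exp}(G)$ can always be routed onto an edge not used by $C$'s mandatory dummy in the same drawing (the expensive case leaves enough freedom thanks to $|C|\ge 4$), and the reuse pattern on the external cycle is chosen consistently across the two drawings. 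Once both invariants are established, \cref{app-le:cubic-badcycles} provides the bend-free orthogonal drawings in linear time, completing the proof.
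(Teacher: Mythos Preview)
Your three-pass outline captures the overall bookkeeping that leads to the count $q(G)$, but the placement rules you state are not strong enough to force invariant~(a) for \emph{non-demanding} 3-legged cycles, and this is where the real work of the proof lies.

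In the first pass you only say ``place one dummy on an edge of $C$ in $G^1$ and one on a different edge of $C$ in $G^2$''. An arbitrary pair of edges of a demanding cycle $C$ does \emph{not} suffice: a non-demanding 3-legged cycle $C'$ containing $C$ becomes good only if the dummy lands on an \emph{interesting} contour path of $C$ (one that is a sub-path of every such $C'$). The paper's algorithm is organised entirely around this: dummies are placed on interesting contour paths, and when a demanding cycle has a single interesting contour path consisting of one edge, the non-demanding cycles above it are called \emph{critical} and require a separate coordination mechanism. That mechanism is the enriched genealogical tree $T^+(C)$ with the pointers $\alpha(\cdot),\beta(\cdot)$ and the top-down traversal maintaining invariants~\ref{enum:invar-A}--\ref{enum:invar-C}; without it one cannot guarantee simultaneously that every critical cycle is good in both $G^1$ and $G^2$ and that no single-edge interesting path is subdivided in both drawings. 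Your sentence ``the coloring rule \dots\ guarantees that every non-demanding 3-legged cycle \dots\ shares with it the specific edges that receive the dummies'' presupposes exactly what needs to be arranged.

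Your second pass also misplaces the extra dummy for an expensive cycle. In the paper, for $C\in D_{\exp}(G)$ one subdivides the unique interesting edge in $G^1$, and in $G^2$ one subdivides a non-interesting edge of $C$ \emph{and} a leg of $C$ chosen via \cref{app-le:cubic-useAleg}; the leg dummy is what makes every non-demanding $C'\supset C$ good in $G^2$, since $C'$ need not share any edge of $C$ other than the interesting one. Finally, your sketch omits the case distinctions that cannot be absorbed into the generic scheme: intersecting demanding cycles (handled via twins and \cref{app-le:cubic-only_intersecting}), a triangular external face (where \cref{app-le:2legged} fails and the three placement patterns must be compared), and the 2-legged bad cycles created by the dummies themselves. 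Each of these needs its own argument before \cref{app-le:cubic-badcycles} can be invoked.
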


\begin{proof}[\sketchname]
We first create two copies of~$G$, which we call~$G^1$ and~$G^2$.  On these, we place $q(G)$ dummies in total. 
Below we prove that, then, none of the two resulting graphs has a bad cycle.  Hence,
by \cref{app-le:cubic-badcycles}, they can be drawn without bends, in linear time.
Replacing the dummies by bends yields an unbent collection with $q(G)$ bends.
Finally, \cref{app-le:cubic-lower3} implies the theorem.

We compute the sets $D(G)$, $D_{\exp}(G)$, $D_{\ext}(G)$, and $D_{\short}(G)$.
We first consider a demanding cycle~$C$ in $D_{\exp}(G)$.
In $G^1$, we place a dummy on the only edge belonging to an interesting contour path. 
In $G^2$, we place a dummy on another edge of $C$ (not belonging to an interesting contour path), so that $C$ is good in both $G^1$ and $G^2$. 
Still in~$G^2$, we place a dummy on one of the two legs of $C$ that is contained in all the non-demanding cycles that contain~$C$ and share edges with~$C$ (in the \lncs{full version~\cite{OurArxiv}}\lip{the appendix} we prove that such a leg always exists). 
Hence, for every cycle in $D_{\exp}(G)$, we have placed three dummies; one in $G^1$ and two in $G^2$. As a result, all demanding cycles in~$D_{\exp}(G)$ and all non-demanding cycles sharing edges with them are good in $G^1$ and $G^2$. 

We now consider the cycles in $D(G)\setminus D_{\exp}(G)$. 
For each such cycle having two or more edges on an interesting contour path, we place a dummy on one such edge in~$G^1$ and one in~$G^2$. 
It remains to place dummies on (i)~demanding (non-expensive) cycles that have one edge belonging to an interesting contour path and (ii)~on the non-demanding cycles that share edges only with the demanding cycles in case~(i).
To this end, we use a data structure that we call \emph{enriched genealogical tree}.
With its help, we place the dummies that we need to place on such demanding cycles exclusively on interesting contour paths. 
As a result, all 3-legged cycles become good. 

According to the placement strategy above, we have placed a total of three dummies for every cycle in $D_{\exp}(G)$ and two dummies for every other demanding cycle.
Hence, we have placed exactly $2|D(G)\setminus D_{\exp}(G)|+3|D_{\exp}(G)|=2|D(G)|+|D_{\exp}(G)|$ dummies. 

Since the external cycle is 0-legged, we have to ensure that it gets four dummies in $G^1$ and four (different ones) in $G^2$.
We show that, for each cycle $C$ in $D_{\ext}(G)\setminus D_{\short}(G)$, we can place two dummies incident to~$f_{\ext}$ such that $C$ and all the non-demanding cycles sharing edges with $C$ become good cycles. 
(For the expensive cycles, we need three dummies, but we have already considered this extra cost before.)
For the cycles in $D_{\short}(G)$, we need three dummies, two of which incident to $f_{\ext}$, to obtain the same results. 
Summing up, on the external cycle, we place $8-\min\{8, 2|D_{\ext}(G)|-|D_{\short}(G)|\}$ dummies that we have not already considered in the term $2|D(G)|+|D_{\exp}(G)|$. 
Hence, in total we have placed  $2|D(G)|+|D_{\exp}(G)|+8-\min\{8, 2|D_{\ext}(G)|-|D_{\short}(G)|\}=q(G)$ dummies. As a result, neither $G^1$ nor $G^2$ has any bad cycle. 
\end{proof}

\section{Open Questions and Further Research}
\label{se:conclusion}

We have introduced the concept of unbent collections.
We propose here several open questions.
Can we efficiently decide, for a given 4-plane graph~$G$,
whether $\un(G) = 2$ or $\un(G) = 3$?
Does every plane 4-graph $G$ admit an unbent collection of size $\un(G)$ that minimizes $\tbn(G)$? 
We have shown how to efficiently compute the unbent bend number of plane triconnected cubic graphs, but there are other natural classes to explore. 
For example, plane cubic graphs that are not triconnected, bipartite planar graphs, and series-parallel 4-graphs are good candidates as they admit unbent collections of size~2. 

Lastly, we discuss a subtlety of \cref{def:unbent}.
Throughout the paper, we have required the drawings in the unbent
collection to have the same embedding as that of the given plane
4-graph.  This requirement stems from the NP-hardness of the bend
minimization problem in the variable-embedding
setting~\cite{DBLP:journals/siamcomp/GargT01}.
However, we suggest for future exploration a variant in which one does not have to adhere to the same embedding in all the drawings in the collection; we call a collection allowing for variable embedding \emph{mutable}.
Additionally, we propose a variant between the two definitions where we still allow only one fixed embedding which is however not given in the input; we call a corresponding collection \emph{uniform}.
Observe that the \emph{mutable unbent bend number} is different from the unbent bend number even considering our introductory example of~$K_4$.
Indeed, the mutable collection can contain two copies of the left drawing in \Cref{fig:k4-b}.
Hence, its mutable unbent bend number is at most 10 (as opposed to $\tbn(K_4)=12$).

\paragraph{Acknowledgments.}  The authors thank the organizers and the other participants of the workshop HOMONOLO 2023 (Nová Louka, Czech Republic), where this work was started.

\lip{\bibliographystyle{plainurl}}
\lncs{\bibliographystyle{splncs04}}
\bibliography{abbrv,unbent}

\begin{thebibliography}{10}

\bibitem{aa-sag-DM89}
I.~Algor and Noga Alon.
\newblock The star arboricity of graphs.
\newblock {\em Discrete Math.}, 75(1):11--22, 1989.
\newblock \href {https://doi.org/10.1016/0012-365X(89)90073-3}
  {\path{doi:10.1016/0012-365X(89)90073-3}}.

\bibitem{balko-GD24}
Martin Balko, Petr Hlin\v{e}n\'{y}, Tom\'{a}\v{s} Masa\v{r}{\'\i}k, Joachim
  Orthaber, Birgit Vogtenhuber, and Mirko~H. Wagner.
\newblock On the uncrossed number of graphs.
\newblock In Stefan Felsner and Karsten Klein, editors, {\em Graph Drawing \&
  Network Vis. (GD)}, volume 320 of {\em LIPIcs}, pages 18:1--18:13. Schloss
  Dagstuhl~-- Leibniz-Zentrum f{\"u}r Informatik, 2024.
\newblock \href {https://doi.org/10.4230/LIPIcs.GD.2024.18}
  {\path{doi:10.4230/LIPIcs.GD.2024.18}}.

\bibitem{DBLP:journals/jgaa/BattistaDGGOPT23}
Giuseppe~Di Battista, Walter Didimo, Luca Grilli, Fabrizio Grosso, Giacomo
  Ortali, Maurizio Patrignani, and Alessandra Tappini.
\newblock Small point-sets supporting graph stories.
\newblock {\em J. Graph Algorithms Appl.}, 27(8):651--677, 2023.
\newblock \href {https://doi.org/10.7155/JGAA.00639}
  {\path{doi:10.7155/JGAA.00639}}.

\bibitem{DBLP:journals/jcss/BinucciGLLMNS24}
Carla Binucci, Emilio {Di Giacomo}, William~J. Lenhart, Giuseppe Liotta,
  Fabrizio Montecchiani, Martin N{\"{o}}llenburg, and Antonios Symvonis.
\newblock On the complexity of the storyplan problem.
\newblock {\em J. Comput. Syst. Sci.}, 139:103466, 2024.
\newblock \href {https://doi.org/10.1016/J.JCSS.2023.103466}
  {\path{doi:10.1016/J.JCSS.2023.103466}}.

\bibitem{fpp-hdpgg-90}
Hubert de~Fraysseix, János Pach, and Richard Pollack.
\newblock How to draw a planar graph on a grid.
\newblock {\em Combinatorica}, 10(1):41--51, 1990.
\newblock \href {https://doi.org/10.1007/BF02122694}
  {\path{doi:10.1007/BF02122694}}.

\bibitem{DBLP:conf/soda/DidimoLOP20}
Walter Didimo, Giuseppe Liotta, Giacomo Ortali, and Maurizio Patrignani.
\newblock Optimal orthogonal drawings of planar 3-graphs in linear time.
\newblock In Shuchi Chawla, editor, {\em Discrete Algorithms (SODA)}, pages
  806--825, 2020.
\newblock \href {https://doi.org/10.1137/1.9781611975994.49}
  {\path{doi:10.1137/1.9781611975994.49}}.

\bibitem{DBLP:conf/sofsem/FialaFLWZ24}
Jiří Fiala, Oksana Firman, Giuseppe Liotta, Alexander Wolff, and Johannes
  Zink.
\newblock Outerplanar and forest storyplans.
\newblock In Henning Fernau, Serge Gaspers, and Ralf Klasing, editors, {\em
  Theory \& Practice Comput. Sci. (SOFSEM)}, volume 14519 of {\em LNCS}, pages
  211--225. Springer, 2024.
\newblock \href {https://doi.org/10.1007/978-3-031-52113-3\_15}
  {\path{doi:10.1007/978-3-031-52113-3\_15}}.

\bibitem{DBLP:journals/siamcomp/GargT01}
Ashim Garg and Roberto Tamassia.
\newblock On the computational complexity of upward and rectilinear planarity
  testing.
\newblock {\em {SIAM} J. Comput.}, 31(2):601--625, 2001.
\newblock \href {https://doi.org/10.1137/S0097539794277123}
  {\path{doi:10.1137/S0097539794277123}}.

\bibitem{hm-mucd-GD23}
Petr Hlin\v{e}n{\'{y}} and Tom{\'{a}}\v{s} Masa\v{r}{\'{\i}}k.
\newblock Minimizing an uncrossed collection of drawings.
\newblock In Michael Bekos and Markus Chimani, editors, {\em Graph Drawing \&
  Network Vis. (GD)}, volume 14466 of {\em LNCS}, pages 110--123. Springer,
  2023.
\newblock \href {https://doi.org/10.1007/978-3-031-49272-3_8}
  {\path{doi:10.1007/978-3-031-49272-3_8}}.

\bibitem{kkw-cbvop-07}
Jan K{\'{a}}ra, Jan Kratochv{\'{\i}}l, and David~R. Wood.
\newblock On the complexity of the balanced vertex ordering problem.
\newblock {\em Discret. Math. Theor. Comput. Sci.}, 9(1):193--202, 2007.
\newblock \href {https://doi.org/10.46298/DMTCS.383}
  {\path{doi:10.46298/DMTCS.383}}.

\bibitem{n-edstfg-JLMS61}
Crispin {St.} J.~A. Nash-Williams.
\newblock Edge-disjoint spanning trees of finite graphs.
\newblock {\em J. London Math. Soc.}, 36(1):445--450, 1961.
\newblock \href {https://doi.org/10.1112/jlms/s1-36.1.445}
  {\path{doi:10.1112/jlms/s1-36.1.445}}.

\bibitem{DBLP:journals/jgaa/RahmanNN99}
Md.~Saidur Rahman, Shin{-}Ichi Nakano, and Takao Nishizeki.
\newblock A linear algorithm for bend-optimal orthogonal drawings of
  triconnected cubic plane graphs.
\newblock {\em J. Graph Algorithms Appl.}, 3(4):31--62, 1999.
\newblock \href {https://doi.org/10.7155/JGAA.00017}
  {\path{doi:10.7155/JGAA.00017}}.

\bibitem{DBLP:journals/jgaa/RahmanNN03}
Md.~Saidur Rahman, Takao Nishizeki, and Mahmuda Naznin.
\newblock Orthogonal drawings of plane graphs without bends.
\newblock {\em J. Graph Algorithms Appl.}, 7(4):335--362, 2003.
\newblock \href {https://doi.org/10.7155/JGAA.00074}
  {\path{doi:10.7155/JGAA.00074}}.

\bibitem{DBLP:conf/soda/Schnyder90}
Walter Schnyder.
\newblock Embedding planar graphs on the grid.
\newblock In David~S. Johnson, editor, {\em Discrete Algorithms (SODA)}, pages
  138--148. SIAM, 1990.
\newblock URL: \url{http://dl.acm.org/citation.cfm?id=320176.320191}.

\bibitem{t-oeggwmnb-sjc}
Roberto Tamassia.
\newblock On embedding a graph in the grid with the minimum number of bends.
\newblock {\em SIAM J. Comput.}, 16(3):421--444, 1987.
\newblock \href {https://doi.org/10.1137/0216030} {\path{doi:10.1137/0216030}}.

\bibitem{DBLP:conf/focs/Brand0PKLGSS23}
Jan van~den Brand, Li~Chen, Richard Peng, Rasmus Kyng, Yang~P. Liu,
  Maximilian~Probst Gutenberg, Sushant Sachdeva, and Aaron Sidford.
\newblock A deterministic almost-linear time algorithm for minimum-cost flow.
\newblock In {\em Foundat. Comput. Sci. (FOCS)}, pages 503--514. {IEEE}, 2023.
\newblock \href {https://doi.org/10.1109/FOCS57990.2023.00037}
  {\path{doi:10.1109/FOCS57990.2023.00037}}.

\end{thebibliography}
\lncs{\end{document}}

\appendix

\clearpage

\noindent
{\sffamily\bfseries\Large Appendix}

\section{Omitted Proofs of \cref{se:unbent}}

\Decompose*
\label{lm:decompose*}

\begin{proof}
    First consider a proper induced subgraph $G'$ of $G$ with at least $2|V(G')| - 1$ edges.
    Since $G$ has maximum degree 4 and $G'$ misses at most one edge to be 4-regular, $G'$ is connected to $G \setminus G'$ with at most two edges.
    We use this observation to decompose $G$ recursively as follows.
    If $G$ has a cut $X$ with at most 2 edges, we decompose $G$ at $X$ into two smaller graphs.
    For each component $C$ of $G - X$, we construct the graph $G_C$ by
    taking the union of $C$ and $X$.
    If $|X| = 2$ and the two endpoints of $X$
    that are not in $C$ are the same vertex~$z$,
    we obtain $G_C$ by splitting $z$ into two copies~-- each of degree~$1$.
    Note that in any case, the edges of $X$ in $G_C$ have an
    endpoint of degree~$1$.

    When we recursively continue decomposing each $G_C$
    with respect to a cut~$X$,
    we ignore the edges of~$X$ when searching for a 2-edge cut in all subsequent iterations.
    Since every proper induced subgraph $G'$ of $G$ with at least $2|V(G')| -1$ edges is connected via at most a 2-edge cut to the rest of~$G$, the base case of this recursion consists of graphs that themselves have at least $2n-1$ edges (ignoring the degree-1 vertices) or graphs that have no such high-density subgraphs.
    Because the latter graphs always admit an unbent collection of size at most 2 by Proposition~\ref{prop:DensityNashWilliams}, we can ignore them.

    Since, for every remaining component $C$, the
    corresponding graph~$G_C$ may still have the degree-1 vertices and,
    therefore, does not necessarily have at least ${2 |V(G_C)| - 1}$ edges overall, we attach to each degree-1 vertex a gadget $H$
    (see \cref{fig:Deg1Gadget}) of constant size with $2|V(H)|-1$ edges
    and $\un(G + e) \leq 2$, where $e$ is the edge incident to the degree-1 vertex.
    This ensures that every resulting $n$-vertex 4-graph has at least $2n-1$ edges, which together compose~$\mathcal{G}$.
    
    Regarding the running time,
    note that we can find all 1- and 2-edge cuts simultaneously
    by applying the following procedure.
    Index all faces of $G$ in arbitrary order.
    Traverse each face and label the incident edges by the index of the face.
    This can be done in linear time.
    Now, each edge has a tuple of two labels $(i, j)$.
    Assume that $i \le j$.
    If $i = j$, we have found a 1-edge cut and mark the corresponding edge.
    We sort the tuples of labels lexicographically in $O(n)$ time
    using radix sort.
    In the resulting list, if we encounter two consecutive tuples
    with the same labels, we have found a 2-edge cut and mark the corresponding edges.
    Using a BFS-traversal in linear time,
    we can find all components separated by 1- or 2-edge cuts.

    We now show that $\un(G) \leq 2$ if and only if $\un(G_C) \leq 2$ for every component $C$ of $G - X$.
    If $\un(G) \leq 2$, we can obtain
    an unbent collection of size at most two for each $G_C$
    by taking an unbent collection of size at most two for $G$
    and removing all elements that are not contained in~$G_C$.
    
    Conversely, assume that, for each component $C$,
    there is an unbent collection consisting of
    two drawings $\Gamma_C^1$ and $\Gamma_C^2$.
    For each $i \in [2]$,
    we combine drawings $\Gamma_{C_1}^i, \Gamma_{C_2}^i, \dots$
    by identifying the edges of $X$ in each of these drawings,
    where $C_1, C_2, \dots$ are the components of $G-X$. Of course, it might happen that one of the edges of $X$ needs to be redrawn with some bends after this identification, but this is not a problem because we have two drawings and at most two edges in $X$, so in each drawing, we can choose to make one edge straight and give the other as many bends as we need.
This yields an unbent collection of size two for~$G$.
\end{proof}

\BalancedColoring*
\label{the:balancedColoringCharacterization*}

\begin{proof}
    First assume that $\un(G) \leq 2$.
	Since we can assume that every edge is straight in one of the two drawings, the straight edges induce a 2-edge-coloring of $G$.
	Moreover, for each of the two drawings, the corresponding solution to Tamassia's flow network also induces an angle assignment for all vertices.
	Now assume that there exists a cycle $c$ in one color that encloses faces whose demands do not sum up to 0.
	Since the corresponding edges are straight in the drawing, this means that no flow passes $c$ in the solution to the flow network.
	But since the demands of the faces enclosed by $c$ do not sum up to zero, this implies that the solution to the flow network is invalid; a contradiction.
 
    Conversely, assume that $G$ admits a balanced 2-edge-coloring.
	For each color~$x$, we construct a feasible flow for the flow network $N_G$ that keeps the $x$-colored edges without bends as follows.
	Every vertex passes one unit of flow (i.e., one angle of $90^\circ$)
    to each incident face, the remaining flow produced by this vertex
    is distributed according to the angle assignment.
	Observe that after this, the amount of flow present at the nodes of $N_G$
    representing faces of~$G$ coincides with the demand of these faces as defined above.
    A cycle is $x$-colored if all its edges are $x$-colored. 
    Consider a cycle that is $x$-colored that does not enclose edges of any other cycle that is $x$-colored.
	As the demands of the faces enclosed by $C$ sum up to 0, we can find a feasible flow for the part of $N_G$ corresponding to the inside of $C$ without assigning flow to any edge of $N_G$ that bridges an $x$-colored edge of~$G$.
	Combining these flows for all such cycles and using the same argument for the region outside the outermost cycle, we obtain a feasible flow for $N_G$
    corresponding to an orthogonal representation that does not assign any bends to $x$-colored edges.
Therefore, $\un(G) \le 2$.
\end{proof}

\Counterexamples*
\label{pro:counterexamples*}

\begin{proof}
  By~\cref{Thm:unbentnumberlessthan3} we know that $\un(G)\le 3$.
  Suppose~$\un(G)\le 2$, then
  by~\cref{the:balancedColoringCharacterization} $G$ admits a balanced
  2-edge-coloring.
  
  Condition~(i). Since an acyclic graph
  contains at most $n-1$ edges, one color has a cycle. Since that
  cycle contains only triangles on the inside, it cannot be balanced
  (note that lower degree vertices do not help since they provide
  supply and not demand like triangles). Hence, $\un(G)=3$.

  Condition~(ii). Note that $G$ is 4-regular.
  Assume one color has two cycles. Then the two cycles are both
  balanced which is only possible, if they both contain $f$ on the
  inside. But then the symmetric difference of their insides is
  enclosed by one color while only containing triangles. Since $G$ is
  4-regular, this cannot be balanced. Hence, in each color there is at
  most one cycle. Therefore, each color can have at most $n$ edges and
  thus each color is spanning with exactly one cycle (and each edge is
  only assigned one color).
   
  Now consider the edges of $f$. If $f$ shares an edge $e$ with the
  external face, then the cycle of the color not used by $e$ does not
  contain $f$ and can thus not be balanced. Hence, each edge of $f$ is
  incident to a triangle since all inner faces except $f$ are
  triangles.  Hence, for each edge $e$ of $f$ that is not assigned to
  a color $c$, the cycle of $c$ contains the triangle incident to $e$.
   
  Assume that such a triangle $t$ shares two edges $e_1$, $e_2$ with
  $f$. Then $e_1,e_2$ share a vertex $v$ since a triangle only has
  three vertices and two non-adjacent edges have 4 in total. But then
  $v$ cannot have any other incident edges since such an edge would
  lead inside $f$ or $t$, which would contradict $f$ and $t$ being
  faces of a plane graph.
However, $v$ has four incident edges since $G$ is
  $4$-regular.  Hence, each edge of $f$ is incident to a distinct
  triangle, and, for each edge of $f$, one of the two monochromatic
  cycles contains another triangle.  By our analysis of face demands
  (and ultimately Tamassia's flow network~\cite{t-oeggwmnb-sjc}),
  a balanced cycle containing~$f$ also contains exactly $k-4$ triangles,
  where $k$ is the number of edges of $f$. Consequently, both
  monochromatic cycles together should contain exactly $2k-8$ triangles,
  while we know that they contain at least $k$.
  Since $k\le7$ and, then, $2k - 8 < k$, this is a contradiction.
  Hence, $\un(G)=3$.
\end{proof}

\section{Omitted Proofs of \cref{se:bends}}

\subsection{NP-Completeness of Determining the Unbent Bend Number}

Here, we provide the full proof that determining the unbent bend number
of a 4-plane graph is NP-complete.

We start by proving that the problem \textsc{LevelPlanarMonotone2in4Sat} is NP-complete.
In \textsc{2in4Sat}, the input is a Boolean formula in conjective normal form where every clause consists of four literals, 
and the task is to decide whether an assignment of variables to Boolean values
exists such that in every clause exactly two literals are true.
In the problem \textsc{LevelPlanarMonotone2in4Sat}, additionally, the variable--clause incidence graph is planar and admits a drawing where the variable vertices are
placed on a set of parallel lines, called \emph{levels}, and the clause vertices are
placed between the levels such that no edge crosses a level.
The levels divide the plane into rectangular strips.
The clauses that are represented in the first, third, etc.\ strip contain only negated literals while
the clauses represented in the second, fourth, etc.\ strip contain only unnegated literals.

\LevelTwoInFourSAT*
\label{lm:level2in4satNPh*}

\begin{proof}
    The membership in NP is obvious.  To show the NP-hardness,
    we reduce from the NP-hard problem \textsc{PlanarPositive2in4Sat}~\cite{kkw-cbvop-07},
    which is \textsc{2in4Sat} with the additional property that the incidence graph is planar and all literals are unnegated.
    As a first step, we reduce to \textsc{LevelPlanarPositive2in4Sat},
    where we require a leveling as in \textsc{LevelPlanarMonotone2in4Sat}
    but all literals are unnegated.
    Consider a planar straight-line drawing of the incidence graph,
    which exists and can be computed in polynomial time~\cite{fpp-hdpgg-90,DBLP:conf/soda/Schnyder90}.
    Rotate the drawing slightly such that no pair of a variable vertex
    and a clause vertex lies on the same y-coordinate.
    Take the set of horizontal lines going through
    the variable vertices as levels.\footnote{In the resulting leveling, every
        variable vertex may lie on its own level.}
    There may be edges that cross levels.
    We now show how to reduce the number of edge--level crossings
    by one in each iteration.
For an illustration, see \cref{fig:make-level-planar}.
    
    \begin{figure}[t]
        \begin{subfigure}[t]{.4 \textwidth}
            \centering
            \includegraphics[page=1]{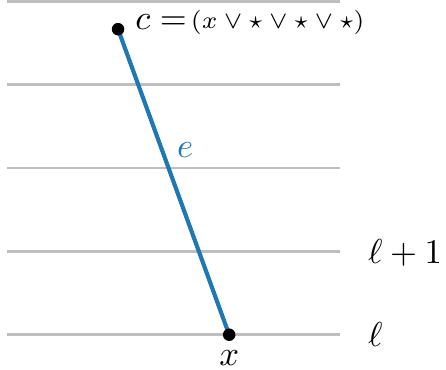}
            \caption{Long edge~$e$ intersects at least one level.}
            \label{fig:make-level-planar-a}
        \end{subfigure}
        \hfill
        \begin{subfigure}[t]{.5 \textwidth}
            \centering
            \includegraphics[page=2]{levelplanarsat}
            \caption{An equivalent instance where one edge--level
                intersection has been removed.}
            \label{fig:make-level-planar-b}
        \end{subfigure}
        \caption{Making a non-level planar incidence graph iteratively level planar.}
        \label{fig:make-level-planar}
    \end{figure}
    
    Let $e = xc$ be a ``long'' edge between a variable vertex~$x$ and a clause vertex~$c$.
    Without loss of generality, assume that $x$ lies
    on some level~$\ell$ and $c$ lies above level $\ell + 1$.
    Introduce two new variables $y$ and $z$
    and place the corresponding vertices on level~$\ell + 1$
    directly to the left and to the right, respectively, of the intersection point
    of~$e$ and level~$\ell + 1$.
    In the clause $c$, we replace every occurrence of $x$ by $z$,
    which we can do if we guarantee that $x = z$.
    Note that in the incidence graph,
    this means that we replace the edge~$e$ by a new edge~$e' = zc$.
    We can do this in a planar way because, starting in~$c$, we can route $e'$ such 
    that it follows the course of~$e$ until it is in the vicinity of~$z$.
    Next, we ensure that, in any \textsc{2in4Sat} solution,
    we have $x = z \ne y$.
    To this end, we add the clauses $(x \lor x \lor y \lor y)$
    and $(y \lor y \lor z \lor z)$.
    Note that, in the drawing of the incidence graph,
    the two corresponding clause vertices can be added
    directly below $y$ and $z$ without creating crossings.
    
    This iterative process yields an equivalent instance
    of \textsc{LevelPlanarPositive2in4Sat} where no edge crosses a level.
    Note that in any \textsc{2in4Sat} instance, we can flip
    all four literals in an arbitrary clause and get an equivalent instance.
    We do this for all clauses whose vertices lie in the
    first, third, etc.\ strip formed by pairs of consecutive levels.
    This yields an equivalent instance of \textsc{LevelPlanarMonotone2in4Sat}.
\end{proof}

\NPh*
\label{th:nph*}

\begin{proof}
    Containment in NP is clear.
    We show NP-hardness by reduction from the problem \textsc{LevelPlanarMonotone2in4Sat}.
    Let $\Phi$ be the given \textsc{2in4Sat} formula,
    let $m$ be the number of clauses in $\Phi$,
    let $n$ be the number of variables in $\Phi$,
    let $H$ be the given embedded (level-planar)
variable--clause incidence graph of $\Phi$,
    and let $\ell$ be the number of levels of~$H$.
    We now construct a plane 4-graph~$G$
    that follows the structure of the incidence graph~$H$.
    The general structure repeats at every level.
    \Cref{fig:nph-construction-app} illustrates the construction
    for one level of variable vertices; the non-negated clauses
    are above that level and the negated clauses are below.
    
    \begin{figure}[tbh]
        \centering
        \includegraphics[page=1]{nph}
        \caption{NP-hardness reduction: A layer of variable gadgets and the neighboring clause gadgets.
        Numbers inside faces (including the external face) indicate the demands of the faces.
        This figure is a copy of \cref{fig:nph-construction} for the reader's convenience.}
        \label{fig:nph-construction-app}
    \end{figure}
    
    \paragraph{Overview.}
    In $G$, there is a face for each variable vertex (\emph{variable gadget};
    see \cref{fig:nph-variable}),
    for each clause vertex (\emph{clause gadget};
    see \cref{fig:nph-clause}),
    and for each set of edges that are incident and above (below)
    the same vertex (\emph{tunnel gadget}).
    The gadgets are attached to each other according to~$H$.
    The gadgets are designed such that
    they need to receive specific (non-negative) flow
    values in Tamassia's flow model.
    They are surrounded by a set of faces with demand~0 called \emph{walls},
    which make it costly for flow to enter the gadgets
    unless it is through some \emph{door}.
    (All walls have ``thickness'' $2n$, that is,
    each unit of flow traversing a wall bends at least $2n$ edges.)
    We construct each face with a specific demand.
    Most faces have demand 0 (so they can be drawn without bends as they are),
    the clause gadgets have demand~2, and an extra face in each level
    has demand~1 (called \emph{level gadget}, shown on the right
    in \cref{fig:nph-construction-app}).
    These demands sum up to $C = 2 m + \ell$.
    The external face is the only face with a negative demand (namely, $-C$).
    In other words, the external face has supply~$C$
    and no other face supplies flow.
    We will guarantee that there is an unbent collection (with two drawings)
    and at most $k = 2 n C$ bends if and only if
    $\Phi$ is a yes-instance.
    The key property is that the flow needs to avoid walls,
    which are, in terms of bends, too expensive to traverse,
    and hence the flow must traverse the variable and tunnel gadgets to
    enter the clause gadgets and the level gadgets.
    Since every variable gadget has one door to enter and
    two exit doors to leave the gadget (the latter being single edges),
    in the one drawing it must take the one exit door and in the other
    drawing it must take the other exit door.
    This choice corresponds to an assignment of Boolean values to variables.
    
    We next describe the gadgets of the plane graph~$G$ in more detail.
    
    \begin{figure}[t]
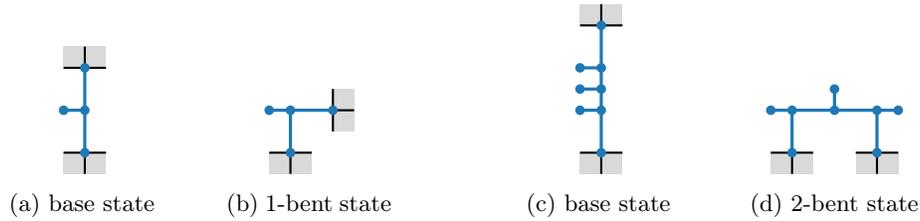

        \begin{subfigure}{.16 \textwidth}
            \centering
            \includegraphics[page=2]{nph}
            \caption{base state}
            \label{fig:valve:base}
        \end{subfigure}
        \hfill
        \begin{subfigure}{.18 \textwidth}
            \centering
            \includegraphics[page=3]{nph}
            \caption{1-bent state}
            \label{fig:valve:bent}
        \end{subfigure}
        \hfill\hfill
        \begin{subfigure}{.16 \textwidth}
            \centering
            \includegraphics[page=4]{nph}
            \caption{base state}
            \label{fig:valve:base3}
        \end{subfigure}
        \hfill
        \begin{subfigure}{.2 \textwidth}
            \centering
            \includegraphics[page=5]{nph}
            \caption{2-bent state}
            \label{fig:valve:bent3}
        \end{subfigure}
        \caption{(a,b): a 1-valve gadget, (c,d): a 3-valve gadget.  A $k$-valve gadget allows us to distribute $k$ integral units of flow from the face to its left to the face on its right (or vice versa), without bending any edge.}
        \label{fig:valve}
    \end{figure}

    \paragraph{Valve gadget.}
    A \emph{$k$-valve gadget} is a plane subgraph of $G$ that
    is a path of $k+2$ vertices where each of the
    $k$ inner vertices gets a leaf attached.
    Along the path, all leaves are
    attached on the same side.
    The path separates two faces~-- we call
    the one containing the degree-1 vertices \emph{source face}
    and the other face \emph{target face}.
    Since each inner vertex of a $k$-valve gadget has degree~3,
    it has one free angle, which can be assigned
    either to the source or to the target face.
    In the \emph{base state}, the target face receives all free angles of the inner vertices
    (see \cref{fig:valve:base,fig:valve:base3}), whereas
    in the \emph{$t$-bent state}, the source face receives
    $t$ free angles (and the target face receives
    $k-t$ free angles)
    from the inner vertices where $t \in [k]$
    (see \cref{fig:valve:bent,fig:valve:bent3}).
If $k$ is clear from the context or irrelevant,
    we sometimes call a $k$-valve gadget just valve gadget.
    For the sake of presentation, we assume that all
    valve gadgets are by default in the base state.
    
    All vertices in $G$
    (except for the inner vertices of the valve gadgets)
    have either degree~1 or~4.
    This means that there is no flexibility in where
    they assign their free angle:
    each degree-1 vertex assigns its three free angles to its only incident face
    and the degree-4 vertices do not have any free angles.
    Hence, we know for every face of~$G$ its precise demand
    (assuming all valve gadgets are in the base state).
    The demands of all faces sum up to zero\footnote{This is a property of all 4-plane graphs
        with respect to Tamassia's flow network}~--
    hence, in a proper flow network, there is
    an assignment of flow that fulfills the demands.
    This flow can be assigned to edges of~$G$,
    which corresponds to these edges being bent in an orthogonal drawing of~$G$.
    Now, when a $k$-valve gadget is in the $t$-bent state,
    this can be seen as $t$ units of flow
    being transferred from the source to the target face.
    Unlike flow going over edges, this does not
    create bends, however, at most $k$ units
    of flow can be transferred in this way
    at a $k$-valve gadget.
    
    \begin{figure}[t]
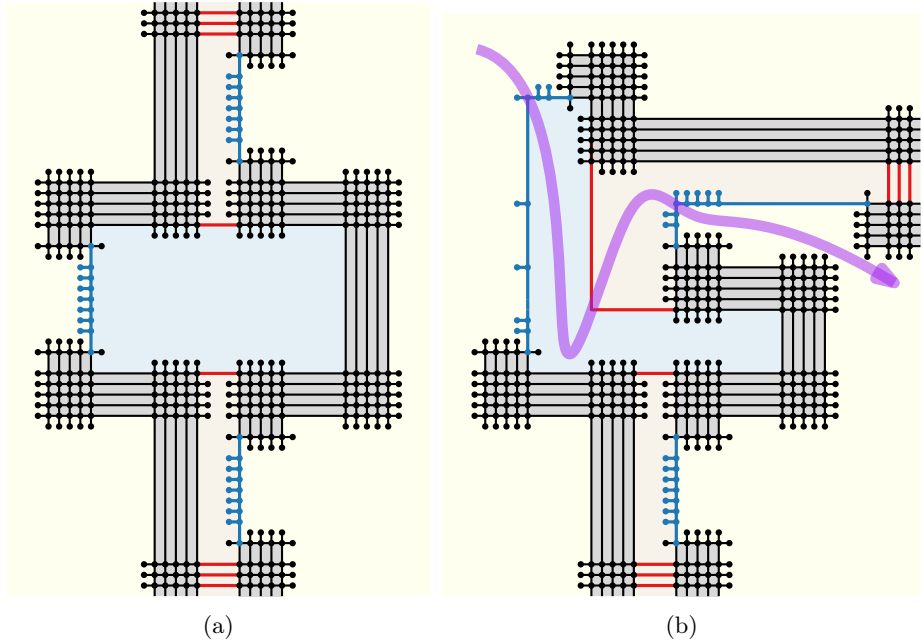

        \centering
        \begin{subfigure}[t]{.468 \textwidth}
            \centering
            \includegraphics[page=8]{nph}
            \caption{}
            \label{fig:nph-variable-base}
        \end{subfigure}
        \hfill
        \begin{subfigure}[t]{.522 \textwidth}
            \centering
            \includegraphics[page=11]{nph}
            \caption{}
            \label{fig:nph-variable-flowed}
        \end{subfigure}
        \caption{Variable gadget. In (a), no flow enters or leaves,
            that is, all valve gadgets are in the base state
            and all (red) exit doors and the inner (red) edges
            of the tunnel gadgets are unbent.
            In (b), one unit of flow (purple arrow)
            enters on the left, leaves via the (red) top exit door
            and leaves the tunnel gadget
            through the valve gadget on the top right side.
            No flow leads to any of the adjacent clause gadgets.}
        \label{fig:nph-variable}
    \end{figure}
    
    \paragraph{Variable gadget.}
    There are $n$ variable gadgets~-- one per variable.
    For an illustration of a variable gadget together
    with the attachment to the tunnel gadgets above and below, see
    \cref{fig:nph-variable}.
    A variable gadget has a central face (light blue background) that
    is bounded by walls except for an $(2m+1)$-valve gadget (shown in
    dark blue on the left)
    and two \emph{exit doors}, which are two single edges
    (shown in red incident to the light blue area).
    All faces have demand 0, however, to allow flow to connect
    the external face with the clause gadgets and the level gadgets
    without going through walls (which is too expensive in terms of bends),
    some flow must enter, and the same flow value must leave each variable gadget.
    The arrangement of variable gadgets (see \cref{fig:nph-construction-app})
    requires that flow from the external face or from the variable
    gadget to the left enters at the $(2m+1)$-valve gadget and leaves
    at the exit doors.
    Note that at most one exit door can be bent per drawing
    and, hence, all flow traversing the variable gadget leaves
    trough the same exit door in a drawing.
    On the other hand, the $(2m+1)$-valve gadget can be used
    in all drawings of the unbent collection as long
    as at most $t \le 2m + 1$ units of flow are involved
    (then it is in the $t$-bent state instead of the base state
    and does not create bends on edges).
    
    \paragraph{Tunnel gadget.}
    There are $2n$ tunnel gadgets~-- two per variable.
    For an illustration of tunnel gadgets
    (shown as slim tunnels with orange background),
    see \cref{fig:nph-construction-app}.
    For the connection to a variable gadget, see
    \cref{fig:nph-variable},
    and, for the connection to a clause gadget, see
    \cref{fig:nph-clause}.
    Each tunnel gadget connects an exit door of a variable gadget
    with all negated or with all unnegated clause gadgets
    such that this variable occurs in the corresponding clause.
    Like all of our gadgets, it is bounded by walls.
    Flow that enters the tunnel gadget can leave it immediately
    via a $(2m+1)$-valve gadget
    to a face leading to the next variable gadget or to a level gadget
    (described below; see the large yellow faces
in \cref{fig:nph-construction-app}).
    This way flow from the external face can reach all variable gadgets.
    The path to the clause gadget is intermitted
    by a set of edges (depicted as the red horizontal line segments
    in \cref{fig:nph-construction-app,fig:nph-variable}
    that are not the exit doors of variable gadgets).
    The number of these edges depends on which variable gadget is affected:
    a tunnel gadget of the $i$-th leftmost variable gadget
    on a level has $n-i$ such edges.
    In this way we ensure that every unit of flow reaching a clause gadget
    from the external face causes at least $n$ bends:
    $i$ bends for traversing $i$ variable gadgets and
    $n - i$ bends for traversing a tunnel gadget of
    the $i$-th variable gadget of some level.

    \paragraph{Level gadget.}
    There are $\ell$ level gadgets~-- one per level.
    For an illustration of a level gadget,
    see the rightmost (purple) face
    with a ``1'' in \cref{fig:nph-construction-app}.
    It is bounded by walls except for
    two doors with $n - s(j)$ edges each
    (shown in red on the left side of the level gadget)
    where $s(j)$ is the number of variables assigned to level~$j$.
    These doors connect the level gadget with the face
    following the $s(j)$-th variable gadget.
    Since it has a demand of~1,
    any flow coming from the external face effects
    at least $n$ bends~-- by going first through
    all variable gadgets of the level and
    then entering the level gadget via one of its two doors.
    The purpose of the level gadget is to
    bend an exit door of each variable gadget of this level
    if traversing walls is avoided.
    Otherwise, if, say, the rightmost variable gadget does
    not forward any flow to its adjacent clause gadgets
    in the first drawing, then it would not bend any
    exit door in the first drawing and could bend both
    exit doors in the second drawing.

    \begin{figure}[t]
        \centering
        \includegraphics[page=14]{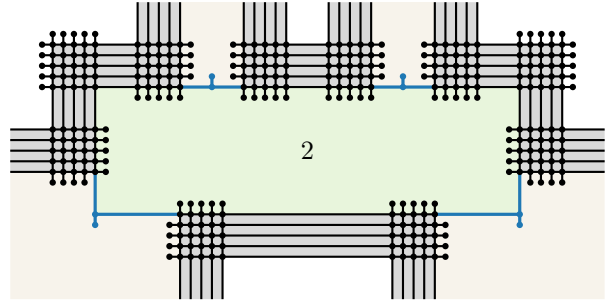}
        \caption{Clause gadget, whose 1-valve gadgets leading to the
            two variable gadgets above are in the base state,
            whereas the valve gadgets leading to the two
            variable gadgets below are in the 1-bent state.
            This corresponds to a clause that is twice satisfied.} \label{fig:nph-clause}
    \end{figure}
    
    \paragraph{Clause gadget.}
    There are $m$ clause gadgets~-- one per clause.
    For an illustration of a clause gadget together
    with the attachment to the tunnel gadgets, see
    \cref{fig:nph-clause}.
    A clause gadget is bounded by walls
    but has four doors connecting it to four
    tunnel gadgets via a 1-valve gadget.
    These four tunnel gadgets lead to the four
    corresponding variable gadgets of the clause.
    A clause gadget has a demand of~2,
    which can be satisfied without creating any
    bends if and only if exactly two of the 1-valve gadgets
    are in the 1-bent state.
    
    This completes the description of the 4-plane graph~$G$.
    Clearly, the size of $G$ is polynomial in~$\Phi$:
    the number of variable, clause, tunnel, and level gadgets
    is in $O(n + m)$ and each gadget including its walls
    has $O(n)$ vertices and edges.
    Thus, $G$ has as $O(n^2 + nm)$ edges and vertices
    and can be constructed in polynomial time.
    We are now ready to prove the correctness of our reduction:
    There is a satisfying truth assignment for $\Phi$
    if and only if $G$ admits an unbent collection
    with at most $k = 2 n C$ bends.
    
    \paragraph{Correctness.}    
    We first assume that there is a satisfying truth assignment for~$\Phi$.
    We show that then, there exists an unbent collection of size~2
    such that each drawing has $n C$ bends.
    Consider the flow network associated with~$G$.
    We specify two flow assignments, which in turn specify two
    orthogonal representations
of an unbent collection.
    For each clause, there are two paths from the
    external face, via each variable gadget whose corresponding variable
    satisfies this clause, to the corresponding clause gadget.
    Each such path causes $n$ bends as described before.
    For each level, there is a path from the external face,
    via all variable gadgets of the level, to the level gadget.
    Each such path causes $n$ bends as described before.
    Since every variable gadget delivers at most one unit
    of flow to a clause gadget, the 1-valve gadgets at the
    clause gadgets suffice
    to let the flow enter the clause gadgets
    without extra bends.
    Similarly, at most $2m + 1$ units of flow are needed per level,
    so the $(2m+1)$-valve gadgets at the variable and tunnel gadgets suffice.
    In the second drawing, we take the remaining paths
    from the external face to the clause gadgets~-- each one
    traversing the variable gadget whose corresponding variable
    does not satisfy the clause.
    Since $\Phi$ is a \textsc{2in4Sat} instance,
    there are precisely two such variables per clause.
    Clearly, this satisfies all demands of the faces.
    In each of the two drawings, we have $n C$ bends, where $C = 2 m + \ell$.
    It remains to argue that every edge is bent in at most one
    drawing of the unbent collection.
    In every variable gadget, either the top or bottom exit door
    bends depending on whether the corresponding variable
    is assigned to true or to false.
    Then, only inner edges of the tunnel gadget
    connecting to this exit door of the variable gadget are bent.
For the level gadget, there are two doors~-- one for each of the two drawings.
    No other edges, and in particular no edges of valve gadgets, are bent.
    
    Now, we assume that there is an unbent collection with at most
    $2 n C$ bends.  Clearly, there is no bend-free drawing, so the
    unbent collection has size at least two.  For each of the
    drawings, there is a corresponding flow network that satisfies the
    demands of the clause and the level gadgets by delivering all flow
    from the external face.  By construction, there is no path between
    the external face and any clause or level gadget that traverses
    less than $n$ edges in the flow network.  Such a path traverses a
    sequence of variable gadgets before it traverses a tunnel gadget
    or enters a level gadget.  Since the total demand of the inner
    faces is $C = 2 m + \ell$, each drawing has at least $n C$ bends.
    Since the total number of bends is $2 n C$, there are precisely
    two drawings with $n C$ bends each, and these bends lie along the
    paths described before.  Note that no edge is bent in both
    drawings, hence, the top and bottom exit door of each variable
    gadget is bent in exactly one of the two drawings~-- this yields
    a variable assignment: if, in the first drawing, the exit door
    of a variable gadget towards the clause gadgets
    corresponding to unnegated clauses is bent, we set the variable
    to true; otherwise, we set it to false.  At the clause
    gadgets, no edge is bent (since this would give more than~$n$
    bends along a path from the external face to the clause gadget)
    and, hence, in each of the two drawings exactly two of the 1-valve
    gadgets are in the 1-bent state.  Only those 1-valve gadgets that
    connect via a tunnel gadget to a variable gadget that is set to
    true (in the first drawing) or to a variable gadget that is set to
    false (in the second drawing) can be in the 1-bent state.
    As, in each of the two drawings,
    we have two such variable gadgets per clause gadget,
    the drawing of the variable gadgets corresponds
    to a satisfying truth assignment for~$\Phi$.
\end{proof}

\subsection{Star Arboricity of Bounded-Degree Graphs}

\Arboricity*
\label{prop:arboricity*}

\begin{proof}
    We express the assignment of the edges of $G$ to
    star forests via an edge coloring using at most $\Delta$ colors. 
    We color the edges greedily by processing the vertices one by one.
    For each vertex $v$, we go through its incident edges that have not yet been colored.  
    For each such edge, we use a color that has not been used for edges incident to~$v$ before. 
    Since there are $\Delta$ colors and $v$ has at most $\Delta$ incident edges, our coloring is such that only pairs of edges from~$v$ to vertices processed before~$v$ are assigned the same color. 
    Note that, if two adjacent edges $xy$ and $yz$ are assigned the same color $c$ in our coloring, then $y$ is the vertex that has been processed last among $\{x,y,z\}$. 
    This means that $x$ and $z$ cannot be incident to other edges of color~$c$. 
    For other edges of color $c$ incident to $y$, the same reasoning applies. 
    Thus, $xy$ and $yz$ are indeed part of a star in color $c$ centered at $y$.
    Hence, every color induces a star forest.
\end{proof}

\section{Ommitted Proofs of \cref{se:sub-cubic}}
\label{se-ap:sub-cubic}

In this section we describe how to compute an unbent collection having the total minimum number of bends. Our collection  consists of two drawings, which is the minimum possible number. 

Let $G$ be a plane triconnected cubic graph with $n$ vertices. Observe that computing an orthogonal representation of a graph with the minimum number of bends is equivalent to subdividing the edges of the graph the minimum number of times so that the graph admits an orthogonal representation without bends. Hence, our strategy will be to subdivide the edges of $G$ the minimum number of times so that the subdivided graph admits an orthogonal representation with no bend. We shall often call \emph{dummy} a subdivision vertex and we shall often use the expression ``place $h$ dummies along and edge'' to mean that the edges has been subdivided $h$ times. Also, in the following $f_{\ext}$ denotes the external face of $G$.

\subsection{Bad Cycles and Demanding Cycles}
\label{app-se:cubic-bad-cycles}
Given a cycle $C$, a \emph{leg} of $C$ is an edge external to $C$ and incident to a vertex of $C$. A \emph{$k$-legged cycle} $C$ of $G$ is a cycle such that: (i) $C$ has exactly $k$ legs (i.e., a $k$ legged cycle is not a $k'$-legged cycle if $k\not =k'$), and (ii)~no two legs share a vertex. Observe that, since $G$ is cubic, there is no pair of legs that can be incident to a same vertex along $C$.
Note that the external cycle of $G$ is 0-legged. The cycles having exactly $k$ legs such that these legs are all incident to the same vertex are called \emph{degenerate $k$-legged cycles}~\cite{DBLP:conf/soda/DidimoLOP20}. Observe that a degenerate $k$-legged cycle is not a  $k$-legged cycle, as (ii)~does not hold. 

A \emph{bad cycle} is a $k$-legged cycle where $k\le 3$ and such that less than $4-k$ dummies are placed along its edges. If a cycle is not a bad cycle, it is a \emph{good cycle}. For example: a 3-legged cycle is bad if no dummy is placed along an edge of it; the external face, which is a $0$-legged cycle, is a bad cycle if less than $4$ dummies are placed along its edges. The following lemma gives a sufficient and necessary condition for a graph to admit an orthogonal drawing without bends and also states that, in this case, there exists a liner-time algorithm computing an orthogonal representation of such graph with no bend.

\cubicbadcycles*
\label{app-le:cubic-badcycles*}

Let $e=uv$ be an edge of the external face~$f_{\ext}$ of $G$ and let $C$ be the cycle forming the external boundary of $G \setminus e$. Observe that, after placing one or more dummies along $e$, cycle $C$ becomes a (possibly degenerate) 2-legged cycle. See for example  \cref{app-fig:2leggedbydummies}. The following lemma gives a sufficient condition for such 2-legged cycles and for degenerate cycles to not exist.

\begin{figure}[t]
\centering
\includegraphics[]{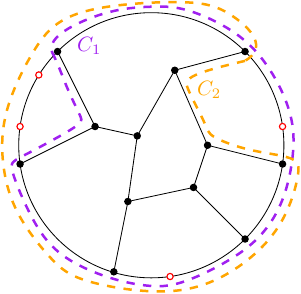}
  \caption{Placing dummy vertices makes cycles $C_1$ and $C_2$ 2-legged cycles. Cycle $C_2$ is degenerate. Both $C_1$ and $C_2$ are good cycles.}
  \label{app-fig:2leggedbydummies}
\end{figure}

\begin{lemma}[\cite{DBLP:conf/soda/DidimoLOP20}]
\label{app-le:2legged}
If we place  dummies along three distinct edges of $f_{\ext}$ or along two independent edges of $f_{\ext}$ so that no edge contains three dummies,
then the resulting graph has no bad degenerate cycles and no bad 2-legged cycles. 
\end{lemma}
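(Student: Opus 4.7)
The plan is to verify the lemma by a case analysis on which cycles of the subdivided graph~$G'$ can be $2$-legged or degenerate $2$-legged, and then to show that each such cycle receives sufficiently many dummies under either placement scheme. I would first record a baseline structural observation coming from triconnectivity: any cycle $C$ of $G$ has at least three edges joining $V(C)$ to the component of $G\setminus V(C)$ that lies on the $f_{\ext}$-side of $C$, since a cycle with only two such edges would expose a 2-vertex cut. Consequently, a cycle of $G'$ whose leg count is inherited unchanged from $G$ (i.e., that is unaffected by any dummy) cannot be $2$-legged. Therefore, the $2$-legged and degenerate $2$-legged cycles of $G'$ that we must worry about are exactly those that interact with some subdivided edge of~$f_{\ext}$.

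Next, I would classify these ``new'' $2$-legged cycles by how they pass through the dummies. The prototypical situation is a cycle $C'$ of $G'$ that traverses a dummy~$d$ on some edge $e=v_1v_2\in f_{\ext}$ as an internal vertex of $C'$, i.e., $C'$ contains the path $v_1\,d\,v_2$; the two remaining edges at $v_1$ and $v_2$ leave~$C'$ and form its legs, and if both of these legs reach the same dummy on another subdivided edge we obtain the degenerate configuration depicted in \cref{app-fig:2leggedbydummies}. A symmetric situation produces non-degenerate $2$-legged cycles. Because in either case $C'$ must use at least one subdivided $f_{\ext}$-edge, I can directly count dummies on~$C'$.

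Then, for each of the two placement schemes, I would verify that every such problematic $C'$ carries at least $4-k\ge 2$ dummies on its own edges. For scheme~A (three distinct $f_{\ext}$-edges each receiving a dummy), any cycle made $2$-legged by the construction above must contain at least two of the three subdivided edges—otherwise at most one of the legs would land on a dummy and the remaining leg would need to come from $G$'s original structure, contradicting the baseline observation. For scheme~B (two independent $f_{\ext}$-edges receiving dummies, none with three dummies), the independence of the two edges (they share no vertex) forces a cycle that becomes $2$-legged to use either both edges—thus collecting at least two dummies—or a single edge carrying both legs, in which case the ``no three dummies per edge'' condition combined with the planar structure around that edge forces at least two dummies to lie along~$C'$.

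The main obstacle will be ruling out ``sneaky'' $2$-legged cycles that evade this classification, in particular cycles whose effective leg count is reduced below three by the subdivision, through a chord-like identification of two legs at a single dummy. Excluding these requires a careful planar-embedding analysis around each subdivided $f_{\ext}$-edge and repeated use of triconnectivity of~$G$, together with the bound on the number of dummies per edge in scheme~B, to show that every legitimate $2$-legged cycle of $G'$ either vanishes entirely or carries $\ge 2$ dummies on its edges; the degenerate case is handled by the same argument, using that the shared outer endpoint must itself be a dummy lying on a subdivided $f_{\ext}$-edge.
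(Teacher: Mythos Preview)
This lemma carries a citation to Didimo, Liotta, Ortali, and Patrignani and the present paper gives no proof for it; it is invoked as a black box in the algorithmic section. There is therefore no ``paper's own proof'' against which to compare your sketch.

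As a standalone attempt, your plan has the right skeleton but never pins down the one structural fact that makes the count trivial. What you need is this: a cycle $C$ of the subdivided graph $G'$ corresponds, by suppressing the degree-$2$ dummies, to a cycle $C_G$ of $G$, and subdividing an edge can only \emph{increase} the number of legs of a cycle (an external chord of $C_G$ turns into two legs, everything else is unchanged). Hence if $C$ is $2$-legged in $G'$ then $C_G$ already had at most two external incidences in $G$, which by $3$-connectivity forces the exterior of $C_G$ to consist of a single edge $e\in f_{\ext}$. Every $2$-legged or degenerate $2$-legged cycle of $G'$ is therefore the external boundary of $G\setminus e$ for some subdivided $e$, and in particular contains \emph{every other} edge of $f_{\ext}$. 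The dummy count is then immediate: under the three-edge placement $C$ contains the two remaining subdivided edges and hence at least two dummies; under the two-independent-edge placement $C$ contains the other subdivided edge, and the constraint that no edge carries three dummies (together with the four dummies required on the $0$-legged external cycle) forces two dummies there.

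Your write-up argues instead by tracking how legs ``land on dummies'', which produces the unproved assertions in the scheme-A and scheme-B paragraphs (``must contain at least two of the three subdivided edges'', ``forces at least two dummies to lie along~$C'$'') and then the acknowledged difficulty with ``sneaky'' cycles. Replacing that case analysis with the classification above removes the gap.
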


See for example \cref{app-fig:2leggedbydummies}, where the conditions of \cref{app-le:2legged} hold and $C_1$ and $C_2$ are good cycles, as they both have two dummies along their edges. Thanks to \cref{app-le:2legged}, we can focus our attention mainly on 3-legged cycles and observe that the other bad cycles (degenerate or 2-legged) are satisfied by the way we place dummies.

Let $C$ be a 3-legged cycle of $G$. We say that the three paths of $C$ connecting the vertices of the cycle incident to the legs are the \emph{contour paths} of $C$. A face incident to a leg of a 3-legged cycle is called \emph{leg face}. Note that a 3-legged cycle has exactly three leg faces. Cycles $C_1,...,C_8$ of  \cref{app-fig:cubic-cycles-a} are all 3-legged cycles. Let $G_C$ be the subgraph of $G$ induced by $C$ and its interior vertices and let $n_C$ the number of vertices of $G_C$. We say that a cycle $C$ \emph{contains} another cycle $C'$ if $C$ is a subgraph of $G_C$. The \emph{genealogical tree} $T(C)$ of a 3-legged cycle $C$ is a tree such that: The root of the tree is $C$; the children of $C$ are the 3-legged cycles of $G_C$ such there is no other cycle $C'\subset G_C$  containing them. \cref{app-fig:cubic-cycles-a} shows $T(C_1)$. Two 3-legged cycles \emph{intersect} if they have an edge in common, but there is no containment relationship between them.

Perform a bottom-up traversal of a genealogical tree $T(C)$ and let $C'$ be the currently visited 3-legged cycle. A \emph{red-green coloring} of the three contour paths of $C'$ is defined by applying the following rules. 

\smallskip\noindent
{\sffamily\bfseries Rule~1:} If no contour path of $C'$ includes a green contour path of one of its children as a sub-path, color green all contour paths of $C'$.

\noindent
{\sffamily\bfseries Rule~2:} Otherwise, color green every contour path of $C'$ that
includes a green contour path of one of its children as a sub-path; color red any other contour path of $C'$.

\smallskip

Cycle $C'$ is \emph{demanding} if its red-green coloring obeys Rule~1 and \emph{non-demanding} otherwise. For an example of application of the coloring rule, see \cref{app-fig:cubic-cycles-a}.

Let $C$ be a cycle incident to the external face $f_{\ext}$, i.e., sharing at least an edge with $f_{\ext}$. The \emph{twin} of $C$, denoted by~$C^t$, is the 3-legged cycle having the same legs of $G$. Notice that also $C^t$ is a cycle incident to $f_{\ext}$. Let $I(G)$ be the set of demanding cycles of $G$ that are intersecting.

\begin{lemma}[\cite{DBLP:conf/soda/DidimoLOP20}]
\label{app-le:cubic-only_intersecting}
Let $G$ be such that $I(G)\not= \emptyset$, let $C\in I(G)$, let $C^t$ be its twin,  let $e_1$ be any edge of $C$, and $e_2$ be any edge of $C^t$. Then: (1)~$C\in I(G)$ is incident to $f_{\ext}$; (2)~if both $e_1$ and $e_2$ are edges of $f_{\ext}$, any other cycle of $I(C)$ passes through either $e_1$ or $e_2$; (3)~$D_{\ext}(G)=\emptyset$ and vice versa, i.e., if $D_{\ext}(G)\not =\emptyset$ then we have $I(G)= \emptyset$.
\end{lemma}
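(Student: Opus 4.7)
The plan is to exploit the Jordan curve theorem together with the local structure forced by $G$ being cubic and triconnected. For any 3-legged cycle $C$, write $\mathrm{int}(C)$ for the open bounded region it encloses in the plane. Two cycles $C, C'$ are intersecting exactly when they share at least one edge while neither $\mathrm{int}(C) \subseteq \mathrm{int}(C')$ nor the reverse inclusion holds. Since $G$ is cubic, $C \cap C'$ is a disjoint union of maximal shared paths. At each endpoint $v$ of such a path, the two non-shared incident edges (one on $C$, one on $C'$) must lie on opposite sides of the shared path, because $\deg v = 3$ leaves no third option. I would use this local ``divergence'' picture as the workhorse throughout.

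For (1), I would argue by contradiction: assume some $C \in I(G)$ is not incident to $f_{\ext}$, and let $C' \in I(G)$ intersect it. Triconnectivity supplies a well-defined enclosing cycle $C^{\circ}$ around $C$, so that $\mathrm{ext}(C) \cap G$ lies inside $G_{C^\circ}$. Applying the divergence picture at each maximal shared sub-path, every excursion of $C'$ into $\mathrm{ext}(C)$ must re-meet $C$ inside $G_{C^\circ}$. Pushing this through all shared sub-paths and using that $G$ has no small separators, one forces $C'$ to lie entirely on one side of $C$, which yields containment and contradicts intersection. For (2), by (1) both $C$ and $C^t$ are incident to $f_{\ext}$; since $C^t$ shares all three legs with $C$ but encloses the ``other side'', the union $C \cup C^t$ together with the three common legs carves out a theta-like configuration meeting $f_{\ext}$ only along a few edges. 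For any further $C'' \in I(G)$ intersecting $C$ or $C^t$, (1) gives that $C''$ is also incident to $f_{\ext}$, and the divergence picture applied at the shared edge forces this edge to lie on $f_{\ext}$; a short case check against the twin configuration then pins the shared edge to be exactly $e_1$ or $e_2$. For (3), if some $C_0 \in D_{\ext}(G)$ exists, then Rule~1 at the root of $T(C_0)$ together with the fact that one leg face of $C_0$ is $f_{\ext}$ forces a nested chain of demanding descendants that leaves no room for any pair in $I(G)$; conversely, the overlap structure produced by a pair in $I(G)$ is incompatible with a demanding cycle having $f_{\ext}$ as a full leg face.

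I expect the main obstacle to be not part~(1), which is a relatively clean planar-topology argument, but the combinatorial case analysis needed for (2) and especially for (3). In particular, relating the red--green coloring rules on the genealogical tree to the geometric position of a cycle relative to $f_{\ext}$ requires carefully unwinding how Rule~1 and Rule~2 propagate, and this has to be combined with the divergence picture near $f_{\ext}$~-- which is itself not a 3-legged cycle and therefore does not directly fit into the uniform framework used for (1) and (2). The cleanest way I see to handle this is to first classify, up to symmetry, the few local configurations of $C \cup C^t$ around $f_{\ext}$, and then verify the red--green rules hold or fail case by case.
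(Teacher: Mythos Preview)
The paper does not contain a proof of this lemma at all: it is imported verbatim from Didimo, Liotta, Ortali, and Patrignani~\cite{DBLP:conf/soda/DidimoLOP20} (and, via that paper, from the earlier work of Rahman, Nakano, and Nishizeki). So there is nothing in the present paper to compare your attempt against; the lemma is treated as a black box here.

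As for your proposal itself, it is not yet a proof but a sketch of an approach, and several steps would not go through as stated. In~(1), the sentence ``triconnectivity supplies a well-defined enclosing cycle $C^\circ$'' is not justified: a 3-legged cycle not incident to $f_{\ext}$ need not sit inside any single 3-legged cycle $C^\circ$, and even if it did, the claim that every excursion of $C'$ into $\mathrm{ext}(C)$ forces $C'$ to end up on one side of $C$ does not follow from the local divergence picture alone (two 3-legged cycles can share more than one maximal sub-path, and the alternation pattern is precisely what makes them intersecting rather than nested). The actual argument in the source uses the specific leg structure: two intersecting 3-legged cycles must share exactly one contour path and one leg, and the remaining legs then force both to touch~$f_{\ext}$. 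In~(2), your ``short case check'' hides the real content; one needs to show that any further intersecting demanding cycle must share a contour path with $C$ or with $C^t$, and that this shared path is the one along $f_{\ext}$ containing $e_1$ or $e_2$. In~(3), the claim that a cycle in $D_{\ext}(G)$ ``forces a nested chain of demanding descendants that leaves no room for any pair in $I(G)$'' is not correct as a mechanism: the actual argument is that if $D_{\ext}(G)\neq\emptyset$ then the maximal 3-legged cycles incident to $f_{\ext}$ are pairwise non-intersecting (they are nested or disjoint), which directly rules out $I(G)\neq\emptyset$; the red--green rules on $T(C)$ play no role here. If you want to reconstruct a full proof, the cited paper is the place to look.
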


We use the following notations: $b(G)$ denotes the minimum number of dummies that $G$ needs to have no bad cyle;  $D(G)$ is the set of demanding cycles of $G$ that are pairwise non-intersecting; $D_{\ext}(G)$ is the subset of $D(G)$ whose cycles are incident to the external face $f_{\ext}$ (i.e., such that they have $f_{\ext}$ as one of their leg faces). 
Note that, by definition (see Rule~1), the demanding cycles are edge disjoint; also, recall that the external cycle is a 0-legged cycle and consequently it requires four dummies to not be a bad cycle. It follows that it is necessary to place a dummy in every demanding cycle and four dummies in the external cycle in order to have no bad cycle. 
The following lemma gives the value of $b(G)$ as a function of $D(G)$ and $D_{\ext}(G)$.

\begin{lemma}[\cite{DBLP:conf/soda/DidimoLOP20,DBLP:journals/jgaa/RahmanNN03}]
\label{app-le:cubic-fixemb-bends}
$b(G)= |D(G)|+4-\min\{4, |D_{\ext}(G)|\}$.
\end{lemma}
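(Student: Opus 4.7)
The plan is to prove the lemma by establishing matching lower and upper bounds on $b(G)$. For the lower bound, the key observation is that every demanding cycle $C \in D(G)$ is a bad 3-legged cycle unless at least one dummy is placed on one of its edges, and the cycles in $D(G)$ are pairwise non-intersecting (hence pairwise edge-disjoint), so dummies cannot be shared between two cycles of $D(G)$. This forces at least $|D(G)|$ dummies. Independently, the external cycle is $0$-legged and therefore needs four dummies along its edges. The only way these two requirements can overlap is when a demanding cycle in $D_{\ext}(G)$ has an edge on $f_{\ext}$ and the mandatory dummy for $C$ is placed on such an edge, simultaneously satisfying one of the four required dummies for $f_{\ext}$. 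Since at most four such simultaneous contributions can be exploited, the lower bound reads $b(G) \ge |D(G)| + 4 - \min\{4, |D_{\ext}(G)|\}$. To make this rigorous I would verify that any cycle in $D_{\ext}(G)$ does in fact contain an edge of $f_{\ext}$ (this follows from $f_{\ext}$ being a leg face, so two consecutive legs of $C$ lie on $f_{\ext}$ and the portion of $C$ between them is a contour path lying on $f_{\ext}$).

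For the upper bound, the plan is a constructive placement. First I would select up to four cycles of $D_{\ext}(G)$ and place, for each chosen $C$, a single dummy on an edge of $C$ that is shared with $f_{\ext}$. If $|D_{\ext}(G)| < 4$, I would place additional dummies on edges of $f_{\ext}$, respecting the conditions of \cref{app-le:2legged} (three distinct edges, or two independent edges, no edge gets three dummies), to reach four dummies on $f_{\ext}$ and thereby certify that no degenerate cycle and no $2$-legged cycle is bad. Finally, for each remaining cycle in $D(G) \setminus D_{\ext}(G)$, I would place one dummy on an arbitrary edge of it. This produces exactly $|D(G)| + 4 - \min\{4,|D_{\ext}(G)|\}$ dummies.

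The main obstacle is showing that this placement leaves \emph{no} bad $3$-legged cycle, not just no bad cycle in $D(G)$. I would argue this by induction along the genealogical tree $T(C)$ of each $3$-legged cycle $C$ of $G$, using the red-green coloring. For a non-demanding cycle $C'$, Rule~2 ensures that at least one contour path of $C'$ is green, meaning it contains a green contour path of a child of $C'$. Pushing this down recursively, one reaches a demanding descendant $C''$, which by construction received its own dummy; that dummy lies on an edge of $C''$ which, by the green-path propagation, is also an edge of $C'$. Therefore $C'$ contains a dummy and is good. For the cycles in $I(G)$ (intersecting demanding cycles), \cref{app-le:cubic-only_intersecting} ensures they are all incident to $f_{\ext}$ and that $D_{\ext}(G) = \emptyset$ in that regime, so the dummies placed on $f_{\ext}$ suffice to handle them as well. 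Combining the two bounds yields equality, and this propagation argument via the genealogical tree and the red-green coloring is the delicate part that I expect to take the most care.
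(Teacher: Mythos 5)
The paper itself does not prove this lemma at all; it imports it from the cited works of Rahman--Nishizeki--Naznin and Didimo--Liotta--Ortali--Patrignani, so any complete proof would be new material. Your lower-bound half is essentially sound (modulo two small points: ``pairwise non-intersecting hence pairwise edge-disjoint'' is not literal from the paper's definition of intersecting, since containment with shared edges counts as non-intersecting~-- you need the paper's remark that demanding cycles are edge-disjoint by Rule~1~-- and the overlap count needs the slightly more careful accounting that each cycle of $D_{\ext}(G)$ can absorb at most one of the four external dummies because the cycles are edge-disjoint). The genuine gap is in the upper bound. Placing one dummy on an \emph{arbitrary} edge of each internal demanding cycle does not make all 3-legged cycles good: a cycle is good only if dummies lie \emph{on its own edges}, so a non-demanding 3-legged cycle $C'$ containing a demanding cycle $C''$ is helped only if the dummy of $C''$ sits on the portion of $C''$ shared with $C'$. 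Your propagation step claims ``that dummy lies on an edge of $C''$ which, by the green-path propagation, is also an edge of $C'$,'' but the propagation only shows that \emph{some} green contour path of $C''$ lies on $C'$; the dummy you placed may be on a different contour path. The paper's own running example exhibits the failure: in \cref{app-fig:cubic-cycles-a}, $C_5$ is demanding and its only interesting contour path is the single edge it shares with the non-demanding cycle $C_4$; if the dummy for $C_5$ goes on any other edge of $C_5$, then $C_4$ carries no dummy and stays bad. This is exactly why the cited machinery places dummies on \emph{interesting} contour paths (see the unnumbered lemma in \cref{app-subse:cubic-lower}) and why one must also argue that every demanding cycle has such a path; none of this is supplied by your sketch, and it is the technical heart of the statement.

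Two further, smaller defects in the same construction. First, when $|D_{\ext}(G)|>4$ you give dummies only to the four selected cycles of $D_{\ext}(G)$ and to $D(G)\setminus D_{\ext}(G)$, so the unselected cycles of $D_{\ext}(G)$ receive none; this is easily repaired (give every cycle of $D_{\ext}(G)$ its dummy on its exposed contour path, which is interesting by \cref{app-le:cubic_interestingexternal}; the count is unchanged since the formula caps the saving at four). Second, when intersecting demanding cycles exist ($I(G)\neq\emptyset$, hence $D_{\ext}(G)=\emptyset$ by \cref{app-le:cubic-only_intersecting}), the four external dummies cannot be placed merely ``respecting \cref{app-le:2legged}'': by \cref{app-le:cubic-only_intersecting} they must be placed on an $f_{\ext}$-edge of some $C\in I(G)$ and on an $f_{\ext}$-edge of its twin $C^t$, since the intersecting demanding cycles are not in $D(G)$ and receive no dummy elsewhere; an arbitrary admissible placement on $f_{\ext}$ can leave such a cycle bad. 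In short, the overall two-sided plan is the right shape, but the constructive half silently assumes precisely the statements (interesting contour paths suffice, and how to choose the external dummies) that the cited works prove.
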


\begin{figure}[t]
  \begin{subfigure}{0.48\textwidth}
    \centering
    \includegraphics[width=0.65\textwidth,page=19]{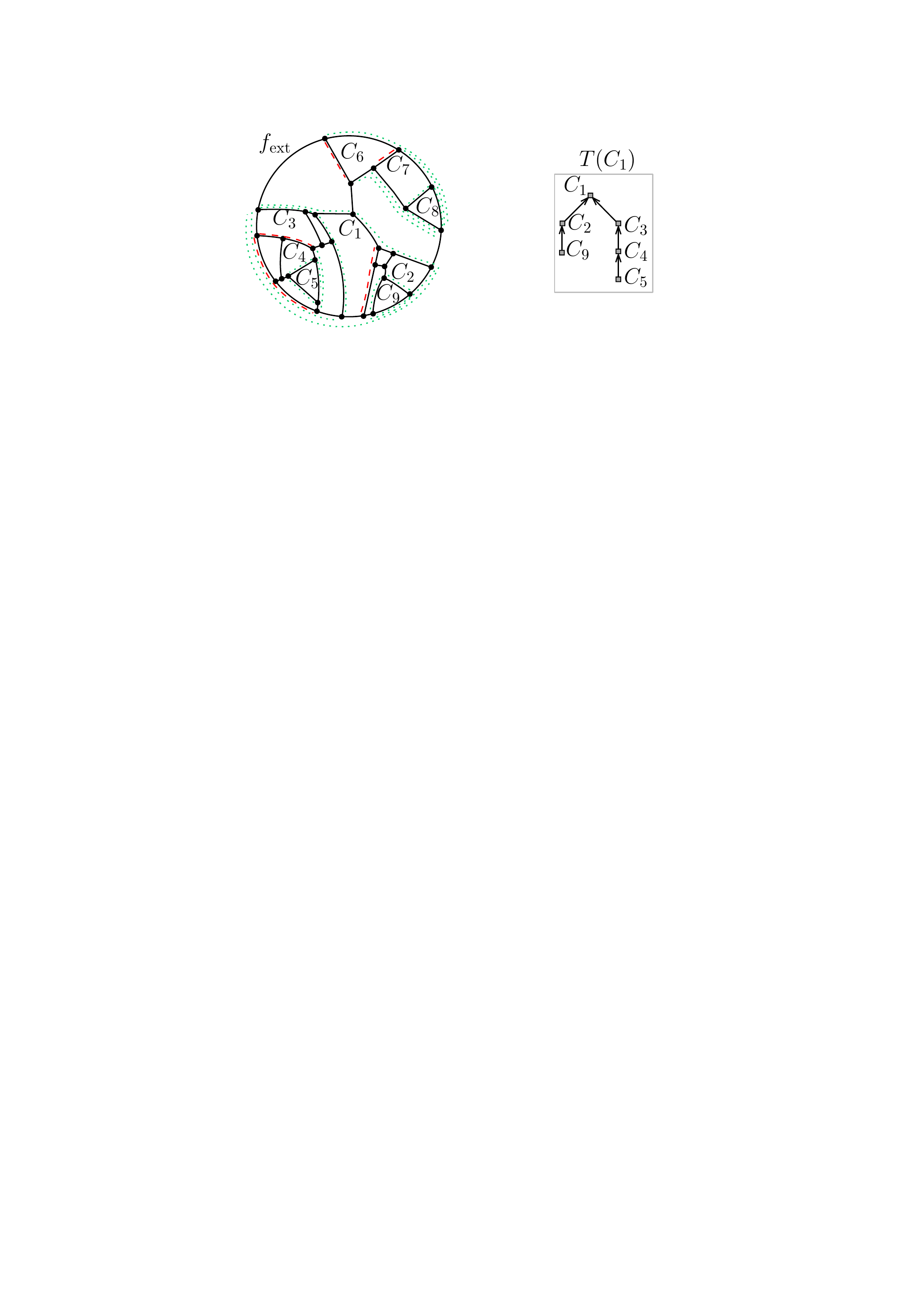}
    \subcaption{}
     \label{app-fig:cubic-cycles-a}
  \end{subfigure}
  \begin{subfigure}{0.48\textwidth}
    \centering
    \includegraphics[page=2]{cubic-cycles-new}
    \subcaption{}
    \label{app-fig:cubic-cycles-b}
  \end{subfigure}
  \hfill
  \caption{(a) Graph $G$, its 3-legged cycles, and the genealogical tree $T(C_1)$. We have $D(G)=\{C_3,C_5,C_8,C_9\}$, $D_{\ext}(G)=\{C_3,C_8,C_9\}$. (b) A drawing of $G$ with no bends and $b(G)$ dummies, where $b(G)=|D(G)|+4-\min\{4, |D_{\ext}(G)|\}=5$.}
  \label{app-fig:cubic-cycles}
\end{figure}

See for example \cref{app-fig:cubic-cycles-b}, which depicts a orthogonal drawing with no bends and with minimum dummies of the graph in  \cref{app-fig:cubic-cycles-a}, where dummies are represented with a cross. If dummies are removed with bends the drawing has the minimum number of bends.

\subsection{A Better Lower Bound for tbn(G)}
\label{app-subse:cubic-lower}

In this section we refine a lower bound for $tbn(G)$ proving \cref{app-le:cubic-lower3}. In \cref{app-subse:cubic-alg} we prove that such lower bound is tight. Before proving \cref{app-le:cubic-lower3} we prove and observe intermediate results. 

Observe that, for each unbent collection $\mathcal{C}$ of $G$, $|\mathcal{C}|\ge 2$, as the external face of $G$ is a 0-legged cycle, and by \cref{app-le:cubic-badcycles}, we have to place at least four dummies in $G$. We introduce the following corollary of \cref{app-le:cubic-fixemb-bends}.

\begin{corollary}
\label{app-co:cubic-lower1}
$\tbn(G)\ge 2|D(G)|+8-\min\{8, \ 2|D_{\ext}(G)|\}$
\end{corollary}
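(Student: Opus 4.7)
The plan is to derive the bound as a straightforward consequence of \cref{app-le:cubic-fixemb-bends}. Two ingredients are needed: (i)~every single drawing in any unbent collection of~$G$ contains at least $b(G)$ bends, and (ii)~every unbent collection of~$G$ has size at least~$2$. Ingredient~(ii) has just been observed immediately before the statement of the corollary, using that the external face of~$G$ is a $0$-legged cycle and therefore, by \cref{app-le:cubic-badcycles}, requires at least four dummies to avoid being bad, so no single drawing of~$G$ can be bend-free.

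For ingredient~(i), I would appeal to the correspondence between bends and subdivisions stated at the beginning of \cref{se:sub-cubic}: an orthogonal drawing of~$G$ with $k$ bends corresponds exactly to a subdivision of~$G$ by $k$ dummy vertices that admits a bend-free orthogonal representation. Combined with \cref{app-le:cubic-badcycles}, such a subdivision exists if and only if it has no bad cycle. Since $b(G)$ is defined as the minimum number of dummies achieving this, every drawing in any unbent collection must contribute at least $b(G)$ bends to the total count.

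Multiplying these two ingredients gives
\[
  \tbn(G) \;\ge\; 2\,b(G) \;=\; 2|D(G)| + 2\bigl(4 - \min\{4, |D_{\ext}(G)|\}\bigr),
\]
where the second equality substitutes the formula from \cref{app-le:cubic-fixemb-bends}. Pushing the factor~$2$ inside the minimum via the identity $2\min\{4, x\} = \min\{8, 2x\}$ rewrites the right-hand side as $2|D(G)| + 8 - \min\{8, 2|D_{\ext}(G)|\}$, which is exactly the claimed bound.

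There is no genuine obstacle in this argument; the whole corollary amounts to multiplying the single-drawing lower bound of \cref{app-le:cubic-fixemb-bends} by the lower bound of~$2$ on the collection size. The only small point to verify carefully is the elementary identity $2\min\{4, x\} = \min\{8, 2x\}$ used to bring the constant inside the minimum.
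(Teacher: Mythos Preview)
Your argument is correct and is exactly the intended derivation: the paper presents this statement as an immediate corollary of \cref{app-le:cubic-fixemb-bends} together with the observation (made just before the corollary) that $|\mathcal{C}| \ge 2$, so the bound $\tbn(G) \ge 2b(G)$ follows at once, and your rewriting via $2\min\{4,x\} = \min\{8,2x\}$ is the obvious simplification.
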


Before introducing the next intermediate result, which is \cref{app-le:cubic-lower2}, we need to define more concepts.

\smallskip 
Let $C$ be a demanding cycle and let $P$ be a contour path of $C$. Path $P$ is \emph{interesting} if it is a sub-path of some contour paths in all non-demanding cycles of $G$ containing $C$ and sharing some edges with $C$. See \cref{app-fig:cubic-interesting-a} and consider $C$, whose legs are $\{e_1,e_2,e_3\}$; $C'$, whose legs are $\{e_1,e_4,e_5\}$, and $C''$, whose legs are $\{e_6,e_7,e_8\}$. Cycle $C''$ contains $C'$, and $C'$ contains $C$. Cycle $C'$ is demanding and its contour path $P$ is interesting, as it is part of both $C'$ and $C''$. On the other hand, $\tilde{P}$ is not interesting, as it is part of $C'$ but not of $C''$.  

\begin{lemma}~\cite{DBLP:conf/soda/DidimoLOP20,DBLP:journals/jgaa/RahmanNN03}.
$G$ has no bad 3-legged cycle if every demanding cycle contains a dummy along an edge of an interesting contour path.
\end{lemma}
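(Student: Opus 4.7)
The plan is to verify that, under the hypothesis, every 3-legged cycle carries at least one dummy on one of its edges; recall that a 3-legged cycle $C$ is bad precisely when it has fewer than $4-3=1$ dummies. If $C$ is itself a demanding 3-legged cycle, the conclusion is immediate: an interesting contour path of $C$ is by definition a contour path of $C$, which is a path of edges of $C$, so any dummy placed on an edge of an interesting contour path of $C$ sits on $C$ itself and $C$ is good.

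The real work is the case where $C$ is a non-demanding 3-legged cycle. My plan is to descend through the genealogical tree rooted at $C$ along a green contour path until reaching a demanding descendant. Since $C$ is non-demanding, Rule~2 has been applied, so some contour path $P$ of $C$ is green because it contains, as a sub-path, a green contour path $P_1$ of a child $C_1$ of $C$. If $C_1$ is demanding, set $C' := C_1$ and $P' := P_1$; otherwise iterate, with $P_1$ playing the role of $P$ and a child $C_2$ of $C_1$ playing the role of $C_1$ (such a grandchild exists because $P_1$ can only be green if it in turn contains a green contour path of some child of $C_1$). Because the genealogical tree is finite and its leaves are demanding (Rule~1 applies vacuously to cycles with no children), the descent terminates at a demanding descendant $C'$ with a contour path $P'$ that is a sub-path of a contour path of $C$, and hence consists of edges of $C$.

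To conclude, I would combine this descent with the definition of \emph{interesting} and the hypothesis of the lemma. By construction $C$ is non-demanding, contains $C'$, and shares edges with $C'$ (namely, the edges of $P'$), so any interesting contour path of $C'$ is, by definition, contained in $C$. The hypothesis places a dummy on some interesting contour path of $C'$, and this dummy therefore lies on an edge of $C$, so $C$ is good. The main obstacle is ensuring that $C'$ actually admits at least one interesting contour path (otherwise the hypothesis applies vacuously at $C'$ and no dummy is placed there); the key subclaim is that the path $P'$ produced by the descent is itself interesting. This in turn should follow from a structural invariant of the red–green coloring: a green contour path emerging from the bottom-up application of Rule~2 is precisely one that must be shared by every non-demanding cycle containing the corresponding descendant and sharing edges with it. Verifying this invariant cleanly, by induction along the genealogical tree, is the crux of the argument.
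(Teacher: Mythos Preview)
The paper does not provide its own proof of this lemma; it is quoted from the cited references \cite{DBLP:conf/soda/DidimoLOP20,DBLP:journals/jgaa/RahmanNN03} without argument, so there is no in-paper proof to compare against.

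Your strategy is the right one and matches the spirit of the original sources. The demanding case is immediate, and for a non-demanding 3-legged cycle $C$ you descend along green contour paths in the genealogical tree to a demanding descendant $C'$ with a contour path $P'\subseteq C$. Since $C$ is then a non-demanding cycle that contains $C'$ and shares the edges of $P'$ with $C'$, the very definition of ``interesting'' forces every interesting contour path of $C'$ to lie on $C$; the dummy promised by the hypothesis therefore sits on $C$, and $C$ is good. This part of your argument is clean and complete.

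You correctly isolate the one residual issue: the step above requires $C'$ to possess at least one interesting contour path. Note first that under the natural reading of the hypothesis (``for every demanding cycle there \emph{is} a dummy on an interesting contour path''), a demanding cycle without interesting contour paths would make the hypothesis unsatisfiable, so the implication would be vacuous and there is no logical gap in the lemma as stated. What you are really after is the structural fact behind why the hypothesis is always achievable, namely that every demanding cycle has an interesting contour path. Your suggestion that $P'$ itself is interesting, to be verified by an inductive invariant along the tree, is on target but you stop short of carrying it out. The missing ingredient is that whenever an ancestor in the genealogical tree shares an edge with $C'$, it in fact shares a full contour path of $C'$; once you have this, the (linearly ordered) ancestors of $C'$ sharing edges with $C'$ share a nested family of contour paths of $C'$, and the topmost such ancestor pins down an interesting one. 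Without establishing this, your ``crux'' remains a promissory note rather than a proof.
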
 

A demanding cycle $C\in D(G)$ is \emph{expensive} if and only if  the following two properties hold (refer to  \cref{app-fig:cubic-interesting-a} for an example).   

\begin{enumerate}[(C1)]
\item \label{enum:C1}
$C$ has exactly one interesting contour path $P$ and there exists at least a non-demanding 3-legged cycle $C'$ such that
$C'$ contains $C$, share edges with $P$ and with no other demanding cycle that it contains.
\item \label{enum:C2} 
Such contour path $P$ of $C$ is formed by an edge. 
\end{enumerate}

\begin{figure}[t]
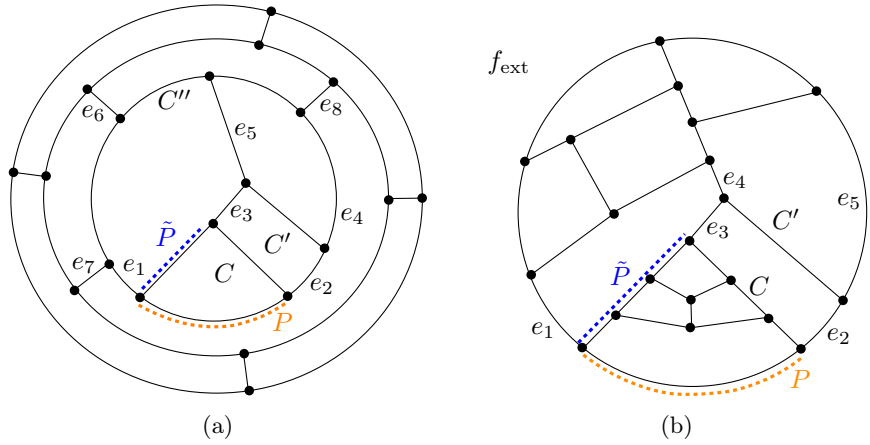

  \begin{subfigure}{0.48\textwidth}
    \centering
    \includegraphics[page=5]{cubic-cycles-new}
    \subcaption{}
     \label{app-fig:cubic-interesting-a}
  \end{subfigure}
  \hfil
  \begin{subfigure}{0.48\textwidth}
    \centering
    \includegraphics[page=6]{cubic-cycles-new}
    \subcaption{}
    \label{app-fig:cubic-interesting-b}
  \end{subfigure}
  \hfill
  \caption{(a) Illustration of the concept of interesting contour path and expensive demanding cycle; (b)~illustration of the concept of short demanding cycle.}
  \label{app-fig:cubic-interesting}
\end{figure}

See for example \cref{app-fig:cubic-cycles-a}, where $C_5$ and $C_9$ are expensive. We have that $C_4$ and $C_2$ are the the non-demanding cycle sharing exactly one edge with $C_5$ and $C_9$ and not sharing edges with other demanding cycles.

Let $D_{\exp}(G)\subseteq D(G)$ be the set of the expensive cycles of $G$. 
The following lemma refines the lower bound defined in \cref{app-co:cubic-lower1}.

\begin{lemma}
\label{app-le:cubic-lower2}
	$\tbn(G)\ge 2|D(G)|+|D_{\exp}(G)|+8-\min\{8, 2|D_{\ext}(G)|\}$.
\end{lemma}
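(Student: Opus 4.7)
The plan is to strengthen \cref{app-co:cubic-lower1} by showing that each cycle in $D_{\exp}(G)$ forces exactly one additional dummy in any unbent collection. Observe that the stated bound is equivalent to $\tbn(G) \ge 2 b(G) + |D_{\exp}(G)|$, where $b(G) = |D(G)| + 4 - \min\{4, |D_{\ext}(G)|\}$ is the single-drawing minimum of \cref{app-le:cubic-fixemb-bends}. Let $\mathcal{C} = \{G^1, \dots, G^k\}$ be an arbitrary unbent collection of~$G$. Since the $0$-legged external cycle alone forces $b(G) \ge 4 > 0$, no single-drawing collection suffices, so $k \ge 2$.

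For each $C \in D_{\exp}(G)$, let $e_C$ be the single edge of its unique interesting contour path $P_C$ and let $C'_C$ be the witness non-demanding $3$-legged cycle that contains $C$ and shares edges with demanding cycles it contains only along $P_C = \{e_C\}$. Because every edge is unbent in at least one drawing, there is some $\sigma(C) \in [k]$ in which $e_C$ carries no dummy. In $G^{\sigma(C)}$, the dummy required by~$C$ (to avoid being bad) must lie on $C \setminus \{e_C\}$, hence on a non-interesting contour path; in particular, it lies off the boundary of $C'_C$. Consequently, the $3$-legged cycle $C'_C$ is not made good by $C$'s dummy, and, since its boundary meets the demanding cycles it contains only through $\{e_C\}$ (which is undummied), $G^{\sigma(C)}$ must carry an additional dummy on some edge of $C'_C$ that lies on no demanding cycle.

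I finish with the following per-drawing refinement of \cref{app-le:cubic-fixemb-bends}: for each $j \in [k]$, setting $t_j = |\{C \in D_{\exp}(G) : e_C \text{ is undummied in } G^j\}|$, the drawing $G^j$ contains at least $b(G) + t_j$ dummies. Summing over drawings and using $\sum_j t_j \ge |D_{\exp}(G)|$ (each expensive~$C$ contributes at least once, via $\sigma(C)$) together with $k\, b(G) \ge 2 b(G)$, one gets
\[
\tbn(G) \;=\; \sum_{j=1}^k |\text{dummies}(G^j)| \;\ge\; 2 b(G) + |D_{\exp}(G)|,
\]
which unfolds to the stated bound.

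The main obstacle is this per-drawing refinement, specifically that the $t_j$ mandated extra dummies, one per witness $C'_C$ with $\sigma(C) = j$, are distinct and do not collapse onto the $b(G)$ dummies already counted in \cref{app-le:cubic-fixemb-bends}. Distinct expensive cycles have distinct witnesses, for otherwise a common witness would share edges with two demanding cycles it contains, contradicting expensiveness; however, distinct witnesses may still share boundary edges outside of demanding cycles, and one must rule out that a single dummy on such a shared edge could serve two witnesses simultaneously. I expect this to follow from a case analysis on the hierarchy of $3$-legged cycles encoded by the genealogical trees of the witnesses, together with the red--green coloring rule that characterizes demanding cycles: any such shared edge, being on the boundary of two distinct $3$-legged cycles while avoiding all demanding cycles, would force the red--green colorings of the two witness regions to align in a way that produces a deeper demanding cycle sharing edges with one of $C'_{C_1}, C'_{C_2}$, contradicting one of the two expensiveness conditions.
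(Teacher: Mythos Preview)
Your strategy coincides with the paper's: each expensive $C$ and its witness $C'$ share only the edge $e_C$, so in the drawing where $e_C$ is unbent they need two distinct dummies, giving a surplus of one over the base count. The paper's proof is shorter than yours: after this observation it simply notes that expensive cycles lie in $D(G)$ and hence are pairwise vertex-disjoint, and asserts directly that the $+1$ costs accumulate independently. It does not pass through a per-drawing inequality of the form $|\text{dummies}(G^j)|\ge b(G)+t_j$.

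The gap you flag is genuine and is not explicitly closed in the paper either: vertex-disjointness of the $C_i$ does not by itself prevent two witnesses $C'_i,C'_j$ from sharing a boundary edge, nor does it prevent a witness dummy from coinciding with the dummy of some demanding $D\supset C'_i$ lying on a red contour path of $C'_i$. Your per-drawing refinement actually makes the obstacle sharper than necessary, since it demands the surplus to materialise in every single drawing rather than only in total; the paper's global charging (to the vertex-disjoint $C_i$, not to the witnesses) is closer to what one ultimately wants. Your proposed resolution, however---a case analysis on the genealogical tree that would produce a ``deeper demanding cycle'' contradicting expensiveness---is too vague as stated: you neither identify which two configurations of nested witnesses need to be ruled out, nor explain concretely how the red--green rule forces the claimed contradiction. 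As written, the proposal is a correct outline that matches the paper's approach, but the independence step remains unproved.
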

\begin{proof}
Let $\mathcal{C}$ be an unbent collection with $b=\tbn(G)$ dummies.  We show that $\mathcal{C}$ has at least $2|D(G)|+|D_{\exp}(G)|+8-\min\{8, 2|D_{\ext}(G)|\}$ dummies. 
If $|D_{\exp}(G)|=0$ the lemma follows by 
\cref{app-co:cubic-lower1}. Suppose  $|D_{\exp}(G)|>0$.
Consider an expensive demanding cycle $C$. Let $e$ be the edge forming the only interesting contour path of $C$. Let $C'$ be the non-demanding cycle containing $C$, sharing edges with the interesting contour path of $C$ (i.e., $e$ is along the boundary of $C$ and $C'$), and not containing other non-demanding cycles with this property (see Property~\ref{enum:C1}). Cycle $C'$ does not share the dummy with $C$ in at least one of the representations, otherwise $e$ (which is the only edge they share) would not have a representation where it is not subdivided. It follows that, in such representation, in order to make both $C$ and $C'$ good cycles, which is necessary by \cref{app-le:cubic-badcycles}, we need two dummies, one for $C$ and one for $C'$. 
In all the other representations, we need one dummy, that is shared by $C$ and $C'$ and placed along $e$.
Notice that expensive cycles are contained in $D(G)$ and consequently they are vertex disjoint. Hence, the above consideration hold for every expensive cycle independently. It follows that, if $|\mathcal{C}|=2$ (since $G$ is triconnected, this is the minimum possible cardinality of $\mathcal{C}$), $\mathcal{C}$ has at least $|D_{\exp}(G)|$ dummies in addition to the ones of \cref{app-co:cubic-lower1}. Hence, the lemma follows.
\end{proof}

We now refine the lower bound of \cref{app-le:cubic-lower2}, introducing \cref{app-le:cubic-lower3}.  Before that we show the following lemma, that states a relationship between the contour paths of the cycles in $D(G)$ and the external face $f_{ext}$. In particular, a contour path $P$, that is a contour path of a cycle $C\in D(G)$ and sharing edges with $f_{\ext}$ ($f_{\ext}$ is a leg face of $C$), is always a sub-path of a contour path of a non-demanding cycle $C'$ containing $C$ (if $C'$ exists). Consequently, $P$ is always interesting. The same holds with respect to faces incident to $f_{\ext}$, when $f_{\ext}$ is not a leg face of $C$. More precisely, we have the following.

\begin{lemma}[\cite{DBLP:conf/soda/DidimoLOP20}]
\label{app-le:cubic_interestingexternal}
Let $C\in D(G)$ either: having $f_{\ext}$ as a leg face; or not having $f_{\ext}$ as a leg face and having a face $f'$ incident to $f_{\ext}$ as a leg face. One of the interesting contour paths of $C$ is incident to $f_{\ext}$ in the former case and to $f'$ in the latter case.
\end{lemma}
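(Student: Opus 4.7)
The plan is to prove both cases by a planarity argument: writing $P$ for the contour path of $C$ between the two legs of $C$ incident to the distinguished face (either $f_{\ext}$ or $f'$), I will show that every non-demanding 3-legged cycle $C'$ containing $C$ and sharing an edge with $C$ must include $P$ as a sub-path of one of its contour paths. By the definition of leg face, the two legs together with $P$ all lie on the boundary of the distinguished face, so $P$ is well-defined and is the candidate interesting contour path.

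For the former case, $P$ lies on the boundary of $f_{\ext}$. Any non-demanding $C'$ containing $C$ satisfies $C \subseteq G_{C'}$, where $G_{C'}$ is the subgraph induced by $C'$ and its interior vertices. Since the interior of $C'$ is bounded, it does not contain $f_{\ext}$, and therefore every edge of $G_{C'}$ lying on the boundary of $f_{\ext}$ must itself lie on $C'$. Applying this to the edges of $P\subseteq C \subseteq G_{C'}$ shows that $P$ is a sub-path of some contour path of $C'$; as this holds for every qualifying $C'$, the path $P$ is interesting. For the latter case, let $e$ be an edge shared by $f'$ and $f_{\ext}$, and let $P'$ be the contour path of $C$ incident to $f'$. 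Suppose for contradiction that some valid $C'$ does not contain $P'$ as a sub-path. Then, as in the former case, planarity forces $f'$ to lie strictly inside $C'$; since $f'$ and $f_{\ext}$ share~$e$ and $f_{\ext}$ lies outside $C'$, the edge $e$ must belong to $C'$. Combined with the cubic--triconnected structure of $G$---in particular, that the two legs of $C$ incident to $f'$ are the only edges leaving the endpoints of $P'$ into the exterior of $C$---I would then argue that the three contour paths of~$C'$ cannot be routed so as to share an edge of $C$ off~$P'$ without either duplicating a vertex or violating the 3-legged condition.

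The hard part will be the latter case: unlike $f_{\ext}$, the bounded face $f'$ can in principle be enclosed by $C'$, so pure planarity is insufficient and a more careful combinatorial argument is needed once $e$ has been forced onto~$C'$. I expect the delicate step to be a case distinction on how the three legs of~$C'$ are distributed around~$f'$ given that $e\in C'$, using cubicity (each vertex has a unique third neighbor not on~$C$) and triconnectivity (no small separator admits a pathological $C'$) to rule out the remaining configurations. The Rule~1 coloring of $C\in D(G)$, which makes all three contour paths of $C$ green in the genealogical tree, may also be required to propagate a coloring constraint up to $C'$ that is incompatible with $C'$ being non-demanding.
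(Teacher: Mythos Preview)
The paper does not prove this lemma at all: it is imported verbatim from Didimo, Liotta, Ortali, and Patrignani (the bracketed citation in the lemma header), and no argument for it appears anywhere in the manuscript or its appendix. Consequently there is no ``paper's own proof'' to benchmark your proposal against; the comparison you were asked to make is vacuous here.

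Evaluating your proposal on its own merits: your treatment of the first case (when $f_{\ext}$ itself is a leg face of $C$) is essentially correct and complete. One small point you glide over is that $P\subseteq C'$ alone does not yet give that $P$ lies on a \emph{single} contour path of $C'$; you also need that no leg of $C'$ is attached at an interior vertex of $P$. This follows because any interior vertex $v$ of $P$ is not a leg vertex of $C$, so in the cubic graph its unique non-$C$ edge points into the interior of $C\subseteq G_{C'}$ and hence cannot be a leg of $C'$. With that sentence added, the first case is done.

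Your second case, however, is not a proof but a sketch of a hoped-for proof, and you say as much. Forcing the shared edge $e$ of $f'$ and $f_{\ext}$ onto $C'$ is a sound first move, but from there you offer only an outline (``case distinction on how the three legs of $C'$ are distributed\ldots'', ``the Rule~1 coloring\ldots may also be required''). None of the concrete obstructions are worked out, and the red--green coloring of $C$ does not obviously propagate any constraint to $C'$ beyond what the definition of ``non-demanding'' already gives. In particular, the sentence claiming that the contour paths of $C'$ ``cannot be routed so as to share an edge of $C$ off $P'$'' needs a genuine argument: nothing you have written excludes $C'$ from meeting $C$ along one of the other two contour paths while enclosing $f'$. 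As it stands, the second case has a real gap; if you want a self-contained proof you will have to either carry out the promised case analysis in full or consult the cited source for the intended argument.
\end{document}
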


A cycle $C\in D_{\ext}(G)$ is \emph{short} if and only if it has (i)~exactly one edge $e$ incident to $f_{\ext}$ and (ii)~two or more interesting contour paths.  Observe that $e$ is a contour path (in particular, a contour path consisting of exactly one edge) of $C$ and that it is interesting by \cref{app-le:cubic_interestingexternal}. Let $D_{\short}(G)$ be the set of the short demanding cycles. Observe that $D_{\short}(G)\cap D_{\exp}(G)=\emptyset$, as (ii)~contradicts \ref{enum:C1}.  For an example, see \cref{app-fig:cubic-interesting-b}, where there are two 3-legged cycles $C$ (with legs $\{e_1,e_2,e_3\}$) and $C'$ (with legs $\{e_1,e_4,e_5\}$). 
Note that $C$ is demanding and that the contour paths~$P$ and~$\tilde{P}$ of $C$ are interesting, as they are part of $C'$. 
Also, $C$ is short, as it has two interesting contour paths ($P$ and $\tilde{P}$) and since $P$ is composed of one edge and incident to $f_{\ext}$. 

Thanks to the concept of short demanding cycle we can refine the lower bound of \cref{app-le:cubic-lower2} with the following lemma. Also, the following lower bound is tight, as we are going to prove in the next section.

\cubiclower*
\label{app-le:cubic-lower3*}

\begin{proof}
Let $\mathcal{C}$ be an unbent collection with $b=\tbn(G)$ dummies. If $2|D_{\ext}(G)|-|D_{\short}(G)|\ge 8$ or if $|D_{\short}(G)|=\emptyset$, then the bounds in \cref{app-le:cubic-lower2} and \cref{app-le:cubic-lower3} coincide. 
Hence, suppose $2|D_{\ext}(G)|-|D_{\short}(G)|\le7$ and $|D_{\short}(G)|\ge 1$. We have to show that in this case there are $|D_{\short}(G)|$ additive dummies in $\mathcal{C}$ with respect to the ones of \cref{app-le:cubic-lower2}. Consider $C\in D_{\short}(G)$ and let $e$ be the edge of $C$ incident to $f_{\ext}$, forming a contour path of $C$. Recall that, by \cref{app-le:cubic_interestingexternal}, such contour path is interesting. Hence, placing a dummy on $e$ is optimal, as in this case such dummy place the role of: A dummy for $C$ and all the non-demanding sharing an edge with $C$; one of the 4 dummies in the external face. On the other hand, in our setting it is not possible that there is a dummy on $e$ in all the representations of $\mathcal{C}$. Let $H\in \mathcal{C}$ be the representation where there is no dummy in the edge $e$ of $C$ incident to $f_{\ext}$. In such representation, we have to place a dummy in an edge of $C$ belonging to an interesting contour path of $C$ (which exists by definition of short cycles) not incident to $f_{\ext}$ and in another edge of the external face of $G$, instead of placing just on $e$. In all the others representation, just a dummy placed on $e$ suffices. Since $D_{\ext}(G)\le 7$ and since $C\in D_{\ext}(G)$, the cost with respect to the optimum increases of at least 1 in $H$. This consideration holds for each element of $D_{\short}(G)$. If $|\mathcal{C}|=2$ (which is the minimum we can have) $\mathcal{C}$ has at least $|D_{\short}(G)|$ dummies in addition to the ones of \cref{app-le:cubic-lower2} and, in particular, the lemma follows.
\end{proof}

\begin{figure}[t]
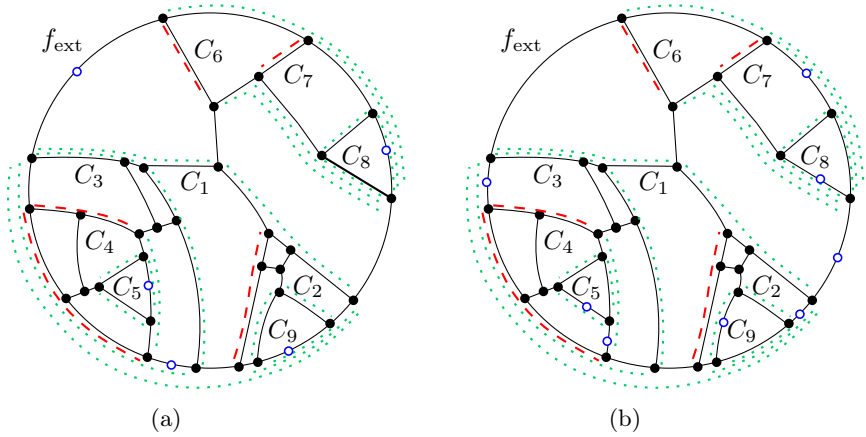

  \begin{subfigure}{0.35\textwidth}
    \centering
    \includegraphics[page=3]{cubic-cycles-new}
    \subcaption{}
     \label{app-fig:cubic-cycles-1-a}
  \end{subfigure}
  \hfil
  \begin{subfigure}{0.35\textwidth}
    \centering
    \includegraphics[page=4]{cubic-cycles-new}
    \subcaption{}
    \label{app-fig:cubic-cycles-1-b}
  \end{subfigure}
  \hfill
  \caption{An unbent collection of the graph $G$ depicted in
    \cref{app-fig:cubic-cycles-a} with dummies (blue circles) whose
    number matches the lower bound of \cref{app-le:cubic-lower3}.}
  \label{app-fig:cubic-cycles-1}
\end{figure}

 In \cref{app-subse:cubic-alg} we are going to show that the lower bound of \cref{app-le:cubic-lower3} is tight.  \cref{app-fig:cubic-cycles-1} shows an unbent collection of $G$ depicted in \cref{app-fig:cubic-cycles-a}, where dummies are represented as disks. As observed before, for such $G$ we have $D(G)=\{C_3,C_5,C_8.C_9\}$ and $D_{\ext}(G)=\{C_3,C_8.C_9\}$. We also have $D_{\exp}(G)=\{C_5,C_9\}$ and $D_{\short}(G)=\{C_8\}$. We have that such collection has overall $\tbn(G)= 2|D(G)|+|D_{\exp}(G)|+8-\min\{8, 2|D_{\ext}(G)|-|D_{\short}(G)|\}=8+2+8-5=13$ dummies.

\subsection{Computing a Collection with tbn(G) Bends}
\label{app-subse:cubic-alg}

In this section we show that the lower bound expressed in \cref{app-le:cubic-lower3} is tight and we prove \cref{app-th:minbend-cubic}. A 3-legged cycle is \emph{maximal} if it is not contained in another 3-legged cycle. We use the following lemma to prove \cref{app-le:cubic-setscomp}, which is a key ingredient for \cref{app-th:minbend-cubic}.

\begin{lemma}[\cite{DBLP:conf/soda/DidimoLOP20,DBLP:journals/jgaa/RahmanNN03}]
\label{app-le:cubic-demanding-comp} 
There exists an $O(n)$-time algorithm that computes: the maximal 3-legged cycles of $G$; all the demanding cycles of $G$; the set $D(G)$ of the non-pairwise intersecting ones and their interesting contour paths.
\end{lemma}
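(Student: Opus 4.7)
The plan is to exploit the structural fact that, in a triconnected cubic plane graph, every 3-legged cycle corresponds bijectively to a non-trivial 3-edge cut $\{e_1,e_2,e_3\}$ whose removal separates $G$ into two connected pieces; the three legs are exactly the cut edges, and one side is bounded by a simple cycle with the required ``no two legs share a vertex'' property because $G$ is cubic and triconnected. All non-trivial 3-edge cuts of a cubic $n$-vertex graph can be enumerated in $O(n)$ time using standard edge-connectivity machinery (e.g.\ a cactus-like representation of all 3-edge cuts), and this representation simultaneously yields the containment order among the corresponding cycles. From this data one reads off the maximal 3-legged cycles as those whose cut is not enclosed by any other cut in the cactus, so the set of maximal cycles and their genealogical trees $T(C)$ can be built in total time $O(n)$, since each internal vertex of $G$ lies in at most a constant number of tree nodes.

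Once the genealogical trees are available, computing the red--green coloring is a single bottom-up pass in each $T(C)$: at each node we apply Rule~1 or Rule~2 based only on the colors of the three contour paths of its children, which takes constant time per node. The demanding cycles are exactly those to which Rule~1 applies, so they are identified on the fly. To obtain $D(G)$ we then resolve intersecting demanding cycles: by \cref{app-le:cubic-only_intersecting}, the intersecting demanding cycles all pass through one of two specific external edges, so there is only a constant-sized collection of groups to process, and choosing a maximal pairwise non-intersecting subset can be done in $O(n)$ total time.

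Finally, the interesting contour paths of each $C\in D(G)$ are determined locally from the coloring: a contour path $P$ of $C$ is interesting iff, for every non-demanding ancestor $C'$ in $T(C)$ that shares at least one edge with $C$, the portion of $C'$ along $C$ contains $P$. Because each such ancestor is encountered once during the bottom-up pass, and because the relevant sub-path of $C'\cap C$ can be tracked incrementally as the traversal moves from child to parent, this adds only $O(1)$ work per tree edge, yielding $O(n)$ time overall.

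The main obstacle will be organizing the cactus-like 3-edge-cut data structure so that the containment forest, the genealogical trees, the bottom-up Rule~1/Rule~2 coloring, and the incremental identification of interesting contour paths can all be carried out together in a single linear-time sweep; and carefully handling the special (and by \cref{app-le:cubic-only_intersecting} very restricted) cases where demanding cycles intersect or have $f_{\ext}$ as a leg face. Once these bookkeeping issues are resolved, the algorithm of Didimo, Liotta, Ortali, and Patrignani and of Rahman, Nishizeki, and Nakao gives the claimed $O(n)$ bound.
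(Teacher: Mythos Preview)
The paper does not prove this lemma; it is quoted from~\cite{DBLP:conf/soda/DidimoLOP20,DBLP:journals/jgaa/RahmanNN03} and used as a black box. So your task is really to reconstruct the argument of those references, and your high-level plan (3-legged cycles $\leftrightarrow$ non-trivial 3-edge cuts, build the containment forest, one bottom-up pass for the red--green coloring, then extract interesting contour paths) is the right shape and matches what those papers do.

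There is, however, a concrete error in your linear-time accounting. You write that the genealogical trees have total size $O(n)$ ``since each internal vertex of $G$ lies in at most a constant number of tree nodes.'' This is false: a vertex that sits inside a long chain of nested 3-legged cycles lies in \emph{all} of them, which can be $\Omega(n)$ many. The correct reason the forest has $O(n)$ nodes is laminarity: by \cref{app-le:cubic-TCcomp} (and \cref{app-le:cubic-only_intersecting} for the few exceptional cycles touching $f_{\ext}$), the 3-legged cycles form a laminar family on $V(G)$, and any laminar family on $n$ elements has at most $2n-1$ members. This is also what makes the bottom-up pass linear, provided each node is processed in time proportional to its number of children plus the number of ``new'' edges it contributes, not to $n_C$.

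A second, softer gap is your appeal to a ``cactus-like representation of all 3-edge cuts.'' The standard cactus encodes all minimum cuts, and in a 3-edge-connected cubic graph that is indeed the set of 3-edge cuts; but a cactus can implicitly represent $\Theta(n^2)$ cuts, so it does not by itself give you $O(n)$ non-trivial 3-cuts or a linear-time enumeration of them. You still need the laminarity argument above (which ultimately comes from planarity and triconnectivity). The cited papers sidestep this by working directly with the planar embedding, identifying the 3-legged cycles via their leg faces and building $T(C)$ recursively in $O(n_C)$ time, which is closer to what you would actually have to implement.
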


\begin{lemma}
\label{app-le:cubic-setscomp}
The sets $D_{\exp}(G)$ and $D_{\short}(G)$ can be computed in $O(n)$ time.
\end{lemma}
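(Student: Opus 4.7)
The plan is to reduce both computations to local checks and a single traversal of the structures returned by \cref{app-le:cubic-demanding-comp}, which already provides $D(G)$, the interesting contour paths of every $C \in D(G)$, and the genealogical trees of the maximal 3-legged cycles in $O(n)$ time. Since the cycles in $D(G)$ are pairwise edge-disjoint (indeed, pairwise vertex-disjoint as stated in \cref{se:sub-cubic}), we have $\sum_{C \in D(G)} |E(C)| \in O(n)$; this budget will drive the running-time argument.

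\textbf{Computing $D_{\short}(G)$.} For each $C \in D(G)$ I would verify the two defining conditions: (i)~$C$ has exactly one edge incident to $f_{\ext}$, and (ii)~$C$ has at least two interesting contour paths. After a single linear-time preprocessing pass that marks every edge on the boundary of $f_{\ext}$, condition~(i) reduces to scanning the edges of $C$, while (ii) is already encoded in the output of \cref{app-le:cubic-demanding-comp}. Each cycle costs $O(|E(C)|)$ time, and edge-disjointness of $D(G)$ bounds the total work by~$O(n)$.

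\textbf{Computing $D_{\exp}(G)$.} For every $C \in D(G)$, condition~(C2) -- a unique interesting contour path $P$ of $C$, consisting of a single edge $e$ -- is a local check in $O(|E(C)|)$ time. The harder part is testing~(C1): the existence of a non-demanding 3-legged cycle $C'$ that contains $C$, has $e$ on its boundary, and shares no edge with any other demanding descendant. The candidates for $C'$ are precisely the non-demanding ancestors of $C$ in the genealogical tree. I would process each genealogical tree bottom-up, computing at every 3-legged cycle $C'$ a constant-size summary recording the number of demanding descendants of $C'$ that share at least one edge with $C'$, together with a pointer to one such descendant. The summaries merge from children to their parent in time proportional to the edges on the boundary of the parent, so the whole bottom-up pass costs $O(n)$. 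Condition (C1) for a candidate $C$ then reduces to asking whether some non-demanding ancestor $C'$ with $e$ on its boundary has summary count equal to~$1$ (in which case the pointer necessarily designates $C$); walking up the tree only while $e$ remains a boundary edge charges each edge of $G$ a constant amount of work overall.

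The main obstacle is the careful bookkeeping of the bottom-up summaries so that no edge is scanned more than a constant number of times during merges. This works because the children of a node $C'$ in the genealogical tree decompose the interior of $C'$ in an edge-disjoint manner apart from their shared boundary with $C'$, and because the cycles of $D(G)$ are pairwise edge-disjoint. Combining the two parts yields the claimed $O(n)$-time algorithm.
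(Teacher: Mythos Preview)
Your plan is sound, and for $D_{\short}(G)$ it matches the paper almost verbatim: mark the boundary of $f_{\ext}$, then for each $C\in D(G)$ check in $O(|E(C)|)$ time whether exactly one edge lies on $f_{\ext}$ and whether $C$ has at least two interesting contour paths.

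For $D_{\exp}(G)$ you take a genuinely more elaborate route than the paper. The paper's own proof does \emph{not} perform any bottom-up aggregation over the genealogical tree; it simply traverses the edges of each $C\in D(G)$, and whenever it finds a contour path consisting of a single edge it checks whether that path is interesting and whether it is the \emph{only} interesting contour path of~$C$. If so, it declares $C$ expensive ``by definition''. In other words, the paper treats the existential part of~(C1) --- the non-demanding $C'$ that shares edges with $P$ and with no other demanding cycle it contains --- as automatically satisfied once $C$ has a unique interesting contour path, and spends no algorithmic effort on it. Your bottom-up summaries and the walk up the tree are designed precisely to verify that existential clause explicitly. What you gain is a self-contained argument that every clause of the definition is checked; what the paper gains is brevity, at the price of leaving the implication ``unique interesting contour path $\Rightarrow$ suitable $C'$ exists'' implicit. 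If you want to align with the paper, you can drop the entire tree pass and simply test, per demanding cycle, the two local conditions (one interesting contour path, that path a single edge); the disjointness of $D(G)$ already gives the $O(n)$ bound.
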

\begin{proof}
  By \cref{app-le:cubic-demanding-comp}, the set $D(G)$ can be
  computed in $O(n)$ time and also their interesting contour
  paths. Since they are not intersecting, they are independent. Hence,
  we can traverse all their edges. Whenever we find a contour path
  that is composed of an edge, we verify if it is an interesting
  contour path and if it is the only interesting contour path of the
  cycle. In this case, the cycle is expensive by definition and we add it to $D_{\exp}(G)$, otherwise
  it is not and we do nothing. Concerning $D_{\short}(G)$, we visit each cycle of
  $D(G)$ and we count the interesting contour paths of each one of
  them. For the ones with 2 or more interesting contour paths, we test
  if there exists one incident to~$f_{\ext}$ and if such contour path is composed of one edge.  If it exists, we add
  the cycle to $D_{\short}(G)$.
\end{proof}
 
We now turn to \cref{app-le:cubic-unbentT}, which is another key result to prove \cref{app-th:minbend-cubic}. Before that, we focus on the concept of a genealogical tree $T(C)$ for a 3-legged cycle~$C$.
In particular, we enrich $T(C)$ with more information, as follows.

\begin{lemma}[\cite{DBLP:journals/jgaa/RahmanNN03}]
\label{app-le:cubic-TCcomp}
Given a 3-legged cycle $C$ of $G$, $T(C)$ can be computed in $O(n_C)$ time. Also, no two 3-legged cycles of $G_C$ intersect.
\end{lemma}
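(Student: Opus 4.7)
The plan is to tackle the two parts of \cref{app-le:cubic-TCcomp} in order: first the structural claim that no two 3-legged cycles of $G_C$ intersect, and then the algorithmic claim that $T(C)$ can be built in $O(n_C)$ time. I expect the structural step to be the main obstacle, since once non-intersection is in hand the algorithm follows from standard laminar-family techniques.

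For the structural step, I would argue by contradiction. Suppose $C_1, C_2 \subseteq G_C$ are two 3-legged cycles that intersect but are not nested. By planarity, $C_1 \cap C_2$ is a disjoint union of shared paths, and the branch vertices of $C_1 \cup C_2$ are exactly the endpoints of these shared paths. At every internal vertex of a shared path, cubicity leaves one ``extra'' edge that is a leg of exactly one of $C_1, C_2$; at each internal vertex of $C_i \setminus C_{3-i}$, cubicity leaves one edge that is a leg of $C_i$; and at each branch vertex, all three incident edges lie on $C_1 \cup C_2$, contributing no leg. Matching the 3-leg budget of each cycle against the number of slots forced by cubicity yields a tight arithmetic constraint on the lengths of the shared paths and of the non-shared arcs. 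A short case analysis on this constraint shows that any nontrivial intersection forces one of $C_1, C_2$ either to have fewer than $3$ legs (because an arc collapses to a chord accounted for at both of its endpoints simultaneously, or because $Q_1=Q_2$ as edges would force the forbidden multi-edge), or to have two legs sharing an outer endpoint, violating condition~(ii); triconnectivity of $G$ rules out the remaining degenerate configurations in which the branch vertices would themselves form a $2$-separator.

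With non-intersection established, the 3-legged cycles of $G_C$ form a laminar family under containment, and $T(C)$ can be computed in linear time as follows. I would enumerate the non-trivial $3$-edge cuts of $G$ whose smaller side lies inside $G_C$; in a triconnected cubic plane graph, such cuts admit a cactus-like linear-size representation computable in $O(n)$ time by standard edge-connectivity techniques, and reading the 3-legged cycles of $G_C$ off this representation yields a laminar collection whose cycle boundaries have total length $O(n_C)$, since laminarity implies each edge of $G_C$ lies on $O(1)$ boundaries. Finally, I would link each cycle to its parent in $T(C)$, namely the smallest 3-legged cycle strictly containing it (or $C$ itself if none), by bucket-sorting the cycles on the sizes of their interiors and performing a single inside-out planar sweep that maintains a stack of currently open cycles, for a total of $O(n_C)$ time.
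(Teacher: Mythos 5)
The paper does not actually prove \cref{app-le:cubic-TCcomp}: it is imported verbatim from Rahman, Naznin, and Nishizeki~\cite{DBLP:journals/jgaa/RahmanNN03}, so there is no in-paper argument to compare against; your proposal must stand on its own. It does not, because the structural half has a genuine gap. Your counting argument for ``no two 3-legged cycles of $G_C$ intersect'' invokes only planarity, cubicity, the 3-leg budget, condition~(ii), and triconnectivity of $G$; nowhere does the case analysis use the hypothesis that $C_1$ and $C_2$ lie inside $G_C$. But under those hypotheses alone the conclusion is false: triconnected cubic plane graphs can contain intersecting 3-legged cycles. This is not a corner case the paper ignores -- it is precisely the situation $I(G)\neq\emptyset$ addressed by \cref{app-le:cubic-only_intersecting} and \cref{app-fig:intersecting}, and the ``Intersecting Case'' in the proof of \cref{app-th:minbend-cubic} exists to handle it. Concretely, let $P$ be a path with exactly two internal (degree-3) vertices joining two vertices $u,v$ of the outer cycle, where one internal vertex of $P$ sends its third edge into each of the two regions cut off by $P$. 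Then $C_1=P$ together with one $uv$-arc of the outer cycle and $C_2=P$ together with the other arc are both 3-legged (two legs at $u$ and $v$, one from $P$), they share the edges of $P$, neither contains the other, no two legs share a vertex, and $\{u,v\}$ is not a separator. Your case analysis would have to declare this configuration impossible, and it cannot. The missing ingredient is that $C_1,C_2\subseteq G_C$ forces the unbounded face of $C_1\cup C_2$ to lie inside the disk bounded by $C$, i.e., neither cycle nor their union may enclose the external face of $G$; it is this, combined with the fact that $G_C$ meets the rest of $G$ only through the three legs of $C$, that kills the theta configuration. Without that step the structural claim is unproven.

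The algorithmic half is plausible in outline but is also only asserted. The identification of 3-legged cycles with 3-edge cuts needs care: the paper's definition requires that no two legs share a vertex and explicitly separates out degenerate $k$-legged cycles, so not every small cut yields a 3-legged cycle in the required sense; and an $O(n_C)$ (as opposed to near-linear) bound for producing a cactus of all minimum edge cuts is claimed via ``standard techniques'' rather than argued. Since the cited source builds $T(C)$ by a direct face/contour traversal, you should either reproduce that construction or work out the cut-enumeration route in detail, including why its output is exactly the set of 3-legged cycles of $G_C$.
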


We say that a non-demanding cycle $C'$ is \emph{critical} if, for every demanding cycle $C''$ it contains, $C'$ shares exactly one edge with $C''$ and such edge corresponds to an interesting contour path of $C''$. Cycles $C_5$, $C_6$, and $C_9$ in  \cref{app-fig:cubic-enrichedgen-a} are critical.

To efficiently place dummies, we define the \emph{enriched genealogical tree} $T^+(C)$, which is a data structure obtained from $T(C)$ by enriching it with additional information. More precisely, each (non-demanding) cycle $C'$ is equipped in $T^+(C)$ with a binary label~$\lambda(C')$ which is set to~$\mathit{true}$ if $C'$ is critical and to false otherwise. Then, each critical cycle $C'$ is also equipped with two pointers $\alpha(\cdot)$ and $\beta(\cdot)$ such that $\alpha(C')$ and $\beta(C')$ are two demanding cycles contained in $C'$. We say that such demanding cycles are \emph{associated with} $C'$. The cycles $\alpha(C')$ and $\beta(C')$ are associated with $C'$ so that the two following properties hold:

\begin{enumerate}[(P1)]
\item \label{enum:P1} 
 $\alpha(C')\not=\emptyset$ and, if $\beta(C')=\emptyset$, $\alpha(C')$ is expensive.
\item \label{enum:P2} Let $C''$ and $C'''$ be two critical cycles of $T(C)$ such that: $C'$ contains both $C''$ and $C'''$; $C''$ contains $C'''$. If $C'''$ and $C'$ are associated with $\tilde{C}$ and $C'''$ and $C''$ are associated with $\overline{C}$, then $C'$ is associated with $\overline{C}$.
\end{enumerate}

\begin{figure}[t]
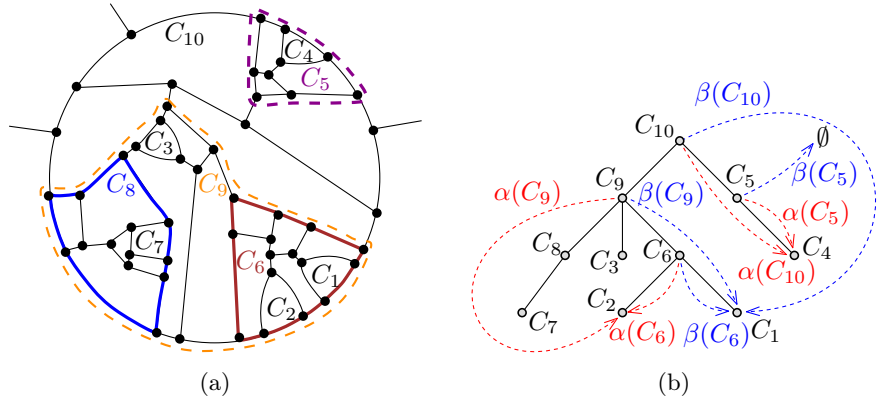

  \begin{subfigure}{0.45\textwidth}
    \centering
    \includegraphics[page=10]{cubic-cycles-new}
    \subcaption{}
     \label{app-fig:cubic-enrichedgen-a}
  \end{subfigure}
  \hfil
  \begin{subfigure}{0.45\textwidth}
    \centering
    \includegraphics[page=11]{cubic-cycles-new}
    \subcaption{}
    \label{app-fig:cubic-enrichedgen-b}
  \end{subfigure}
  \hfill
  \caption{Illustration of enriched genealogical tree data structure.}
  \label{app-fig:cubic-enrichedgen}
\end{figure}

See for example \cref{app-fig:cubic-enrichedgen}. \cref{app-fig:cubic-enrichedgen-a} depicts $G_C$ and \cref{app-fig:cubic-enrichedgen-b} represents $T^+(G)$.  Observe that \ref{enum:P1} holds, as $\alpha(C)\not= \emptyset$ for any 3-legged cycle; also, we have that $\alpha(C_5)=C_4$ and $\beta(C_5)=\emptyset$ and $C_4$ is an expensive demanding cycle. Observe also that \ref{enum:P1} holds: For example, we have that $\alpha(C_9)=C_2$, $\beta(C_6)=C_1$ and, consequently, $\beta(C_9)=C_1$.

\begin{lemma}
  \label{app-le:cubic_TpC}
  There exists an $O(n_C)$-time algorithm that computes $T^+(C)$.
\end{lemma}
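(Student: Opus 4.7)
The plan is to build $T^+(C)$ on top of the genealogical tree $T(C)$ produced by \cref{app-le:cubic-TCcomp}, doing only a constant amount of work per node and per edge of $G_C$ beyond the construction of $T(C)$ itself.

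First, I would invoke \cref{app-le:cubic-demanding-comp} and \cref{app-le:cubic-TCcomp} to compute $T(C)$ together with the set of demanding cycles in $G_C$ and, for each demanding cycle, its interesting contour paths; this costs $O(n_C)$ time by those lemmas. For the criticality labels, I would process the nodes of $T(C)$ in post-order: for each non-demanding cycle $C'$, I walk once around the boundary of $C'$ and, for each edge $e$ on this boundary that is shared with some demanding descendant $C''$, I check whether $e$ lies on an interesting contour path of $C''$ (interesting contour paths have been pre-marked). If, for every demanding cycle $C''$ contained in $C'$, the set of shared edges is a single edge and that edge lies on an interesting contour path of $C''$, I set $\lambda(C') = \mathit{true}$; otherwise $\lambda(C') = \mathit{false}$. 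Since every edge of $G_C$ lies on the boundaries of only $O(1)$ cycles in $T(C)$ (essentially a cycle and its immediate parent in~$T(C)$, because non-demanding cycles contain their descendants strictly and $T(C)$ is a forest of containment relations without intersection by \cref{app-le:cubic-TCcomp}), this labeling step totals $O(n_C)$.

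Next, during the same bottom-up traversal, I compute the pointers $\alpha(\cdot)$ and $\beta(\cdot)$ for every critical cycle. For the base case, if $C'$ is a lowest critical cycle in $T(C)$, I inspect the demanding cycles it directly contains: if one of them is expensive, I set $\alpha(C')$ to any such expensive one and $\beta(C') = \emptyset$; otherwise I pick any two demanding cycles it contains as $\alpha(C')$ and $\beta(C')$ (if only one exists, it must be expensive by the criticality condition, so \ref{enum:P1} holds). For an inductive step at a critical cycle $C'$ with a critical descendant $C''$ closest to it in the tree that is also associated with some demanding cycle $\overline{C}$, \ref{enum:P2} forces $C'$ to inherit $\overline{C}$ as one of its associated demanding cycles; I therefore set (say) $\alpha(C') = \overline{C}$ and complete $\beta(C')$ using either an expensive demanding cycle contained in $C'$ or any other demanding descendant available, consistent with \ref{enum:P1}. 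Propagation information (the demanding cycle that each critical subtree contributes) is stored at each node and looked up in $O(1)$ time by its parent.

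The main obstacle is arguing that checking the criticality condition, which quantifies over \emph{all} demanding cycles contained in $C'$, does not blow up the running time. The key observation is that only those demanding descendants that share edges with $\partial C'$ can violate the criticality condition, and each edge of $G_C$ participates in such a test only a constant number of times during the post-order traversal (at most once when the traversal reaches the parent cycle it bounds and once when it reaches the cycle itself). Combined with the propagation of $\alpha, \beta$, which is done in $O(1)$ per node using only the pointer information at the closest critical descendant, the total running time is $O(n_C)$, as required.
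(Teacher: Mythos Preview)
Your overall strategy---build $T(C)$, then label criticality and propagate $\alpha,\beta$ bottom-up---matches the paper's, but there is a genuine gap in the running-time argument.

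You assert that ``every edge of $G_C$ lies on the boundaries of only $O(1)$ cycles in $T(C)$ (essentially a cycle and its immediate parent).'' This is false. In a nested chain of 3-legged cycles $C = C_0 \supset C_1 \supset \cdots \supset C_k$ where each $C_{i+1}$ shares a contour path with $C_i$, a single edge on the innermost shared contour path lies on the boundary of all $k+1$ cycles, and such chains can have depth $\Theta(n_C)$. Consequently, ``walking once around the boundary of $C'$'' for every non-demanding $C'$ costs $\sum_{C'} |\partial C'|$, which can be $\Theta(n_C^2)$, not $O(n_C)$. The paper circumvents this with an explicit shortcut mechanism: before traversing a contour path $P$ of $C'$, it installs, at each leg of a child $C^*$ that lies on $P$, a pointer to the other leg and to the pre-computed summary $(\lambda(P^*),\alpha(P^*),\beta(P^*))$ of the child's contour path $P^*\subseteq P$. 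The traversal then \emph{jumps} over $P^*$ in $O(1)$ time instead of re-walking its edges. With this jumping, every edge is visited only when it is first reached (i.e., at the deepest cycle whose boundary contains it), giving the $O(n_C)$ bound. Your argument lacks this ingredient, and without it the claimed linear time does not follow.

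A secondary issue: your assignment of $\alpha(\cdot),\beta(\cdot)$ is too loose to guarantee \ref{enum:P2}. That property concerns \emph{three} nested critical cycles $C'\supset C''\supset C'''$ and requires that whenever $C'$ and $C'''$ are both associated with some $\tilde C$, and $C''$ and $C'''$ are both associated with some $\overline C$, then $C'$ is associated with $\overline C$ as well. Simply ``inheriting from the closest critical descendant'' does not obviously enforce this transitive constraint. The paper uses a specific three-case rule (demanding child; cycle already associated with a child; cycle recorded only on a child's contour-path summary) and argues case by case why \ref{enum:P2} holds. You would need a comparable argument.
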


\begin{proof}  
We attach to the three degree-2 vertices in the external cycle of $G_C$ a 3-cycle so that $G_C$ is now cubic. Then, we compute $T(C)$ by \cref{app-le:cubic-TCcomp}, and $D(G_C)$ and the relative interesting contour paths by \cref{app-le:cubic-demanding-comp}. We then remove the external 3-cycle.

We first show a bottom-up procedure that adds for each contour path $P$ a label $\lambda(P)=\mathit{false}$ 
if an interesting contour path of a demanding cycle containing two or more edges is part of $P$, and $\lambda(P)=\mathit{true}$ otherwise. Also, if $\lambda(P)=\mathit{true}$ we associate $P$ with two pointers $\alpha(\cdot)$ and $\beta(\cdot)$ such that $\alpha(P)$ and $\beta(P)$ are two demanding cycles whose controur paths are contained in $P$. If $P$ does not contain contour paths of demanding cycles, then $\alpha(P)=\beta(P)=\emptyset$, while if $P$ contains only one such contour path of demanding cycle, then $\alpha(P)$ is the demanding cycle, and $\beta(P)=\emptyset$. 

Let $C'$ be the first non-demanding cycle that we consider. Observe that $C'$ has only demanding cycles as children and that no edge of it was previously visited by the procedure. Let $P$ be one of its contour paths. We traverse $P$ from its endpoint $u$ to its other endpoint $v$. If along $P$ there is an interesting contour path of a demanding cycle that has 2 or more edges, $\lambda(P)=\mathit{false}$. Otherwise, we associate to $\alpha(P)$ and $\beta(P)$ the (up to) two demanding cycles as described before. We do that for all the contour paths of $C'$. Suppose we are now at step $i$ ($i>1$) of the procedure and that we have the information described above for every non-demanding cycle analyzed in the previous steps. Let $C'$ be the currently visited non-demanding cycle. 

We first add some information on the legs of the children of $C'$, so that we can traverse the edges of $C'$ only if they were not traversed before. Namely, for each child $C^*$ of $C'$ and each contour path $P^*$ of $C^*$, we test if $P^*$ is part of any contour path of $C'$ (which we can check in $O(1)$ time by checking if $C'$ and $C^*$ have the same leg face in the two contour paths). In this case, we consider the legs of $P^*$ and for each one of them we add a pointer to $P^*$ and to the other leg (i.e., the two legs are pointing each other). We can now consider a contour path $P$ of $C'$ and visit its edges from $u$ to $v$, where $u$ and $v$ are the two endpoints of $P$. These pointers let us traverse $P$ so that when we encounter a leg $l$ of a child of $C'$, we analyze the contour path $P^*$ pointed by $l$ (just checking the values $\alpha(P^*)$ and $\beta(P^*)$, as we will describe later) and then we proceed our traversal of $P$ from the other leg pointed by $l$ (not visiting the edges of $P^*$). 

Given this traversal strategy, if while traversing we encounter  an interesting contour path of a demanding cycle with 2 or more edges (notice that in this case such cycle is a child of $C'$) or a contour path $P^*$ of a child of $C'$ so that $\lambda(P^*)=\mathit{false}$, we set $\lambda(P)=\mathit{false}$.
Otherwise, we associate with $P$ the cycles $\alpha(P)$ and $\beta(P)$: If $C'$ has a demanding cycle as a child, we associated that with $\alpha(P)$; else, we consider a contour path $P^*$ of a child $C^*$ of $C'$ and we set assign to $\alpha(P)$ the value $\alpha(P^*)$, if it is not $\emptyset$, or $\beta(P^*)$, if it is not $\emptyset$; finally (if such $C^*$ and $P^*$ do not exist), we set $\alpha(P)=\emptyset$. We do the same for $\beta(P)$, making sure that $\alpha(P)\not= \beta(P)$ or $\alpha(P)= \beta(P)=\emptyset$.

At the end of this first bottom-up traversal, we consider each non-demanding cycle $C'$ of $T(C)$. If $\lambda(P)=\mathit{true}$ for all the three contour paths of $C'$, then $C'$ is critical and we set $\lambda(C')=\mathit{true}$. We now associate demanding cycles with $C'$ so that \ref{enum:P1} and \ref{enum:P2} hold. We perform another bottom-up visit of $T(C)$. During the bottom-up procedure, suppose we are choosing how to set $\alpha(C')$. We proceed as follow:

\begin{enumerate}
\item We first check if there is a demanding cycle $C$ that is a child of $C'$.
In this case, we set $\alpha(C')=C^*$. 
\item Otherwise, we check if there is a child $C^*$ of $C'$ sharing a contour path with $C'$  and associated with a demanding cycle $\tilde{C}$ contained in such contour path ($\tilde{C}=\alpha(C^*)$ or $\tilde{C}=\beta(C^*)$): in this case, we set $\alpha(C')=\tilde{C}$.
\item If the cases above do not apply, we check if there exists a contour path $P^*$ of a child $C^*$ of $C'$ sharing edges with $C'$ so that there exists a demanding cycle $\tilde{C}$ such that $\alpha(P^*)= \tilde{C}$ or $\beta(P^*)=\tilde{C}$  and we set $\alpha(C')=\tilde{C}$ (in this case, $\tilde{C}$ was not associated with $C^*$, but only with $P^*$). 
\end{enumerate}

We set $\beta(C')$ similarly, making sure that $\beta(C')\not = \alpha(C')$ or $\beta(C')= \alpha(C')=\emptyset$. Concerning \ref{enum:P1}, as $C'$ is non-demanding we have that it shares edges with at least a demanding cycle, that is either a child of it (Association~1), or contained in a contour path of a child of $C'$ (Association~2-3). Hence, $\alpha(C')\not = \emptyset$. As we always selected cycles having an interesting contour path that is part of $C'$, we have that if $C'$ can only be associated to a cycle $\alpha(C')$, then $\alpha(C')$ is expensive by definition. It remains to show that  \ref{enum:P2} holds. If we performed Association~1 or Association~3, then the property trivially holds, as the cycles $C''$ and $C'''$ mentioned by the proof do not exist.
Suppose we performed Association~2. In this case, we associated to $C'$ a demanding cycle that was previously associated to a child $C''$ of it and, consequently, in this case does not exists $C'''$ that is a successor of $C''$ and such that $C'$ are $C'''$ associated to a same cycle and $C''$ is not associated to it.

By \cref{app-le:cubic-TCcomp} two cycles are never intersecting in $T(C)$. Hence, every edge of $G_C$ is visited $O(1)$ times thanks to the pointers that we add to the legs of the non-demanding children of every currently visited cycle. It follows that the enriched genealogical tree $T^+(C)$ can be computed in $O(n_C)$ time.
\end{proof}

\begin{lemma}
\label{app-le:cubic-useAleg}
Let $C$ be a demanding 3-legged cycle, let $P$ be an interesting contour path of $C$. There is a leg incident to a vertex of $P$ that lies along $C'$ and $C''$, where $C'$ and $C''$ are any two non-intersecting non-demanding cycles containing $C$ and sharing edges with an interesting contour path of $C$.
\end{lemma}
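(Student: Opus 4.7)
The plan is to classify, for each $C^* \in \{C', C''\}$, the possible shapes of $C^* \cap C$, and then to show that if no leg at an endpoint of $P$ lies on both $C'$ and $C''$, the resulting configuration violates the non-intersecting hypothesis via a planarity argument. Let the three leg vertices of $C$ be $u$, $v$, and $w$, with $P = P_{uv}$, legs $\ell_u$, $\ell_v$, $\ell_w$ at the respective vertices, the other two contour paths $P_{uw}$ and $P_{vw}$, and corresponding leg faces $f_{uw}$ and $f_{vw}$ on the outer side of $C$.

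First, I would observe that since $P$ is interesting and each $C^*$ is a non-demanding cycle containing $C$ that shares an edge with an interesting contour path of $C$, $C^*$ contains all of $P$ as a sub-path. Because $G$ is cubic, the unique non-$C$ edge at any interior vertex of a contour path of $C$ leads into the interior of $C$, so a cycle containing $C$ in its interior cannot diverge from $C$ except at the leg vertices $u,v,w$. Hence $C^* \cap C$ is a union of whole contour paths of $C$, and since $C^* \ne C$ and $P \subseteq C^* \cap C$, the only possibilities are
\[
  C^* \cap C \in \{P,\ P \cup P_{uw},\ P \cup P_{vw}\},
\]
with $C^*$ using exactly the two legs at the endpoints of $C^* \cap C$~-- namely $\{\ell_u, \ell_v\}$, $\{\ell_v, \ell_w\}$, or $\{\ell_u, \ell_w\}$, respectively. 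In particular, $C^*$ always uses at least one leg among $\{\ell_u, \ell_v\}$.

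Now suppose for contradiction that no element of $\{\ell_u, \ell_v\}$ is a leg of both $C'$ and $C''$. A quick case check shows that, up to swapping the two cycles, the only remaining possibility is $C' \cap C = P \cup P_{uw}$ (so $C'$ uses only $\ell_v$ from $\{\ell_u, \ell_v\}$) and $C'' \cap C = P \cup P_{vw}$ (so $C''$ uses only $\ell_u$). The main obstacle is to rule out this configuration via planarity. I plan to argue as follows: $P_{vw}$ is a sub-path of $C$, is not an edge of $C'$, and lies in the interior of $C'$ because $C \subseteq G_{C'}$; since $f_{vw}$ is the face on the outer side of $P_{vw}$ and $C'$ contains no edges interior to $f_{vw}$, the whole face $f_{vw}$ lies in the interior of $C'$. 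On the other hand, $P_{vw}$ is on the boundary of $C''$, with the interior of $C$ on one side (inside $C''$) and $f_{vw}$ on the other side, so $f_{vw}$ lies in the exterior of $C''$. Hence $G_{C'} \not\subseteq G_{C''}$, and a symmetric argument using $P_{uw}$ and $f_{uw}$ yields $G_{C''} \not\subseteq G_{C'}$. Yet $C'$ and $C''$ share all edges of $P$, so they have common edges without either containing the other; by the paper's definition of intersection this means they intersect, contradicting the hypothesis. Therefore, some leg at an endpoint of $P$ must lie on both $C'$ and $C''$.
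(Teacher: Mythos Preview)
Your proof is correct and follows the same two-step skeleton as the paper: first show that each of $C'$ and $C''$ must use at least one of the two legs at the endpoints of $P$, then rule out the case where they use different ones. The execution differs in both steps, however.

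For the first step, the paper argues by contradiction at a single vertex: if neither $\ell_u$ nor $\ell_v$ lies on $C'$, then both are legs of $C'$, and since one of $P_{uw},P_{vw}$ is not on $C'$, the corresponding leg vertex would have degree at least four. You instead classify $C^*\cap C$ globally as one of $\{P,\ P\cup P_{uw},\ P\cup P_{vw}\}$ using that $C^*$ cannot leave $C$ except at leg vertices. Your classification is a bit more work but yields more structural information, which you then reuse.

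For the second step, the paper simply asserts in one sentence that the ``opposite legs'' configuration is impossible because $C'$ and $C''$ are non-intersecting, without spelling out why. Your face argument---showing that $f_{vw}$ lies inside $C'$ but outside $C''$, and symmetrically for $f_{uw}$, so that $C'$ and $C''$ share $P$ yet neither contains the other---actually supplies that missing justification. In this sense your proof is more complete than the paper's on the point where the non-intersection hypothesis is invoked.
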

\begin{proof}
See \cref{app-fig:cubic-interesting-a} for an example of the statement of the lemma, where $P$ is an interesting path, $e_2$ is leg of the demanding cycle $C$, and $e_2$ is also and edge of $C'$ and of $C''$. Let $l_1$ and $l_2$ be the two legs of $C$ incident to vertices of $P$. Since $C'$ and $C''$ are non-intersecting, it is not possible that we both have that: $l_1$ is an edge of $C'$ and not an edge of $C''$; and that $l_2$ is an edge of $C''$ and not an edge of $C'$. 

Given the observation above, it suffices to prove that for every cycle $C'$ containing $C$ and sharing edges with $P$, there exists  $l\in \{l_1,l_2\}$ that is an edge of $C'$. Suppose by contradiction that it is not the case. Let $v_1$ and $v_2$ be the vertices of $C$ incident to $l_1$ and $l_2$, respectively. Since $P$ is part of $C'$, we have that $v_1$ and $v_2$ are part of $C'$, but not $l_1$ and $l_2$. Hence, both $l_1$ and $l_2$ are legs of $C'$. Let $P_1$ and $P_2$ be the contour paths of $C$ distinct from $P$ and incident to $v_1$ and $v_2$, respectively. We have that $P_1$ and $P_2$ cannot be both part of $C'$, since $P$ is part of $C'$ (and $C$ and $C'$ are distinct). Suppose without loss of generality $P_1$ is not part of $C'$.  There is another contour path $P'$ of $C'$ incident to $v_1$ and to $l_1$, as $l_1$ is a leg of $C'$ as observed before. Hence, $v_1$ is incident to a vertex of $P$, $P'$, $P_1$, and $l_1$, which are all distinct. A contradiction.

Concerning the computational time, such leg is one of the two legs of $C$ incident to $P$ and that is not a leg of $C'$, which can be computed in $O(1)$ time as for any 3-legged cycle we can assume to know its three legs.
\end{proof}

Let $C$ be a be a 3-legged cycle of $G$ such that $f_{\ext}$ is a leg face of $C$. The contour path of $C$ incident to $f_{\ext}$ is the \emph{exposed} contour path of $C$.

\begin{lemma}
\label{app-le:cubic-unbentT}
Let $C$ be a 3-legged cycle of $G$ and $n_C$ be the number of vertices of $G_C$. There exists an $O(n_C)$-time algorithm that computes a collection $\mathcal{C}$ of $G_C$ such that the following properties hold:
\begin{enumerate}[(i)]
    \item the collection has $2|D(G_C)|+|D_{\exp}(G_C)|$ dummies and no bad cycle.
    \item If $C$ has an exposed contour path $P$, this path  is subdivided with $2t-|D_{\short}(G_C)|$ dummies incident to $f_{\ext}$, where $t$ is the number interesting contour paths of 3-legged cycles along $P$.
\end{enumerate} 
\end{lemma}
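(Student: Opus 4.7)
The plan is to process the enriched genealogical tree $T^+(C)$, computed in $O(n_C)$ time by \cref{app-le:cubic_TpC}, in a bottom-up fashion and place dummies in two copies $G_C^1$ and $G_C^2$ of $G_C$ according to a case analysis driven by the labels $\lambda(\cdot)$ and the pointers $\alpha(\cdot)$, $\beta(\cdot)$. The invariant maintained during the traversal is that, after processing a subtree rooted at a cycle $C'$, every 3-legged cycle contained in $C'$ is already good in both copies and the dummies placed so far inside $C'$ sum to $2|D(G_{C'})|+|D_{\exp}(G_{C'})|$; moreover, whenever a dummy can be placed on an interesting contour path, it is placed there, so that non-demanding cycles that share edges with processed demanding cycles are simultaneously made good (using \cref{app-le:cubic-demanding-comp}).

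For each expensive cycle $C^* \in D_{\exp}(G_C)$, with single-edge interesting contour path $e$, we put one dummy on $e$ in $G_C^1$ (which also serves every non-demanding supercycle of $C^*$ containing $e$, by definition of expensiveness). In $G_C^2$, we put one dummy on a different edge of $C^*$ not on its interesting contour path, plus one dummy on the leg of $C^*$ that is guaranteed by \cref{app-le:cubic-useAleg} to be an edge of all non-demanding cycles containing $C^*$ and sharing edges with the interesting contour path. This accounts for the three dummies per expensive cycle ($1+2 = 2+|D_{\exp}(G_C)|/|D_{\exp}(G_C)|$ contributes the extra $|D_{\exp}(G_C)|$ addend). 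For a non-expensive demanding cycle whose interesting contour path has at least two edges, we place one dummy on such a path in each of $G_C^1$ and $G_C^2$ (on different edges), contributing $2$ to the count. The delicate case is a non-expensive demanding cycle $C^{**}$ whose interesting contour path is a single edge: here we must exploit the pointers $\alpha(\cdot),\beta(\cdot)$ of critical ancestors in $T^+(C)$ to coordinate dummy placement so that the single dummy placed on that edge, together with a second dummy on the cycle in the other copy, suffices to also satisfy every critical non-demanding cycle that contains $C^{**}$. Property \ref{enum:P2} of $T^+(C)$ guarantees that this coordination propagates consistently up the tree and never forces us to pay more than two dummies per such cycle.

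To establish property (ii), observe that the exposed contour path $P$ of $C$ is interesting by \cref{app-le:cubic_interestingexternal}, so every interesting contour path of a descendant 3-legged cycle that lies along $P$ is itself incident to $f_{\ext}$. Generically we would place one dummy in each of the two copies per such interesting contour path along $P$, amounting to $2t$ dummies incident to $f_{\ext}$. The saving comes from short cycles: for each $C^\circ \in D_{\short}(G_C)$, the single-edge interesting contour path on $P$ receives a dummy in only one of the two copies; in the other copy, by the very definition of a short cycle, $C^\circ$ has a second interesting contour path not incident to $f_{\ext}$, on which the required dummy is placed instead. Thus $|D_{\short}(G_C)|$ dummies are shaved off $P$, yielding exactly $2t-|D_{\short}(G_C)|$ dummies of $P$ incident to $f_{\ext}$, as claimed.

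The running time is $O(n_C)$: the tree $T^+(C)$ has size $O(n_C)$, each node triggers $O(1)$ work (reading $\lambda,\alpha,\beta$ and placing $O(1)$ dummies on $O(1)$ explicitly identified edges that can be located via the leg-pointers used in the proof of \cref{app-le:cubic_TpC}), and \cref{app-le:cubic-useAleg} delivers the required leg in $O(1)$ time. The principal obstacle I foresee is the bookkeeping for non-expensive demanding cycles whose interesting contour path is a single edge and that are contained in several nested critical cycles; this is exactly what the invariants \ref{enum:P1} and \ref{enum:P2} of $T^+(C)$ were designed to manage, so the verification reduces to a careful induction on the depth at which such cycles appear in $T^+(C)$.
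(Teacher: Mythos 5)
Your proposal follows the same overall route as the paper (compute $T^+(C)$ via \cref{app-le:cubic_TpC}, handle expensive cycles with one dummy in $G^1_C$ and two in $G^2_C$ using \cref{app-le:cubic-useAleg}, two dummies for the other demanding cycles, and save one dummy on $P$ per short cycle), but it has two genuine gaps. First, the case you yourself flag as delicate --- non-expensive demanding cycles whose only interesting contour path is a single edge, nested inside several critical cycles --- is precisely where the real work of the proof lies, and "Property \ref{enum:P2} guarantees that this coordination propagates consistently" is not an argument. The paper carries this out by maintaining three explicit invariants during a top-down traversal of the critical cycles (never subdivide both $e_x$ and $e_y$ in the same copy; never subdivide $e_x$ or $e_y$ in a copy where one of them was already subdivided; always leave every critical cycle containing $e_x$ or $e_y$ good in both copies) and then proves by contradiction, using \ref{enum:P2} together with the non-intersection of cycles in $T(C)$, that the two pointers of a critical cycle can always be split across the two copies. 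Without this, you cannot conclude that the total count is $2|D(G_C)|+|D_{\exp}(G_C)|$ rather than larger.

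Second, property (i) asserts that the collection has \emph{no bad cycle}, and bad cycles are not only 3-legged: the external cycle of $G_C$ is $0$-legged and needs four dummies, and the very act of subdividing edges creates new (possibly degenerate) $2$-legged cycles that must also be checked (cf.\ \cref{app-fig:2leggedbydummies} and \cref{app-le:2legged}). Your proof addresses only the 3-legged cycles. The paper devotes an entire second part to this: it observes that the three vertices of $G_C$ incident to the legs of $C$ act as three subdivision vertices on the external face of $G_C$, that they cannot all lie on one edge because $G$ is triconnected, that the interior placement never touches them (so \cref{app-le:2legged} rules out bad 2-legged cycles), and that intersecting demanding cycles cannot occur inside $G_C$ by a surgery argument invoking \cref{app-le:cubic-only_intersecting}. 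You need some version of this to legitimately claim "no bad cycle." A smaller issue: in your argument for (ii) you implicitly assume that every dummy serving an interesting contour path along $P$ lands on $P$ in both copies, but for an expensive cycle the second copy's dummies go on a leg and a non-interesting edge, so you must separately argue (as the paper does via \cref{app-le:cubic-useAleg} and \cref{app-le:cubic_interestingexternal}) that these are still incident to $f_{\ext}$ and accounted for correctly in the count $2t-|D_{\short}(G_C)|$.
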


\begin{proof}
We compute the enriched genealogical tree $T^+(C)$ in $O(n_C)$ time by \cref{app-le:cubic_TpC}. We also compute $D(G_C)$, $D_{\exp}(G_C)$, and their interesting contour paths by means of~\cref{app-le:cubic-demanding-comp} and \cref{app-le:cubic-setscomp}. We consider two copies of $G_C$, namely $G^1_C$ and as $G^2_C$ and place dummies along their edges (i.e., we compute $\mathcal{C}$ such that $|\mathcal{C}|=2$).
The proof consists of two parts: In Part 1 we describe an  $O(n_C)$-time procedure that subdivides edges of $G^1_C$ and $G^2_C$ with $2|D(G_C)|+|D_{\exp}(G_C)|$ dummies so that neither $G^1_C$ nor $G^2_C$ has a bad $3$-legged cycle. In this part we prove both (i) and (ii). In Part 2 we prove that at the end of this subdivision procedure neither $G^1_C$ nor $G^2_C$ contains a bad cycle (observe that in this case a bad cycle could be the external cycle or a 2-legged cycle generated by the placement of the dummy vertices, see for example $C_1$ and $C_2$ in \cref{app-fig:2leggedbydummies}). Finally, we discuss the computational time.

\smallskip \noindent {\sffamily\bfseries Part 1.} We first focus on $(i)$.
The strategy is counting the number of dummies that we place in both $G^1_C$ and $G^2_C$ and that exceed the number $2|D(G_C)|$ (for example, if analyzing a cycle in $D(G_C)$ we place 3 dummies overall in the collection, we are exceeding 1).  Observe first that $G_C$ has no intersecting cycles by \cref{app-le:cubic-TCcomp}. Hence, we can focus on pairwise non-intersecting cycles. Consider each demanding cycle in $T^+(C)$ with at least two edges belonging to an interesting contour path. We consider two edges of it following this rule:

\smallskip
\noindent
\underline{Non-expensive edge-selection rule}: If there exists two of such edges are in the external face, we consider these two. If there exists just one (the cycle is short), consider this one. Otherwise, we consider any two edges of the cycle belonging to an interesting contour path. 

\smallskip
Given the two edges selected with the non-expensive edge-selection rule, we subdivide one of them in $G^1_C$ and the other one in $G^2_C$. We now have that every non-expensive demanding cycle of $G$ is not a bad cycle in $G^1_C$ and it is not a bad cycle in $G^2_C$. Consider now the demanding cycles  having only one interesting contour path which consists of exactly one edge and the critical cycles sharing edges with them (some of these demanding cycles are expensive). In order to count the number of dummies exceeding $2|D(G_C)|$, we perform a top-down traversal of $T^+(C)$ where we consider only the critical cycles (i.e., if there are $k$ critical cycles, our counting is executed in $k$ steps). Recall that we can test if a cycle $C'$ is critical if $\lambda(C')=\mathit{true}$.  We set $count_0=0$. At step $i$ ($i\in [i,k]$), let $count_i$ be a counter that we increment by $1$ with respect to $count_{i-1}$ whenever we place a dummy in the critical cycle that is not part of the interesting contour path of a demanding cycle. If only interesting contour paths are subdivided we obtain $count_k=0$. 
We show how to place dummies such that $count_k=|D_{\exp}(G_C)|$ and that there is no bad 3-legged cycle. 
 
When we consider a critical cycle $C'$, let $e_x$ be the edge forming the interesting contour path of $\alpha(C')$ contained in $C'$. 
If $\beta(C')\not = \emptyset$, then we define $e_y$ similarly. During the procedure, we implicitly assume that if we subdivide $e_x$ in $G^a_C$ ($a\in \{1,2\}$), we subdivide another edge of a contour path of $\alpha(C')$ that is not interesting in $G^b_C$ ($b\in \{1,2\}\setminus a$), so that after this placement $\alpha(C')$ is not a bad cycle in both $G^1_C$ and $G^2_C$ (in some case we explicitly say how to perform also this subdivision, if this is helpful for the description of the algorithm). We do that also for $\beta(C')$ when it is not equal to $\emptyset$. During the procedure, we maintain three invariants:
\begin{enumerate}[({I}1)] 
\item \label{enum:invar-A} 
If $\beta(C')\not = \emptyset$, we do not subdivide both $e_x$ and $e_y$ in the same graph between $G^1_C$ and $G^2_C$;
\item \label{enum:invar-B} 
If $e_z$ ($z\in \{x,y\}$) was subdivided in a previous step in $G^a_C$ ($a\in \{1,2\}$), we do not subdivide $e_x$ or $e_y$ in $G^a_C$;
\item \label{enum:invar-C}
Whenever we consider $e_x$ or $e_y$ for the first time, we perform subdivision operations so that none of the critical cycles containing $e_x$ or $e_y$ is bad in both $G^1_C$ and $G^2_C$.
\end{enumerate}
Consider the case $i=1$. If $\beta(C')\not = \emptyset$, we subdivide $e_x$ in $G^1_C$ and $e_y$ in $G^2_C$ and we set $count_1=count_0=0$. Hence, \ref{enum:invar-A} holds. Otherwise, we subdivide $e_x$ in $G^1_C$, while for $G^2_C$ we consider the leg $l$ of $\alpha(C')$ whose existence is stated in \cref{app-le:cubic-useAleg} and any edge $e_{\tilde x}$ of $\alpha(C')$ different from $e_x$. We subdivide both of them and, by \cref{app-le:cubic-useAleg}, none of the non-demanding cycles sharing $e_x$ with $\alpha(C')$ are bad in both $G^1_C$ and $G^2_C$. By the same lemma invariant  \ref{enum:invar-C} holds. In this case $count_1=count_0+1=1$, as we placed two dummies instead of one considering $\alpha(C')$ in $G^2_C$ (one dummy in $e_{\tilde x}$ and one in $l$). Observe that in this case \ref{enum:invar-B} trivially holds as this is the first step.

Consider now the case $i>1$ and suppose that invariants~\ref{enum:invar-A}, \ref{enum:invar-B}, and \ref{enum:invar-C} were always preserved during the previous steps. If $\beta(C')= \emptyset$, we proceed as in the case $i=1$ with the difference that if $\alpha(C')$ was previously subdivided, we do nothing, so that \ref{enum:invar-B} holds. The other two invariants hold with the same argument of the case $i=1$. Suppose $\beta(C')\not = \emptyset$. We first prove that $e_x$ and $e_y$ are not both subdivided in one between $G^1_C$ and $G^2_C$. We prove it does not happen in $G^1_C$. The same proof can be applied for $G^2_C$. 

Suppose by contradiction that $e_x$ and $e_y$ are both subdivided in one between $G^1_C$ and $G^2_C$. By \ref{enum:invar-A}, we have that we did not subdivide them in the same graph in step $j\in[1,j-1]$. Hence, they were subdivided in two different steps, while analyzing two different cycles $C_p$ and $C_{p'}$. As they share at least one edge and by \cref{app-le:cubic-TCcomp}, either $C_p$ contains $C_{p'}$ or vice versa. Suppose $C_{p'}$ contains $C_p$. By \ref{enum:P2}, if $C_{p'}$ was associated to $e_x$ or $e_y$, the same holds for $C_p$. By \ref{enum:invar-B}, if one between $e_x$ and $e_y$ was previously subdivided in $G^1_C$ by $C_{p'}$, $C_p$ did not subdivide the other one in $G^1_C$. A contradiction. Hence, $e_x$ and $e_y$ are not both subdivided in one between $G^1_C$ and $G^2_C$.

Since $e_x$ and $e_y$ are not subdivided in the same graph between $G^1_C$ and $G^2_C$, we can subdivide them so that one is subdivided in $G^1_C$ and one in $G^2_C$. (For example, if $e_x$ was subdivided in $G^2_C$, we subdivide $e_y$ in $G^1_C$.) Observe that with this strategy, \ref{enum:invar-A} and \ref{enum:invar-B} hold. Concerning \ref{enum:invar-C}, it holds as we subdivided an interesting contour path sharing edges with $C'$ in both $G^1_C$ and $G^2_C$. We set $count_{i}=count_{i-1}$, as we subdivided each demanding once in both $G^1_C$ and $G^2_C$. This concludes the description of the placement of the dummies.

After this procedure, $G^1_C$ and $G^2_C$ have no bad 3-legged cycle as, for every demanding cycle, we placed a dummy either in one of its interesting contour paths or in the edge individuated by \cref{app-le:cubic-useAleg} in both $G^1_C$ and $G^2_C$. We placed $2|D(G_C)|+count_k$ dummies.
We increment $count_i$ with respect to $count_{i-1}$ ($i\in [1,k]$) only when $\beta(C')=\emptyset$. By \ref{enum:P1} of the enriched genealogical tree $T^+(C)$, we placed exactly $2|D(G_C)|+|D_{\exp}(G_C)|$ subdivision vertices.

We now show that the placement satisfies (ii). We consider the cycles that are contained in $D_{\short}(G_C)$. Observe that, each $C^*\in D_{\short}(G_C)$ has at last two edges belonging to interesting contour paths. Also, only one edge $e^*$ lies along $P$. Hence, according to the non-expensive edge-selection rule, we subdivided such edge $e^*$ in exactly one between $G^1_C$ or $G^2_C$.
We now have that every cycle in $D_{\short}(G_C)$ has exactly one dummy contained in $P$ in the collection. Hence, at most $2t-|D_{\short}(G_C)|$ dummies are incident to $f_{\ext}$. We show now that exactly such number of dummies are incident to $f_{\ext}$. Namely, it remains to show that the other exposed interesting contour paths of non-short demanding cycles always imply a subdivision vertex on $f_{\ext}$. Consider critical cycles: in this case, in each one of $G^1_C$ and $G^2_C$, we either subdivided an exposed contour path contained in $P$ or (if $C'$ is expensive) we subdivided one of its legs incident to $f_{ext}$ following the property of \cref{app-le:cubic-useAleg}. For the demanding cycles with two edges contained in an interesting contour path that are not short and with an exposed contour path, we subdivided an edge incident to $f_{\ext}$ in both $G^1_C$ and $G^2_C$ by the non-expensive edge-selection rule. Hence, the collection has exactly $2t-|D_{\short}(G_C)|$ edges placed on the exposed contour path $P$ of $C$.

\smallskip \noindent
{\sffamily\bfseries Part 2.} Here we show that the dummies placed above are sufficient for $G^1_C$ and $G^2_C$ to not to have bad cycles, the remaining of which are the 0-legged cycle that is the external face of $G_C$ and the 2-legged cycles caused by the presence of dummies. We prove it for $G^1_C$, the proof for $G^2_C$ is the same.  Observe that $G_C$ is a plane triconnected cubic graph with three different edges of the external face that are subdivided. These vertices are the ones incident to the legs of $C$ in $G$ and here we can consider them as dummies; let $d_1$, $d_2$, and $d_3$ denote them. 

We first observe that $d_1$, $d_2$, and $d_3$ are not all in a same edge of $G_C$. Suppose they are, then $G$ contains the path $\{d_1,d_2,d_3\}$ (after a potential re-labeling); in this case, $d_1$ and $d_3$ are a separating pair of $G$.  

In the procedure of Part 1 of the proof we subdivided edges that were contained in interesting contour paths, i.e., contour paths of bad cycles of 3-legged cycles. These contour paths had no edge subdivided and consequently we did not subdivide any edge having one between $d_1$, $d_2$, or $d_3$. Hence, there are at least 3 edges subdivided in $G^1_C$ and by \cref{app-le:2legged}, there are no bad 2-legged cycles. 

We now prove that there are no pairwise intersecting demanding cycle contained in $G_C$. Suppose by contradiction that it is the case. Then, we slightly modify graph $G$ so that we have a contradiction to Condition (3) of \cref{app-le:cubic-only_intersecting}. We proceed as follow: we consider any edge not contained in $G_C$ and incident to $f_{\ext}$ and we subdivide it twice inserting two vertices $v_1$ and $v_2$; we subdivide any other edge incident to the same internal face with a vertex $v_3$; We add a vertex $v_4$ and the edges $v_1v_4$, $v_2v_4$, and $v_3v_4$ so that the graph we obtain is plane. The obtained graph is also plane triconnected cubic and it has a 3-legged cycle formed by three edges $\{v_1v_2, v_1v_4,v_2v_4\}$ incident to the external face of $G$ and not contained in $G_C$. Such 3-legged cycle is a demanding cycle, since no vertex is contained in its interior. Since $I(G)\not=\emptyset$ by hypothesis, we have a contradiction to Condition (3) of \cref{app-le:cubic-only_intersecting}. 

We have that after the subdivision of the 3-legged cycles of $G_C$ the external face had at least 4 dummies, which are $d_1$, $d_2$, $d_3$, and the one that had to be placed so that $C$ is not a bad cycle.  Hence, $G^1_C$ has no bad cycle. 

\smallskip
Concerning the computational time, for each critical cycle of $T^+(C)$ we only analyzed its children in $T^+(C)$ and the demanding cycles pointed by them for the critical ones. Hence, overall the procedure takes $O(n_C)$ time.
\end{proof}

We are now ready to prove the main theorem of this section.

\minbendcubic*
\label{app-th:minbend-cubic*}

\begin{proof}
We prove that we can compute in $O(n)$ time a collection with at most $q=2|D(G)|+|D_{\exp}(G)|+8-\min\{8, 2|D_{\ext}(G)|-|D_{\short}(G)|\} $ bends. Our strategy is first creating two copies of $G$, $G^1$ and $G^2$, and then placing on them overall exactly $q$ dummies. We prove that none of the two graphs has a bad cycle; then, by \cref{app-le:cubic-badcycles} and since $q\le \tbn(G)$ by \cref{app-le:cubic-lower3}, this directly gives us an unbent collection with $\tbn(G)$ bends. Let $f_{\ext}$ be the external face of $G$. We first prove the theorem for a simple case, where there is no 3-legged cycle with edges incident to $f_{\ext}$ (hence, in particular, $D_{\ext}(G)=\emptyset$) and $f_{\ext}$ is incident to at least four edges. If these conditions hold, we say that $G$ is \emph{nice}. 

\smallskip \noindent
{\sffamily\bfseries G is nice.} Let $G^1$ and $G^2$ be two copies of $G$. Observe that in this case, since $G$ is nice, $D_{\ext}(G)=\emptyset$ and consequently $2|D_{\ext}(G)|-|D_{\short}|=0$. Hence, by \cref{app-le:cubic-lower3}, $\tbn\ge q=2|D(G)|+|D_{\exp}(G)|+8$. We show how to place $q$ dummies overall so that $G^1$ and $G^2$ have no bad cycle.

We compute the maximal 3-legged cycles of $G$ by \cref{app-le:cubic-demanding-comp}.  Let $C$ be one of such maximal 3-legged cycles.  
Consider the subgraphs $G^1_C$ and $G^2_C$ of $G^1$ and $G^2$, respectively. 
We place dummies along the edges of $G^1_C$ and $G^2_C$ by the algorithm whose existence is stated by \cref{app-le:cubic-unbentT}. We perform this procedure for every maximal 3-legged cycle. We have that since $G$ is nice and by \cref{app-le:cubic-only_intersecting}(3), the maximal 3-legged cycles of $G$ are disjoint. Hence, by \cref{app-le:cubic-unbentT}, we placed exactly $q'=2|D(G)|+|D_{\exp}(G)|$ dummies along the edges of $G^1$ and of $G^2$.  Consider the external face of $G$ and select four edges $e_1^1,e_2^1,e_1^2,e_2^2$ incident to $f_{\ext}$ such that both the pair $e_1^1, e_2^1$ and the pair $e_1^2,e_2^2$ are independent pairs of edges (this is always possible, as $G$ is nice and consequently it has at least four edges incident to the external face). 
We place two dummies along $e_1^1$ and $e_2^1$ in $G^1$ and two dummies along $e_2^1$ and $e_2^2$ in $G^2$. The external cycle of both $G^1$ and $G^2$ is not a bad cycle, as we just placed four dummies in each of them. Also, there is no other bad cycle by \cref{app-le:2legged}.
 Observe that this procedure places $q=2|D(G)|+|D_{\exp}(G)|+8$ dummies.  The linear time complexity follows from \cref{app-le:cubic-badcycles}, by the fact that when $G$ is nice maximal 3-legged cycles of~$G$ are independent (as observed before), and by \cref{app-le:cubic-unbentT}.

\smallskip \noindent
{\sffamily\bfseries G is not nice.} We now consider the other cases where the situation is more difficult and where a more detailed discussion is required. We use as a subroutine of the next cases the placement of the case when $G$ is nice, in order to place dummy vertices in internal edges (i.e., edges not incident to $f_{\ext}$). We call such a placement strategy a \emph{nice-placement}. 
The next cases are the following.
\begin{enumerate}[(i)]
\item {\sffamily\bfseries Intersecting Case:} There are intersecting demanding cycles. 
\item {\sffamily\bfseries 3-Cycle Case:} The external face has 3 edges. 
\item {\sffamily\bfseries External-Demandings Case:} $D_{\ext}(G)\not=\emptyset$. 
\end{enumerate}

\begin{figure}[t]
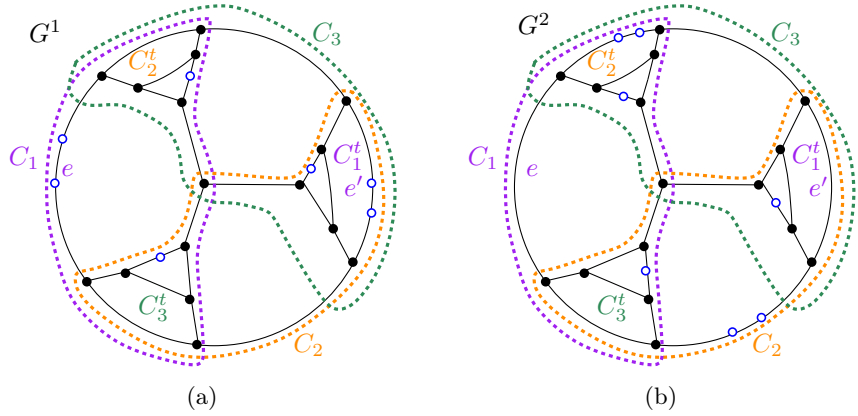

  \begin{subfigure}{0.45\textwidth}
    \centering
    \includegraphics[page=15]{cubic-cycles-new}
    \subcaption{}
     \label{app-fig:intersecting-a}
  \end{subfigure}
  \hfil
  \begin{subfigure}{0.45\textwidth}
    \centering
    \includegraphics[page=16]{cubic-cycles-new}
    \subcaption{}
     \label{app-fig:intersecting-b}
  \end{subfigure}
  \hfil
  \caption{Illustration for the proof of \cref{app-th:minbend-cubic}
    when there are intersecting cycles.}
  \label{app-fig:intersecting}
\end{figure}

We first observe that these cases are mutually exclusive. The fact that (i)
 and (iii) are mutually exclusive directly comes from \cref{app-le:cubic-only_intersecting}(3). Concerning (ii) and the other two, it suffices to prove that when the external cycle is a 3-cycle there is no 3-legged incident to $f_{\ext}$. Suppose by contradiction that $f_{\ext}$ is incident to 3 edges and that there is a 3-legged $C$ having $f_{\ext}$ as leg face. Then two of the edges incident to $f_{\ext}$ are legs of $C$ and at least one of them is an edge of $C$ forming one of its contour paths. Since these are 3 edges already, incident to $f_{\ext}$, it means that all the 3 legs of $C$ are incident to a same vertex. A contradiction (see the definition of $k$-legged cycle).
 
\smallskip
\noindent
{\sffamily\bfseries (i) Intersecting Case.} Refer to \cref{app-fig:intersecting}, where $C_1$, $C_2$ and $C_3$ are intersecting demanding cycles and $C_1^t$, $C_2^t$, and $C_3^t$ are their respective twins.  We compute all the intersecting demanding cycles by \cref{app-le:cubic-demanding-comp}, computing all the demanding and subtracting from this set the demandings in $D(G)$. We select two intersecting demanding cycles $C_1$ and $C_2$ and the relative twins $C_1^t$ and $C_2^t$. We then consider $C_1$ and $C_1^t$, we select two edges $e\in C_1\cap f_{\ext}$ and $e'\in C_1^t\cap f_{\ext}$. We place in $G^1$ two dummies in $e$ and two dummies in $e'$. By \cref{app-le:cubic-only_intersecting}(2) there is no 3-legged bad cycle that is intersecting in $G^1$. Also the external face is now not a bad cycle in $G^1$ and by \cref{app-le:2legged} there is no bad 2-legged cycle. We do the same considering $C_2$, $C_2^t$, and $G^2$. Concerning the internal 3-legged cycles, we handle them with the nice-placement.
By \cref{app-le:cubic-only_intersecting}(3) we have $D_{\ext}(G)=\emptyset$.  Since $D_{\ext}(G)=\emptyset$, as in the case when $G$ is nice, we have $\tbn(G) \ge q=2|D(G)|+|D_{\exp}(G)|+8$ by \cref{app-le:cubic-lower3}.  We have that with this placement strategy we placed $q'=2|D(G)|+|D_{\exp}(G)|$ internal dummies by \cref{app-le:cubic-unbentT} as in the case when $G$ is nice and, considering the $8$ we placed in the external face to handle the intersecting demanding cycles, we placed overall $q=q'+8=2|D(G)|+|D_{\exp}(G)|+8$. Since we have no bad cycle, as we handled the intersecting applying  \cref{app-le:cubic-only_intersecting}(2) and since the 3-legged are handled by the same procedure used when $G$ is nice, the lemma follows. Concerning the computational time, the fact that this procedure takes linear time comes from \cref{app-le:cubic-demanding-comp} and since the nice-placement takes linear time (as proved in the case when $G$ is a nice graph).

\begin{figure}[t]
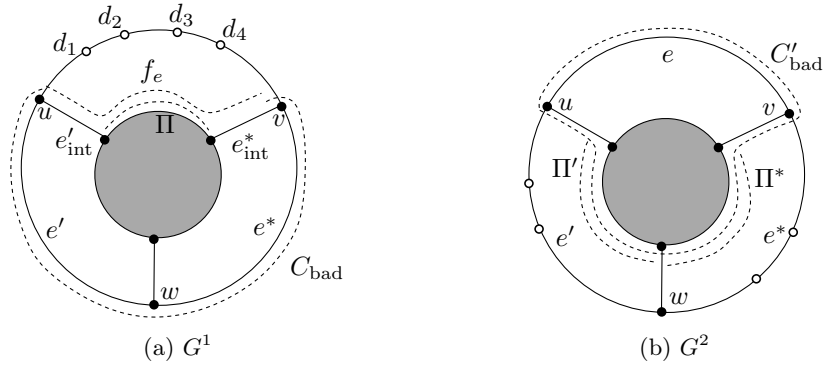

  \begin{subfigure}{0.45\textwidth}
    \centering
    \includegraphics[page=17]{cubic-cycles-new}
    \subcaption{$G^1$}
     \label{fig:cubic-collection-3cycle-a}
  \end{subfigure}
  \hfill
  \begin{subfigure}{0.45\textwidth}
    \centering
    \includegraphics[page=18]{cubic-cycles-new}
    \subcaption{$G^2$}
    \label{fig:cubic-collection-3cycle-b}
 \end{subfigure}
  \hfill
  \caption{Illustration for the proof of \cref{app-th:minbend-cubic} when the external face is a 3-cycle in the 1-2-placement case.}
  \label{fig:cubic-collection-3cycle}
\end{figure}

\smallskip
\noindent
{\sffamily\bfseries (ii) 3-Cycle Case.} As observed while showing that the three cases we are considering are mutually exclusive, we have that in this case there is no 3-legged cycle having $f_{\ext}$ as leg face and consequently no intersecting cycle and $D_{\ext}(G)=\emptyset$.   We now have 3 edges in the external face and 4 dummies to place so that it is not a bad cycle, but here, the situation changes with respect to the case when $G$ is nice, since we cannot directly use \cref{app-le:2legged}. In fact, since the external face has 3 edges we cannot place the 4 dummies in 3 different edges or in at least 2 edges with 2 dummies each that are independent, as described by \cref{app-le:2legged}. Hence, here we do not use directly \cref{app-le:cubic-lower3}, but we rely on the fact that the nice-placement places dummies optimally in $G^1\setminus e$ and $G^2\setminus e$ and on the fact that the placement of dummies on edges of $f_{\ext}$ has bounded number of possibilities. Namely, there are three possibilities: 
\begin{enumerate}[(a)]
\item The 1-2-placement: The four dummies in the external face of $G^1$ are in a same edge $e$, while in $G^2$ they are in two edges, $e'$ and $e^*$; 
\item The 2-1-placement: Symmetric to the 1-2-placement;
\item The 1-1-placement: The four dummies in the external face of $G_1$ and $G_2$  are in a same edge $e$ and $e'$, respectively. 
\end{enumerate}
We first show how to handle the 1-2-placement. Consider first $G^1$. Refer to \cref{fig:cubic-collection-3cycle-a}. We subdivide $e=uv$ with four dummies $d_1,...,d_4$ so that $d_1$ is adjacent to $u$ and $d_4$ is adjacent to $v$. Let $C_{\bad}$ be the external cycle of $G^1\setminus e$. We have that $C_{\bad}$ is a bad 2-legged cycle and we need to place two dummies on it. We are going to show how to subdivide $C_{\bad}$ optimally. Observe that $C_{\bad}$ has two legs, which are $(u,d_1)$ and $(v,d_4)$. Suppose that $e'$ and $e^*$, which are the two edges different from $e$ incident to $f_{\ext}$, are incident to $u$ and $v$ respectively. Let $e'_{\mathrm{int}}$ and $e^*_{\mathrm{int}}$ be the other edges incident to $u$ and $v$, which are internal edges of $G$ . Let $\Pi$ be a path such that $C_{\bad}=e'\cup e^*\cup e^*_{\mathrm{int}} \cup e'_{\mathrm{int}} \cup \Pi$ and let $f_e$ be the face incident to $e$ and internal in $G$. 

We now use the nice-placement in $G\setminus e$. We just placed $q'=2|D(G)|+|D_{\exp}(G)|$.
By \cref{app-le:cubic_interestingexternal} the interesting contour paths and, consequently, the legs of the demanding cycles incident to $f_e$ lie along $\Pi$. Hence, if such demanding cycles exists, nice-placement subdivided $\Pi$ already by \cref{app-le:cubic-unbentT}(2). If $C_{\bad}$ has 2 dummies in $\Pi$, we do nothing, and the cost is optimum as we did not add dummies with respect to the nice-placement. Otherwise, suppose that that $C_{\bad}$ does not have 2 dummies. 

If $\Pi$ has 1 dummy, we place $d$ in the same edge and now $C_{\bad}$ is not a bad cycle anymore, as it has two legs and the two dummies in $\Pi$ (one of which is the newly-placed~$d$). 
Otherwise, $\Pi$ has no dummy and consequently it is not incident to any demanding interesting contour path. 
It follows that there is no dummy in $\Pi$ both in $G^1$ and $G^2$ after the internal placement, and we can place $d$ in any edge of $\Pi$. 
Observe that in this case we place up to 2 dummies more with respect to the nice-placement. 
Since in this case we are forced to place 4 dummies in $e$ and since $\Pi$ does not have an interesting contour path and by~$C_{\bad}$, we placed the minimum number of subdivision vertices.

We now consider $G^2$. Refer to \cref{fig:cubic-collection-3cycle-b}. Let $w$ be the vertex of $G$ incident to $e'$ and $e^*$. We have to consider the degenerate 3-legged cycle $C_{\bad}'$ of $G$ whose three legs are incident to $w$, since two of its legs are incident to dummies. In this case we have to place just one dummy $d'$. As before, we can observe that either the nice-placement placed such dummy in one of its contour paths $\Pi'$ and $\Pi^*$ (sharing a face with $e'$ and $e^*$, respectively, and defined similarly as $\Pi$) previously, or we can place it in any edge of such contour paths. We have that we did not  place a dummy in a same edge in $G^1$ and $G^2$. In fact, $\Pi'$ and $\Pi^*$ are disjoint from $\Pi$.

The 2-1-placement can be handled symmetrically, while the 1-1-placement consider the placement we did in the 1-2-placement on $G^1$ symmetrically in $G^2$. Again, the contour paths where we eventually place two more dummies are different. As we observed before, we have 3 edges and for dummies to place and now we showed how to handle optimally the three possible strategies. Each procedure takes linear time, since we only consider one bad cycle (either the 2-legged cycle $C_{bad}$ or the degenerate 3-legged cycle $C_{bad}'$). We try all the combinations of the placements and we chose the one with the minimum number of dummies.

\smallskip
\noindent {\sffamily\bfseries (iii) External-Demanding Case:
  $D_{\ext}(G)\not = \emptyset$.} As there is no intersecting
cycle, since the cases we are considering are mutually exclusive, we
have that the maximal 3-legged cycles are independent. Hence, we can
use the nice-placement so that now every 3-legged cycle is not a bad
cycle. We placed exactly $q'=2|D(G)|+|D_{\exp}(G)|$ dummies. We have
now to see what happens in the external face $f_{\ext}$.  By \cref{app-le:cubic-unbentT}(2) we placed a dummy in each of $G^1$ $G^2$ for every cycle in $D_{\ext}(G)\setminus D_{\short}(G)$ and just one dummy (either in $G^1$ or $G^2$) for the cycles in $D_{\short}(G)$. The remaining dummies so that the external face has at least four vertices can be placed in other edges in $G^1$ and $G^2$. We have that by \cref{app-le:cubic-unbentT} and by \cref{app-le:cubic-lower3} we placed $\tbn(G)$ dummies. Also, by \cref{app-le:cubic-only_intersecting} and by  \cref{app-le:cubic-unbentT}  there is no bad cycle.
\end{proof}

Since a triconnected planar graph with $n$ vertices has $O(n)$ planar embeddings, an immediate consequence of~\cref{app-th:minbend-cubic} is the following.
\begin{corollary}
    Let $G$ be a planar triconnected cubic with $n$ vertices. There exists an $O(n^2)$-time algorithm that computes an unbent collection of $G$ with $\tbn(G)$ bends. Also, the collection has size two which is optimal.
\end{corollary}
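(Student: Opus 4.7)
The plan is to show that the lower bound $q(G)$ from Lemma~\ref{app-le:cubic-lower3} is achieved by an explicit construction: I will produce two copies $G^1, G^2$ of $G$, subdivide their edges with a total of exactly $q(G)$ dummies, and argue that neither copy has any bad cycle. Lemma~\ref{app-le:cubic-badcycles} then converts each subdivided copy into a bend-free orthogonal representation in linear time, and reinterpreting the dummies as bends yields an unbent collection of $G$ with $q(G)$ bends, matching $\tbn(G)$. Optimality of size $2$ is immediate since the external $0$-legged face alone requires four dummies, so no single drawing can be bend-free.

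As a preprocessing phase I would compute, in linear time, the maximal $3$-legged cycles (Lemma~\ref{app-le:cubic-demanding-comp}), the set $D(G)$ together with their interesting contour paths, and the subsets $D_{\exp}(G)$ and $D_{\short}(G)$ via Lemma~\ref{app-le:cubic-setscomp}. For each maximal $3$-legged cycle $C$, I build the enriched genealogical tree $T^+(C)$ using Lemma~\ref{app-le:cubic_TpC}, which will drive the placement of dummies inside $G_C$. The placement proceeds in three stages. First, for each $C \in D_{\exp}(G)$ with unique interesting contour path consisting of a single edge $e$, I subdivide $e$ in $G^1$ and place two dummies in $G^2$: one on another edge of $C$ and one on the specific leg guaranteed by Lemma~\ref{app-le:cubic-useAleg}; this makes $C$ and every non-demanding cycle meeting $C$ good in both copies, at a total cost of three dummies per expensive cycle. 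Second, for each non-expensive demanding cycle with at least two edges on interesting contour paths, I place one dummy on such an edge in each of $G^1, G^2$, contributing two dummies per cycle.

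The main obstacle is the third stage: handling the non-expensive demanding cycles whose unique interesting contour path is a single edge, together with the critical non-demanding cycles that share edges with them. Here a naive placement can either leave a critical cycle bad or force the same edge to be subdivided in both copies, inflating the count. I would perform a top-down traversal of $T^+(C)$, and for each critical cycle $C'$ use the pointers $\alpha(C'), \beta(C')$ to pick which associated demanding cycle's interesting contour-path edge to subdivide in $G^1$ versus $G^2$, maintaining the invariants: (I1) if $\beta(C')\neq\emptyset$, the edges $e_x, e_y$ associated to $\alpha(C'), \beta(C')$ are never subdivided in the same copy; (I2) an edge already subdivided in $G^a$ at a previous step is not subdivided there again; (I3) after processing $C'$, it and all critical cycles sharing edges with it are good in both copies. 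Property~(P2) of $T^+(C)$ is precisely what guarantees (I1) extends consistently through the tree, so every demanding cycle costs exactly two dummies overall except for the expensive ones, contributing $2|D(G)|+|D_{\exp}(G)|$ in total.

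Finally I handle the external face. By Lemma~\ref{app-le:cubic-unbentT}(ii), the above placement already deposits on an exposed contour path $P$ of a $C\in D_{\ext}(G)$ exactly $2t - |D_{\short}(G_C)|$ dummies incident to $f_{\ext}$, where $t$ counts the interesting contour paths along $P$. Summing over $D_{\ext}(G)$ shows that $\min\{8, 2|D_{\ext}(G)|-|D_{\short}(G)|\}$ of the four-per-copy dummies required to make $f_{\ext}$ good are already present. I would then add $8 - \min\{8, 2|D_{\ext}(G)|-|D_{\short}(G)|\}$ extra dummies on $f_{\ext}$, distributing them (two per copy if needed) on independent edges of $f_{\ext}$ so that Lemma~\ref{app-le:2legged} rules out bad degenerate and $2$-legged cycles. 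Three degenerate boundary cases need separate treatment: when there are intersecting demanding cycles (use Lemma~\ref{app-le:cubic-only_intersecting}(2) to place four external dummies splitting the two intersecting twins), when $f_{\ext}$ has only three edges (enumerate the $1$-$1$, $1$-$2$, $2$-$1$ placements on $f_{\ext}$ and exploit Lemma~\ref{app-le:cubic-unbentT}(ii) on the adjacent contour paths), and the generic ``nice'' case already covered above. Adding up, the total number of dummies is exactly $q(G)$, and every $k$-legged cycle for $k\le 3$ is good in both copies, so Lemma~\ref{app-le:cubic-badcycles} completes the construction in linear time.
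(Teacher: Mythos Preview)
Your proposal essentially reproduces the proof of Theorem~\ref{app-th:minbend-cubic}, which handles a \emph{plane} triconnected cubic graph (fixed embedding and external face) in linear time. But the corollary concerns a \emph{planar} triconnected $3$-graph: the embedding is not fixed, and $\tbn(G)$ is implicitly the minimum over all plane embeddings. This is precisely why the claimed running time is $O(n^2)$ rather than $O(n)$; your write-up neither mentions varying the embedding nor accounts for the quadratic bound---you even conclude with ``completes the construction in linear time''.

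The paper's proof of the corollary is a one-line reduction: a triconnected planar graph has a unique combinatorial embedding up to the choice of external face, so there are only $O(n)$ plane embeddings. Running the linear-time algorithm of Theorem~\ref{app-th:minbend-cubic} on each of them and returning the best yields the $O(n^2)$ bound. Your detailed dummy-placement argument is not wrong in itself---it is, up to reorganisation, the content of Theorem~\ref{app-th:minbend-cubic}---but it targets the wrong statement and omits the only new ingredient the corollary actually requires.
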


\end{document}